\def\isarxivversion{1} %%% for icml submission version, we comment this line
\newtheorem{theorem}{Theorem}[section]
\newtheorem{lemma}[theorem]{Lemma}
\newtheorem{definition}[theorem]{Definition}
\newtheorem{corollary}[theorem]{Corollary}
\newtheorem{conjecture}[theorem]{Conjecture}
\newtheorem{hypothesis}[theorem]{Hypothesis}
\renewcommand{\tilde}{\widetilde}
\renewcommand{\hat}{\widehat}
\DeclareMathOperator{\poly}{poly}
\DeclareMathOperator{\R}{{\mathbb R}}
\DeclareMathOperator*{\E}{{\mathbb{E}}}
\definecolor{b2}{RGB}{51,153,255}
\definecolor{mygreen}{RGB}{80,180,0}
\definecolor{yl}{RGB}{255,80,0}
\newcommand{\mixup}{{\em Mixup}}
\newcommand{\instahide}{{\em InstaHide}}
\newcommand{\tabincell}[2]{\begin{tabular}{@{}#1@{}}#2\end{tabular}}
\begin{document}

\ifdefined\isarxivversion

\title{{\instahide}: Instance-hiding Schemes for Private Distributed Learning\thanks{A preliminary version of this paper appeared in the Proceedings of the 37th International Conference on Machine Learning (ICML 2020).}}

\date{}

\author{
Yangsibo Huang\thanks{\texttt{yangsibo@princeton.edu}. Princeton University.}
\and
Zhao Song\thanks{\texttt{zhaos@princeton.edu}. Princeton University and Institute for Advanced Study.}
\and
Kai Li\thanks{\texttt{li@cs.princeton.edu}. Princeton University.}
\and
Sanjeev Arora\thanks{\texttt{arora@cs.princeton.edu}. Princeton University and Institute for Advanced Study.}
}

\else

\icmltitlerunning{{\it InstaHide}: Instance-hiding Schemes for Private Distributed Learning}

\twocolumn[
\icmltitle{{\instahide}: Instance-hiding Schemes for Private Distributed Learning}

\icmlkeywords{Machine Learning, ICML}

\vskip 0.3in
]

\fi

\ifdefined\isarxivversion
\begin{titlepage}
\maketitle
\begin{abstract}
How can multiple distributed entities collaboratively train a shared deep net on their private data while preserving  privacy? This paper introduces {\em InstaHide}, a simple encryption of training images, which can be plugged into  existing distributed deep learning pipelines.  The encryption is efficient and applying it during training has minor effect on test accuracy. 

{\em InstaHide} encrypts each training image with a ``one-time secret key'' which consists of mixing a number of randomly chosen images and applying a random pixel-wise mask.  Other contributions of this paper include:
(a) Using a large public dataset (e.g. ImageNet) for mixing during its encryption, which improves security. 
(b) Experimental results to show effectiveness in preserving privacy against known attacks with only minor effects on accuracy. 
(c) Theoretical analysis showing that successfully attacking privacy requires attackers to solve a difficult computational problem. 
(d) Demonstrating that use of the pixel-wise mask is important for security, since {\em Mixup} alone  is  shown to be insecure to some some efficient attacks.  
(e) Release of a challenge dataset\footnote{ \href{https://github.com/Hazelsuko07/InstaHide_Challenge}{https://github.com/Hazelsuko07/InstaHide\_Challenge}.} to encourage new attacks. Our code is available at \href{https://github.com/Hazelsuko07/InstaHide}{https://github.com/Hazelsuko07/InstaHide}. 
\end{abstract}
\thispagestyle{empty}
\end{titlepage}

\else

\begin{abstract}

\end{abstract}

%%%% Zhao: below six lines are making sure that we can have page numbers.
\iffalse
\makeatletter
\let\ps@oldempty\ps@empty % save default definition of \ps@empty
\renewcommand\ps@empty\ps@plain
\makeatother 
\pagestyle{plain}
\setcounter{page}{1}
\fi
%%%% Zhao : Done :)

\fi

\section{Introduction}
\label{sec:intro}

In many applications, multiple parties or clients with sensitive data want to collaboratively train a neural network. For instance, hospitals may wish to train a model on their patient data. However, aggregating data to a central server may violate regulations such as Health Insurance Portability and Accountability Act (HIPAA)~\cite{act1996health} and General Data Protection Regulation (GDPR)~\cite{voigt2017eu}.

Federated learning~\cite{mcmahan2016communication, kmyrsb16} proposes letting participants train on their own data in a distributed fashion and share only model updates ---i.e., gradients---with the central server. The server aggregates these updates (typically by averaging) to improve a global model and then sends updates to participants.  This process runs iteratively until the global model converges. Merging information from individual data points into aggregated gradients intuitively preserves privacy to some degree. On top of that, it is possible to add noise to gradients in accordance with Differential Privacy (DP)~\cite{dkmmn06,dr14}, though  careful calculations are needed to compute the amount of noise to be added~\cite{acg+16,dpsgd19}. However, the privacy guarantee of DP only applies to the trained model (i.e., approved use of data) and does not apply to side-channel computations performed by curious/malicious parties who are privy to the communicated gradients. Recent work~\cite{zlh19} suggests that eavesdropping attackers can recover private inputs from shared model updates, even when DP was used. A more serious issue with DP is that meaningful guarantees involve adding so much noise that test accuracy reduces by over $20\%$ even on CIFAR-10~\cite{dpsgd19}.

Cryptographic methods such as {\em secure multiparty computation} of~\cite{yao82} and fully-homomorphic encryption~\cite{g09} can ensure privacy against arbitrary side-computations by adversary during training.   Unfortunately it is a challenge to use them in modern deep learning settings, owing to their high computational overheads and their needs for special setups (e.g finite field arithmetic, public-key infrastructure).

Here we introduce a new  method  {\em InstaHide},  inspired by a weaker cryptographic idea of {\em instance hiding} schemes~\cite{afk87}. 
We only apply it to image data in this paper and leave other data types (e.g., text) for future work. {\instahide} gives a way to transform input $x$ to a hidden/encrypted input $\tilde{x}$ in each epoch such that: (a) Training deep nets using the $\tilde{x}$'s instead of $x$'s gives nets almost as good in terms of final accuracy; (b) Known methods for recovering information about $x$ out of $\tilde{x}$ are computationally very expensive. In other words, $\tilde{x}$ effectively hides information contained in $x$ except for its label. 

{\instahide} encryption has two key components. The first is inspired by  {\em Mixup} data augmentation method~\cite{zcdl17}, which trains deep nets on composite images created via linear combination of pairs of images (viewed as vectors of pixel values). 
In {\instahide} the first step when encrypting image $x$ (see Figure~\ref{fig:main_fig}) is to take its linear combination with $k-1$ randomly chosen images from either the participant's private training set or from a large public dataset (e.g., ImageNet~\cite{imagenet09}).  The second step of {\instahide} involves applying a random pattern of sign flips on the pixel values of this composite image, yielding encrypted image $\tilde{x}$, which  can be used as-is in existing  deep learning frameworks. Note that the set of random images for mixing and the random sign flipped mask are used only once ---in other words, as a one-time key that is  never re-used for another encryption.

The idea of random sign flipping\footnote{\label{ft:sign_flip}Note that randomly flipping signs of coordinates in vector $x$ can be viewed alternatively as retaining only absolute value of each pixel in the mixed image. If a pixel value was $c$ in the original image, the mixup portion of our encryption amounts to scaling it by $\lambda$ and adding some value $\eta$ to it from the mixing images, and the random sign flip amounts to retaining $|\lambda c+\eta|$. Figure~\ref{fig:GAN_demask} gives an illustration.

We choose to take the viewpoint of random sign flips because this is more useful in extensions of {\instahide}, which are forthcoming.}
%---which is a one-time key, or {\em nonce}, never reused for another encryption during the protocol---
is inspired by  classic {\em Instance-Hiding} over finite field ${\sf GF}(2)$, which involves adding a random vector $r$  to an  input $x$. (See Appendix~\ref{sec:hide_app} for background.) Adding $1$ over ${\sf GF}(2)$ is analogous to a sign flip over $\mathbb{R}$. (Specifically, the groups $({\sf GF}(2), +)$ and $(\{\pm 1\}, \times)$ are isomorphic.) The use of a public dataset in {\instahide} plays a role reminiscent of {\em random oracle} in cryptographic schemes~\cite{canetti2004random} ---the larger this dataset, the better the conjectured security level (see Section~\ref{sec:privacy}). A large private dataset would suffice too for security, but then would require prior coordination/sharing among participants.
% to coordinate 

\begin{figure*}[t]
    \centering
    \includegraphics[width=0.99\textwidth]{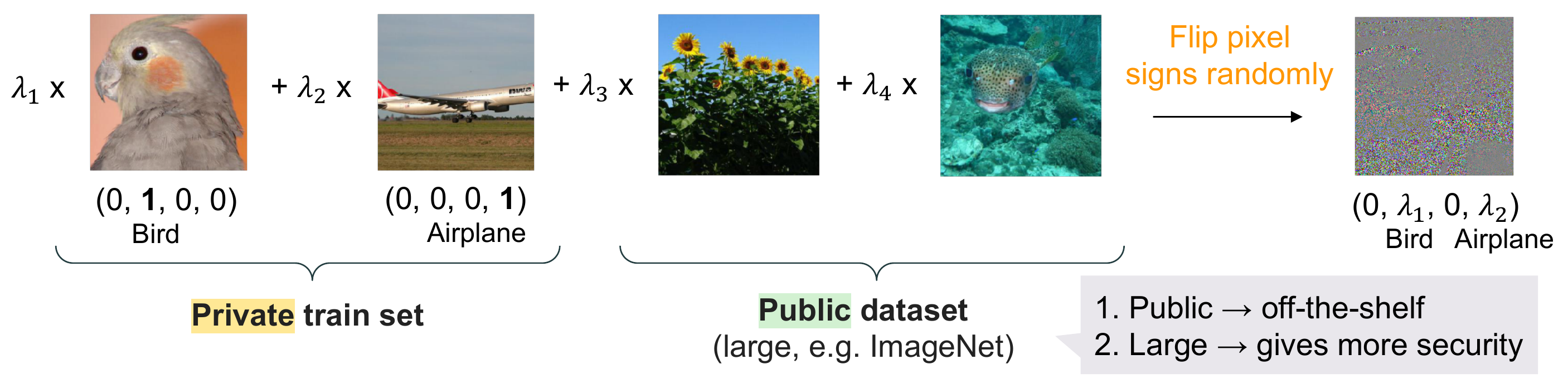}
    \caption{Applying {\instahide} ($k=4$) to the leftmost private image consists of mixing with another (private)  image randomly chosen from the training set and two images randomly chosen from a fixed large public dataset. This is  followed by a random sign-flipping mask on the composite image. To encrypt another image, different random choices will get used (``one-time key").}
    \label{fig:main_fig}
    
\end{figure*}

Experiments on MNIST, CIFAR-10, CIFAR-100 and ImageNet datasets (see Section~\ref{sec:exp}) suggest that {\instahide} is an effective approach to hide training images from attackers. It is much more effective at hiding images than {\mixup} alone and provides  better trade-off between privacy preservation and accuracy than DP. To enable further rigorous study of attacks, we release a challenge dataset of images encrypted using {\instahide}.

\paragraph{Enhanced functionality due to {\instahide}.}
As hinted above, {\instahide} plugs seamlessly into existing distributed learning frameworks such as federated learning: clients encrypt their inputs on the fly with {\instahide} and participate in training (without using DP). Depending upon the level of security needed in the application (see Section~\ref{sec:security_part1}), {\instahide} can also be used to present enhanced functionality that are unsafe in current distributed frameworks. For instance, in each epoch, computationally limited clients can encrypt each private input $x$ to $\tilde{x}$ and ship it to the central server for all subsequent computation. The server may randomize in the pooled data to create its own batches to deal with special learning situations when distributed data are not independent and identically distributed.

\paragraph{Security goal of {\instahide}.} 
%{\em InstaHide} is \textbf{not} intended to be a mission-critical encryption like RSA \cite{rivest1978method}; it is designed to give users a light-weight encryption method that allows them to use machine learning without giving eavesdroppers or servers access to their raw data. To the best of our knowledge, there is no other cost-effective alternative to {\em InstaHide} for this application.

The goal of {\em InstaHide} is to provide a light-weight encryption method to make it difficult for attackers to recover the training data in a large training dataset in a distributed learning setting, with minor reduction on data utility.  It is not 
designed to provide the level of security strength as encryption methods such as RSA \cite{rivest1978method}. It is an initial step towards exploring better privacy preservation while maintaining data utility. 

\paragraph{Rest of the Paper:} Section~\ref{sec:warmup} recaps {\mixup} and suggests it alone is not secure. Section~\ref{sec:instahide} presents two {\instahide} schemes, and Section~\ref{sec:privacy} analyzes their security\footnote{A recent attack \cite{carlini_attack} (see Section \ref{sec:potential_attacks} for details) suggests that the security analysis for {\instahide} in the current version may need some revisions. We will release an update with more extensive explanations.}. Section~\ref{sec:exp} shows experiments for {\instahide}'s efficiency, efficacy, and security. Section~\ref{sec:privacy_suggestion} provides suggestions for practical use, and Section~\ref{sec:challenge_dataset} describes the challenge dataset. We review related work in Section~\ref{sec:related}, discuss potential attacks in Section~\ref{sec:potential_attacks} and conclude in Section~\ref{sec:conclude}.

   %%%% Section 1. Introduction
\section{{\mixup} and Its Vulnerability}
\label{sec:warmup}

This section reviews {\mixup} method~\cite{zcdl17} for data augmentation in deep learning and shows ---using two plausible attacks--- that it alone does not assure privacy.  For ease of description, we consider a vision task with a private dataset $\mathcal{X} \subset \R^d$ of size $n$. Each image $x_i \in \mathcal{X}, i \in [n]$ is normalized such that $\sum_{j=1}^{d} x_{i,j} = 0$ and $\|x_i\|_2 = 1$.

See Algorithm \ref{alg:mixup}. Given an original dataset, in each epoch of the training the algorithm generates a {\mixup} dataset on the fly, by linearly combining $k$ random samples, as well as their labels (lines~\ref{line:mix_x}, \ref{line:mix_y}). {\mixup} suggests that training with mixed samples and mixed labels serves the purpose of data augmentation, and achieves better test accuracy on normal images.

We describe the algorithm as an operation on $k$ images, but  previous works mostly used $k = 2$.

\subsection{Attack on {\mixup} within a Private Dataset }
\label{sec:mixup_inside_attack}
 
We propose an attack to the {\mixup} method when a private image is mixed up more than once during training. 
In fact, in Algorithm~\ref{alg:mixup}, each image is used $kT$ times during training, where $k$ is the number of samples to mix, and $T$ is the number of training epochs.

\begin{algorithm}[t]\caption{{\mixup}~\cite{zcdl17}}
\small
    \begin{algorithmic}[1]
    \Procedure{\textsc{Mixup}}{$W,T,\mathcal{X},\mathcal{Y}$} 
        \State $W$: the weights of the deep neural network; $T$: number of epochs; ${\cal X} = \{ x_1, \cdots, x_n \}, {\cal Y} = \{ y_1, \cdots, y_n \}$: the original dataset.
        \State Initialize $W$ 
        \For{$t = 1 \to T$} 
            \State Generate $\pi_1 $ such that $\pi_1(i) = i$, $\forall i \in [n]$, and $k-1$ random permutations $\pi_2$, $\cdots$, $\pi_k : [n] \rightarrow [n]$ \Comment{$[n]$ denotes $\{1,2,\cdots,n\}$}
            \State Sample $\lambda_1, \cdots, \lambda_n \sim [0,1]^k $ uniformly at random, and for all $i\in [n]$ normalize $\lambda_i$ such that $\| \lambda_i \|_1 = 1$.
            \State $\tilde{D} \gets \emptyset$
                \For{$i = 1 \to n$} \Comment{Generate {\mixup} dataset}
                        \State $\tilde{x}_i^{\mathrm{mix}} \leftarrow \sum_{j=1}^{k} (\lambda_i)_j \cdot x_{ \pi_j( i )}$ \Comment{Mix images} \label{line:mix_x}
                        \State $\tilde{y}_i \leftarrow \sum_{j=1}^{k} (\lambda_i)_j \cdot y_{\pi_j( i )}$ \Comment{Mix labels} \label{line:mix_y}
                        \State $\tilde{D} \gets \tilde{D} \cup (\tilde{x}_i, \tilde{y}_i)$
                \EndFor
                \State Train $W$ using the {\mixup} dataset $\tilde{D}$
        \EndFor
\EndProcedure
\end{algorithmic}
\label{alg:mixup}
\end{algorithm}

Assume that pairs of images in $\mathcal{X}$ are fairly independent at the pixel level, so that the
inner product of a random pair of images (viewed as vectors)  has expectation $0$ (expectation can be nonzero but small).  We can think of each pixel is generated from some distribution with standard deviation $1/\sqrt{d}$ and describe attacks with this assumption. (Section~\ref{sec:exp} presents experiments showing that these attacks do work in practice.)

Suppose we have two {\mixup} images $\tilde{x}_1$ and $\tilde{x}_2$ which are derived from two subsets of private images, ${\mathcal{S}}_1, {\mathcal{S}}_2 \subset \mathcal{X}$ and $|{\mathcal{S}}_1| = |{\mathcal{S}}_2| = k$. If $\tilde{x}_1$ and $\tilde{x}_2$ contain different private images, namely ${\mathcal{S}}_1 \cap {\mathcal{S}}_2 = \emptyset$, then the expectation of $\tilde{x}_1 \cdot \tilde{x}_2$ is 0. However, if ${\mathcal{S}}_1 \cap {\mathcal{S}}_2 \neq \emptyset$, and $\tilde{x}_1$ and $\tilde{x}_2$ have coefficients $\lambda_1$ and $\lambda_2$ for the common image in these two sets, then the expectation of $\langle \tilde{x}_1 , \tilde{x}_2 \rangle$ is $\lambda_1 \lambda_2/k$, which means by simply checking the inner products between two $\tilde{x}$'s, the attacker can determine with high probability whether they are derived from the same image. Thus if the attacker finds multiple such pairs, they can average the $\tilde{x}$'s to start getting a  good estimate of $x$. (Note that the rest of images in the pairs are with high probability distinct and so average to  $0$.)

\subsection{Attack on {\mixup} Between a Private and a Public Dataset}
\label{sec:mixup_attack_public}

To defend against the previous attack, 
it seems that a possible method is to modify the {\mixup} method to mix a private image $x$ with $k-1$ images only once to get a single $\tilde{x}$, and use this $\tilde{x}$ as surrogate for $x$ in all epochs.
To ensure $x \in \mathcal{X}$ is used only once, it uses an additional public dataset $\mathcal{X'}$ (e.g. ImageNet). 
In other words, for every $x \in \mathcal{X}$, it produces $\tilde{x}$ by using {\mixup} between $x$ and $k-1$ random images from a large public dataset.  

This extension of the {\mixup} method seems secure at first glance, as naively one can imagine that to violate privacy, the adversary must do exhaustive search over $(k-1)$-tuples of 
public images to determine which were mixed into $\tilde{x}$, and try all possible $k$-tuples of coefficients, and then subtract the corresponding sum from $\tilde{x}$ to extract $x$. If this is true, it would suggest that extracting $x$ or any approximation to it requires ${N \choose {k-1}} \approx N^{k-1}$ work, where $N$ is the number of images in the public dataset. This work becomes infeasible even for $k=4$. 
However, we sketch an attack below that runs in $O(Nk)$ time. 

It again uses the above assumption about the pairwise independence property of a random image pair. Recall that standard deviation of pixels is $ 1 / \sqrt{d}$. Namely, to determine the images that went into the mixed sample $\tilde{x} = \lambda_1 x_1 + \sum_{i=2}^k \lambda_i x_i$, it suffices to go through each image $z$ in the dataset and examine the inner product $z\cdot \tilde{x}$. If $z$ is not one of the $x_i$'s then this inner product is 
of the order at most $\sqrt{k/d}$ (see part 1 of Theorem~\ref{thm:attack_in_X}), whereas if 
it is one of the $x_i$'s then it is of the order
at least $\frac{1}{k}(1 - \sqrt{k/d} )$ (see part 2 of Theorem~\ref{thm:attack_in_X}).
Thus if $k^3 \ll d$ (which is true if the number of pixels $d$ is a few thousand) then the inner product gives a strong signal whether $z$ is one of the $x_i$'s.
Once the correct $x_i$'s and their coefficients have been guessed, we obtain $x$ up to a linear scaling. Thus the above attack works with good probability and in time proportional to the size of the dataset. We provide results for this attack in Section~\ref{sec:exp}.

This attack of course requires that $x$ is being mixed in with images from a public dataset $\mathcal{X'}$. In the case that $\mathcal{X'}$ is private and diverse enough, it is conceivable that  {\mixup} is safe. (MNIST for example may not be diverse enough but ImageNet probably is.) We leave it as an open question. 

\subsection{Discussions}
\begin{table}[t]
\setlength{\tabcolsep}{1mm}
% \small
    \centering
    \begin{tabular}{ll}
        \toprule
         \textbf{Setting} & \textbf{Secure?}  \\ 
         \midrule
        $x$ is mixed in multiple $\tilde{x}$'s & No \\ 
        $x$ is mixed in a single $\tilde{x}$, with a public dataset & No  \\
        $x$ is mixed in a single $\tilde{x}$, with a private dataset & Maybe \\
        \bottomrule
      \end{tabular}
      \caption{Security of {\mixup} alone.}
    \label{tab:mixup_insecure}
    
\end{table}

Table~\ref{tab:mixup_insecure} summarizes the security of {\mixup} . Only  {\mixup} with single $\tilde{x}$ for each private $x$ and within a private dataset(s) has not been identified vulnerable to potential attacks. But this does not necessarily mean it is secure. In addition, the test accuracy in this case is not comparable to vanilla training; it incurs about $20\%$ accuracy loss with CIFAR-10 tasks.

\paragraph{Potentially insecure applications of {\mixup}.} Recently, a method called FaceMix ~\cite{facemix19} applies {\mixup} at the representation level (i.e. intermediate output of deep models) during inference. Given a representation function $h: \R^d \rightarrow \R^l$ and a secret $x\in \R^d$,  the paper assumed a threat model that the attacker is able to reconstruct $x$ given $h(x)$. Therefore, FaceMix proposed to protect the privacy of $x$ by generating $\tilde{h}_1(x)= \lambda_1 h(x) + \sum_{i=2}^k \lambda_i h(x_i)$, and run inference on $\tilde{h}(x)$. A similar but different idea was shown in \cite{fwxmw19}, which proposed to generate $\tilde{x} = \lambda_1 x + \sum_{i=2}^k \lambda_i x_i$ and use $\tilde{h}_2(x)= h(\tilde{x})$ for training. 

Both methods may be vulnerable to the attacks presented in this section: when $h$ is linear (one example in \cite{facemix19}), we have $\tilde{h}_1(x) = \tilde{h}_2(x)$, which means the attacker can reconstruct the {\mixup} image $\tilde{x} = \lambda_1 x + \sum_{i=1}^k x_i$ from $\tilde{h}_1(x)$ and run attacks on {\mixup}. For a nonlinear $h$, $\tilde{h}_1(x) \approx \tilde{h}_2(x)$ may also hold.   %%%% Section 2. Mixup Method and Some Attacks
\section{{\instahide}}\label{sec:instahide}

This section first presents two schemes: Inside-dataset {\instahide} and Cross-dataset {\instahide}, and then describes their inference.

The Inside-dataset {\instahide} mixes each training image with random images within the same private training dataset.  The cross-dataset {\instahide}, arguably more secure (see Section~\ref{sec:privacy}), involves mixing with random images from a large public dataset like ImageNet.

\subsection{Inside-Dataset {\instahide}}\label{sec:pixel_wise}

Algorithm \ref{alg:instahide} shows Inside-dataset {\instahide}. Its encryption step includes mixing the secret image $x$ with $k-1$ other training images followed by an extra random pixel-wise sign-flipping mask on the composite image. Note that random sign flipping changes the color of a pixel. The motivation for random sign flips was described around Footnote~\ref{ft:sign_flip}.

\vspace{1mm}
\begin{definition}[random mask distribution $\Lambda_{\pm}^d$]
    \label{def:lam_pm}
    Let $\Lambda_{\pm}^d$ denote the $d$-dimensional random sign distribution such that $\forall \sigma \sim \Lambda_{\pm}^d$,  for $i \in [d]$, $\sigma_i$ is independently chosen from $\{\pm1\}$ with probability 1/2 each.
\end{definition}

\begin{algorithm}[t]\caption{Inside-dataset {\instahide}}\label{alg:instahide}
        \small
        \begin{algorithmic}[1]
        \Procedure{\textsc{InstaHide}}{$W, T, \mathcal{X},\mathcal{Y}$} \Comment{This paper} 
            \State $W$: weights of the neural network;  $T$: number of epochs;
            \newline\indent
            ${\cal X} = \{ x_1, \cdots, x_n \}$: data;
            ${\cal Y} = \{ y_1, \cdots, y_n \}$: labels.
            \State Initialize $W$ 
            \For{$t = 1 \to T$} 
                \State Generate $\pi_1$ such that $\pi_1(i)=i$, $\forall i \in [n]$, and $k-1$ random permutations $\pi_2$, $\cdots$, $\pi_k: [n] \rightarrow [n]$
                \State Sample $\lambda_1, \cdots, \lambda_n \sim [0,1]^k $ uniformly at random, and for all $i \in [n]$ normalize $\lambda_i \in \R^k$ such that $\| \lambda_i \|_1 = 1$ and $\| \lambda_i \|_\infty \leq c_1$. \Comment{$c_1 \in [0,1]$ is a constant that upper bounds a single coefficient} \label{lin:sample_lam}
                \State Sample $\sigma_1, \cdots, \sigma_n \sim \Lambda_{\pm}^d $ uniformly at random.  \Comment{Definition~\ref{def:lam_pm}}
                \State $\tilde{D} \gets \emptyset$
                \For{$i = 1 \to n$} \Comment{Generate {\instahide} dataset}
                    \State $\tilde{x}_i \leftarrow \sigma_i \circ \sum_{j=1}^{k} (\lambda_i)_j \cdot x_{ \pi_j( i )}$ \Comment{Encryption} \label{lin:compute_wt_x_i}
                    \State $\tilde{y}_i \leftarrow \sum_{j=1}^{k} (\lambda_i)_j \cdot y_{\pi_j( i )}$ \Comment{Mix labels}
                    \State $\tilde{D} \gets \tilde{D} \cup (\tilde{x}_i, \tilde{y}_i)$
                \EndFor 
                \State Train $W$ using the {\instahide} dataset $\tilde{D}$
            \EndFor
        \EndProcedure
        \end{algorithmic}
\end{algorithm}

To encrypt a private input $x_i$, we first determine the random coefficient $\lambda$'s for image-wise combination, but with the constraint that they are at most $c_1$ to avoid dominant leakage of any single image (line~\ref{lin:sample_lam} in Algorithm~\ref{alg:instahide}).  Then we sample a random mask $\sigma_i \sim \Lambda_\pm^d$ and apply $\sigma \circ x$, where $\circ$ is coordinate-wise multiplication of vectors (line~\ref{lin:compute_wt_x_i} in Algorithm~\ref{alg:instahide}). Note that the random mask $\sigma_i$ and the $k-1$ images used for mixing with $x_i$, will not be reused to encrypt other images.  They constitute a ``random one-time private key.''

\paragraph{\em A priori.} It may seem that using a different mask for each training sample would completely destroy the accuracy of the trained net, but as we will see later it has only a small effect when $k$ is small. Mathematically, this seems reminiscent of the {\em phase retrieval problem}~\cite{csv13,ln18} (see Appendix~\ref{sec:app_phase_retrieval}).

\subsection{Cross-Dataset {\instahide}}
\label{sec:cross_dataset}

Cross-dataset {\instahide} extends the encryption step of Algorithm~\ref{alg:instahide} by mixing $k$ images from the  private training dataset $D_{\text{private}}$ and a public dataset $D_{\text{public}}$, and a random mask as a random one-time secret key.  

Although the second dataset can be private, there are several motivations to 
use a public dataset: (a) Some privacy-sensitive datasets, (e.g. CT or MRI scans), feature images with certain structure patterns with uniform backgrounds. Mixing among such images as in Algorithm~\ref{alg:instahide} would not hide information  effectively. 
(b) Drawing mixing images from a larger dataset gives greater unpredictability, hence better security (see Section~\ref{sec:privacy}). 
(c) Public datasets are freely available and eliminate the need for special setups among participants in a distributed learning setting.

To mix $k$ images in the encryption step, we randomly choose $2$ images from $D_{\text{private}}$ and the other $k-2$ from $D_{\text{public}}$, and apply  {\instahide} to all these images. The only difference in the cross-dataset scheme is that, the model is trained to learn {\em only} the (mixed) label of $D_{\text{private}}$ images. We assume $D_{\text{public}}$ images are unlabelled. For better accuracy, we lower bound the sum of coefficients of two private images by a constant $c_2 \in [0, 1]$.

We advocate preprocessing a public dataset in two steps to obtain $D_{\text{public}}$ for better security.  The first is to randomly crop a number of patches from each image in the public dataset to form $D_{\text{public}}$.  This step will make $D_{\text{public}}$ much larger than the original public dataset.  The second is to filter out the ``flat'' patches.  In our implementation, we design a filter using SIFT~\cite{lowe1999object}, a feature extraction technique to retain patches with more than 40 key points.

\subsection{Inference with {\instahide}}
\label{sec:inference}

Either scheme above by default applies {\instahide} during inference, by averaging predictions of multiple encryptions (e.g. 10) of a test sample. This idea is akin to existing cryptographic frameworks for secure evaluation on a public server via homomorphic encryption (e.g.~\cite{mishra2020delphi}).
Since the encryption step of 
{\instahide} is very efficient,
the overhead of such inference is quite small.

One can also choose not to apply 
{\instahide} during inference.  We found in our experiments (Section~\ref{sec:exp}) that it works for low-resolution image datasets such as CIFAR-10 but it does not work well with a high-resolution image dataset such as ImageNet.     %%%% Section 3. InstaHide
\section{Security Analysis}
\label{sec:privacy}

This section considers the security of {\instahide} in distributed learning, specifically the 
cross-dataset version.

\noindent{\bf Attack scenario:}
In each epoch, all clients replace each (image, label)-pair $(x, y)$ in the training set with some $(\tilde{x}, \tilde{y})$ using {\instahide}.
Attackers observe $h(\tilde{x}, \tilde{y})$ for  some function $h$:  in federated learning  $h$ could involve batch gradients or hidden-layer activations computed using input $\tilde{x}, \tilde{y}$ as well as other inputs. 

Argument for security consists of two halves: (1) To recover significant information about an image $x$ from communicated information, {\em computationally limited}  eavesdroppers/attackers have to break {\instahide} encryption (Section~\ref{sec:security_part1}). (2) Breaking {\instahide} is difficult (Section~\ref{sec:security_part2}). 

\begin{figure*}[t]
    \centering
    \subfloat{
    \includegraphics[width=0.29\textwidth]{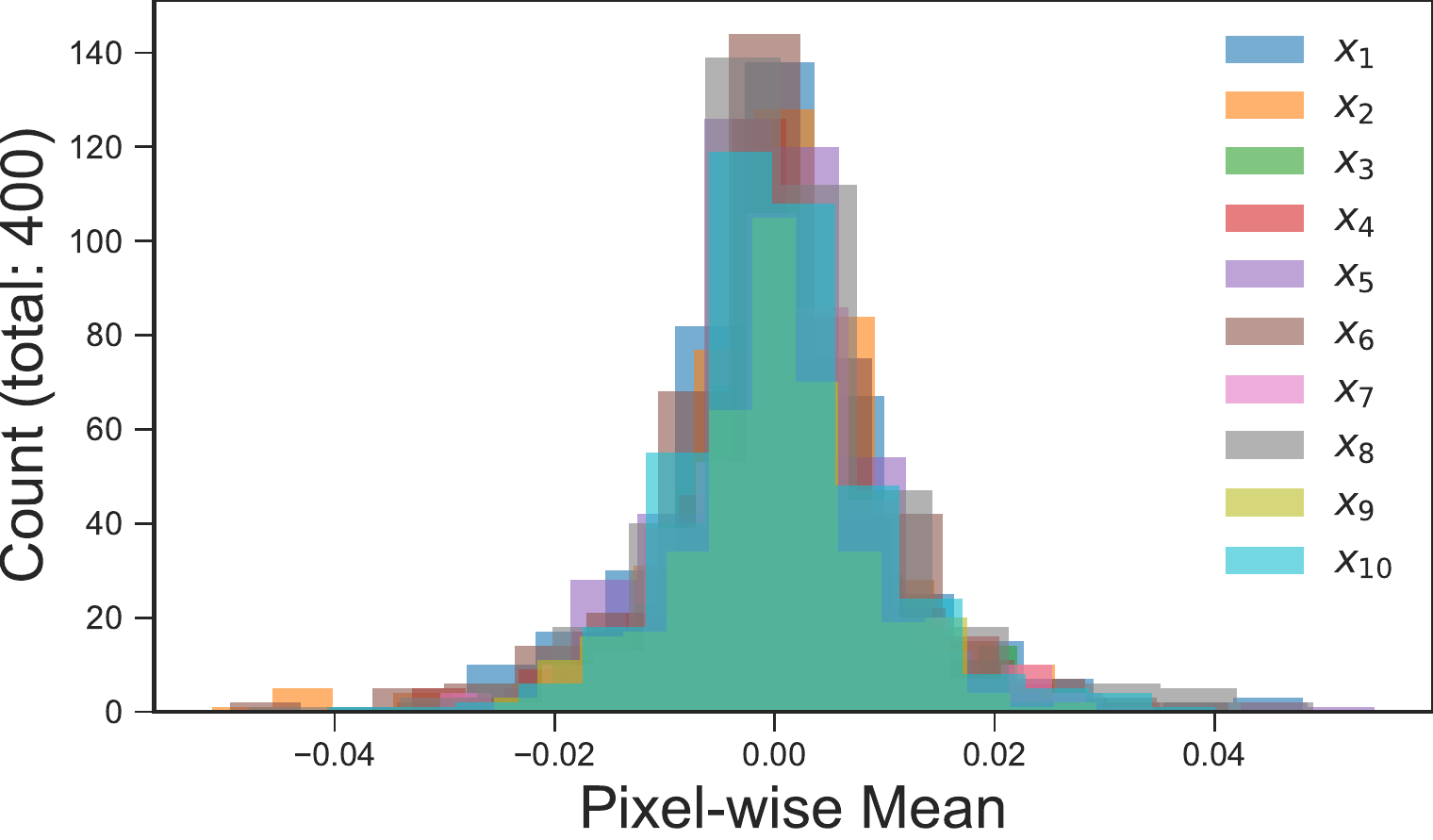}\hspace{3mm}
    \includegraphics[width=0.29\textwidth]{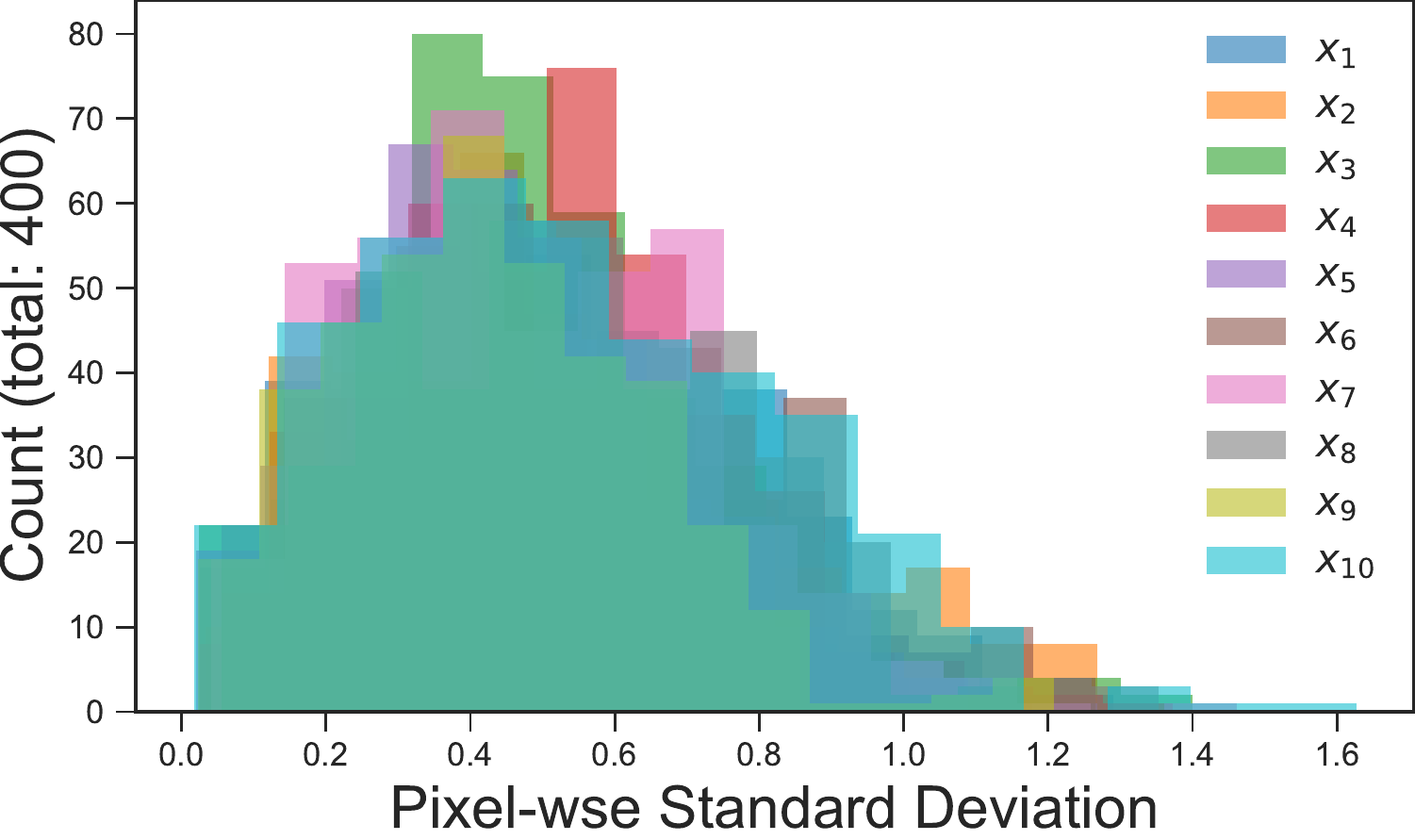}\hspace{3mm}
    \includegraphics[width=0.29\textwidth]{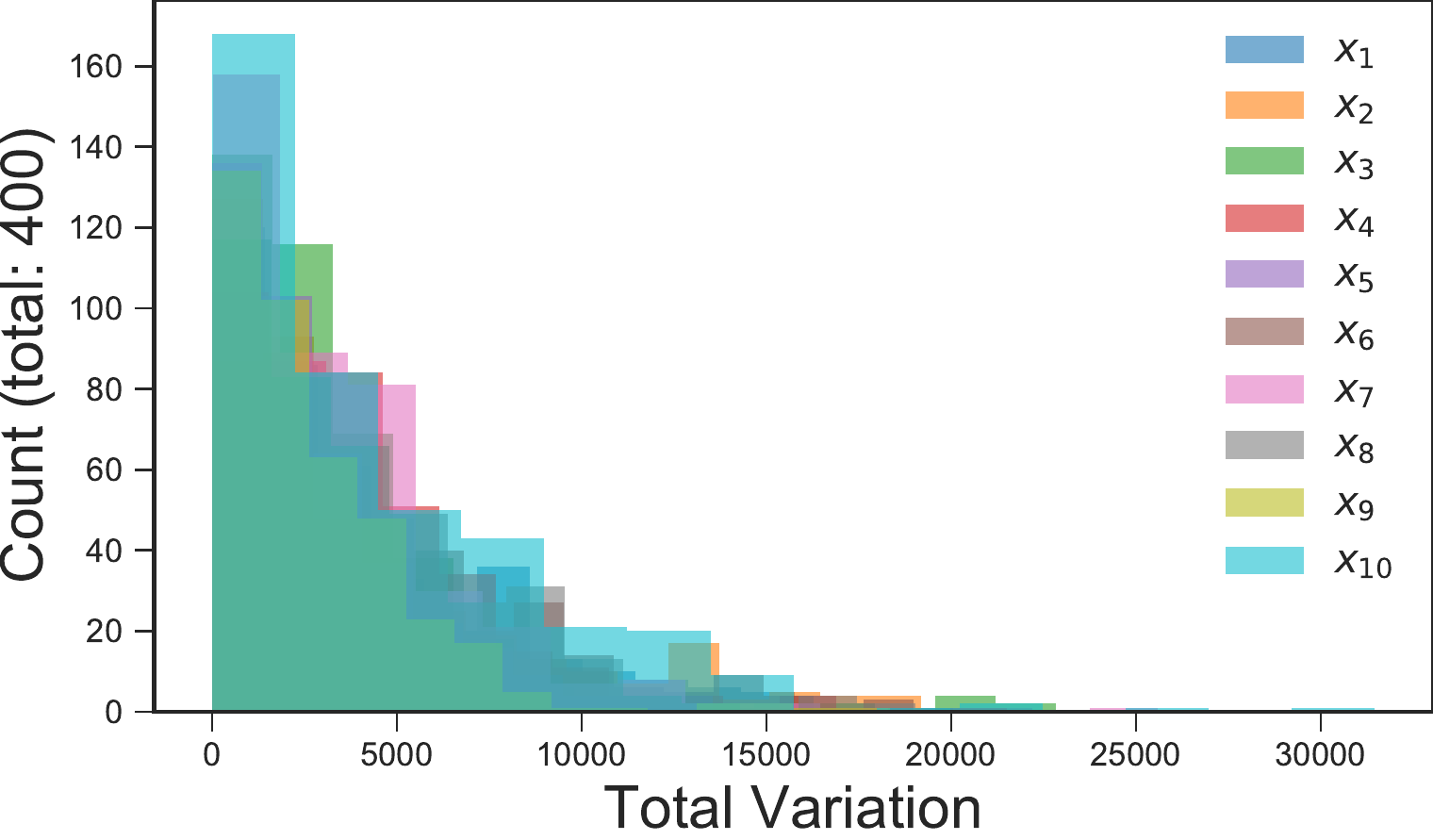}}\\
    \subfloat{
    \includegraphics[width=0.22\textwidth]{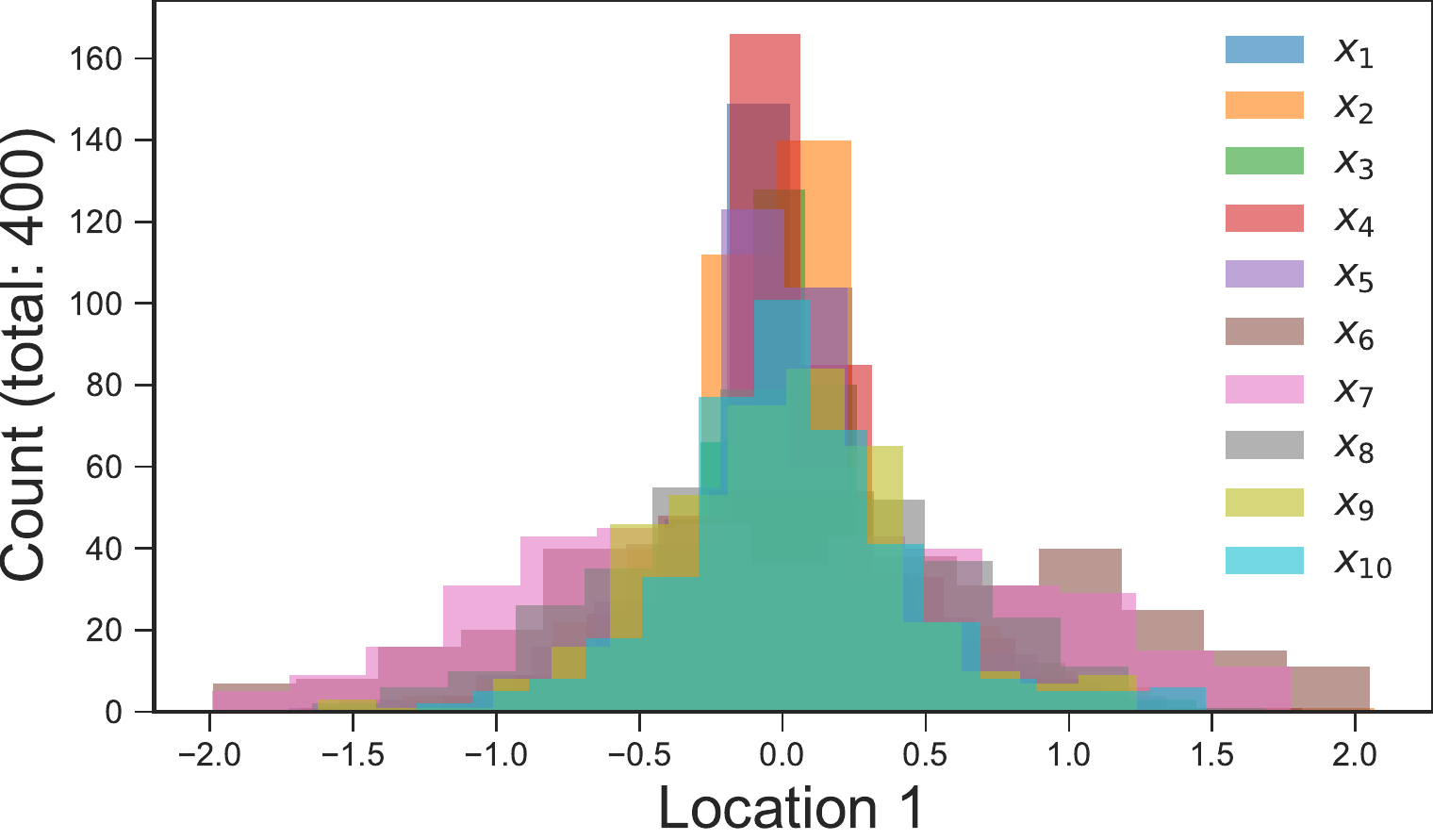}\hspace{1.8mm}
    \includegraphics[width=0.22\textwidth]{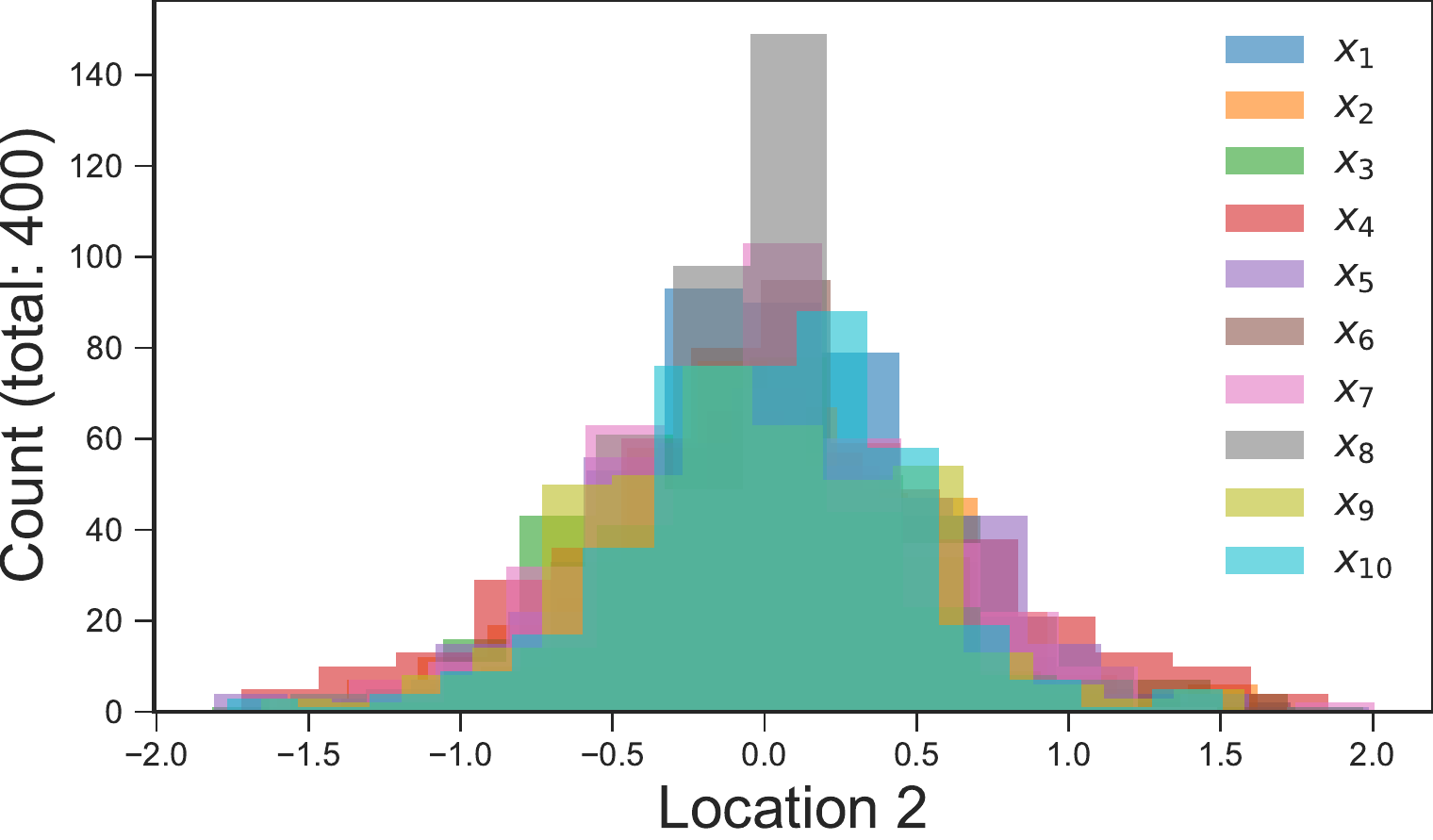}\hspace{1.8mm}
    \includegraphics[width=0.22\textwidth]{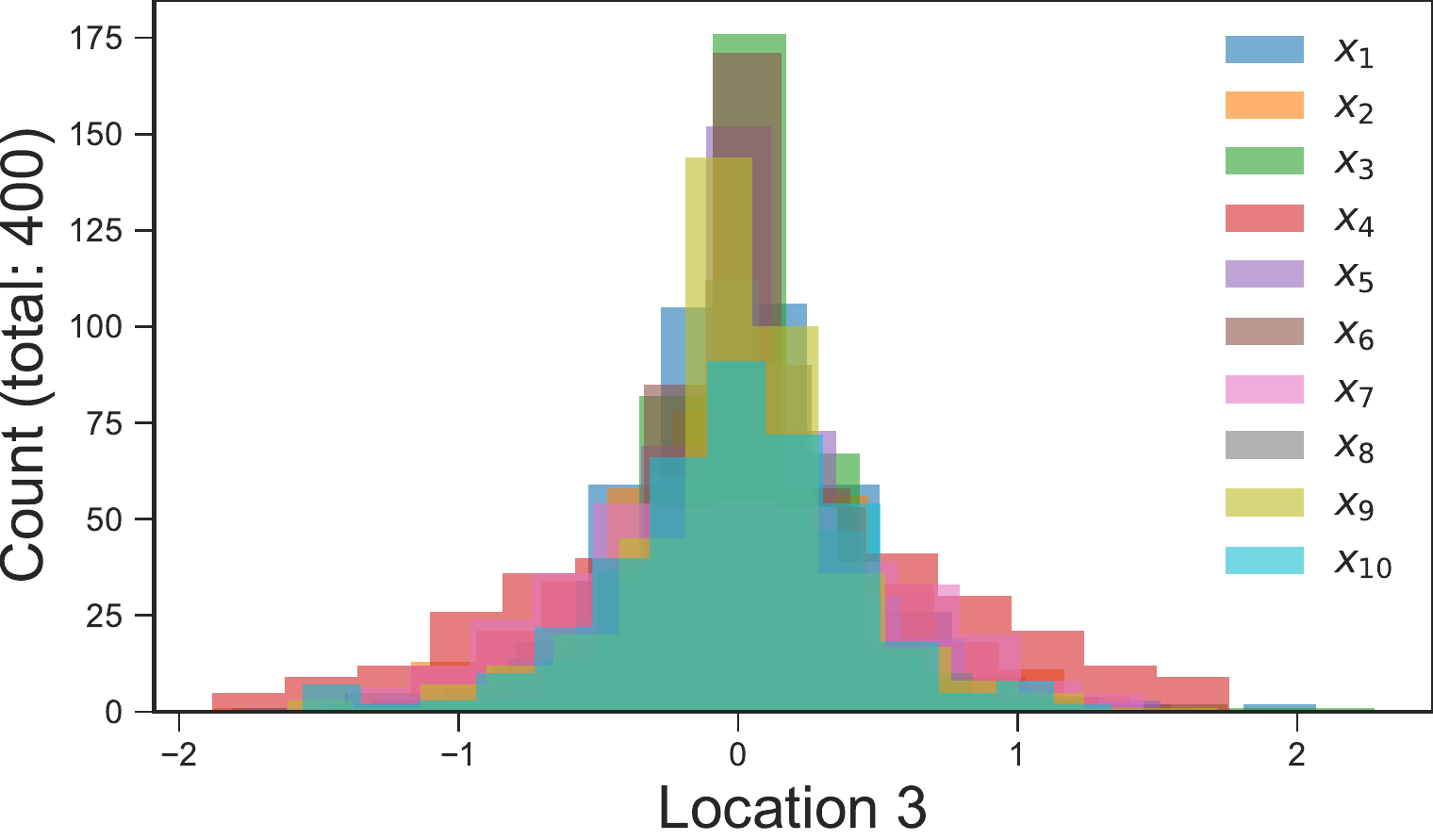}\hspace{1.8mm}
    \includegraphics[width=0.22\textwidth]{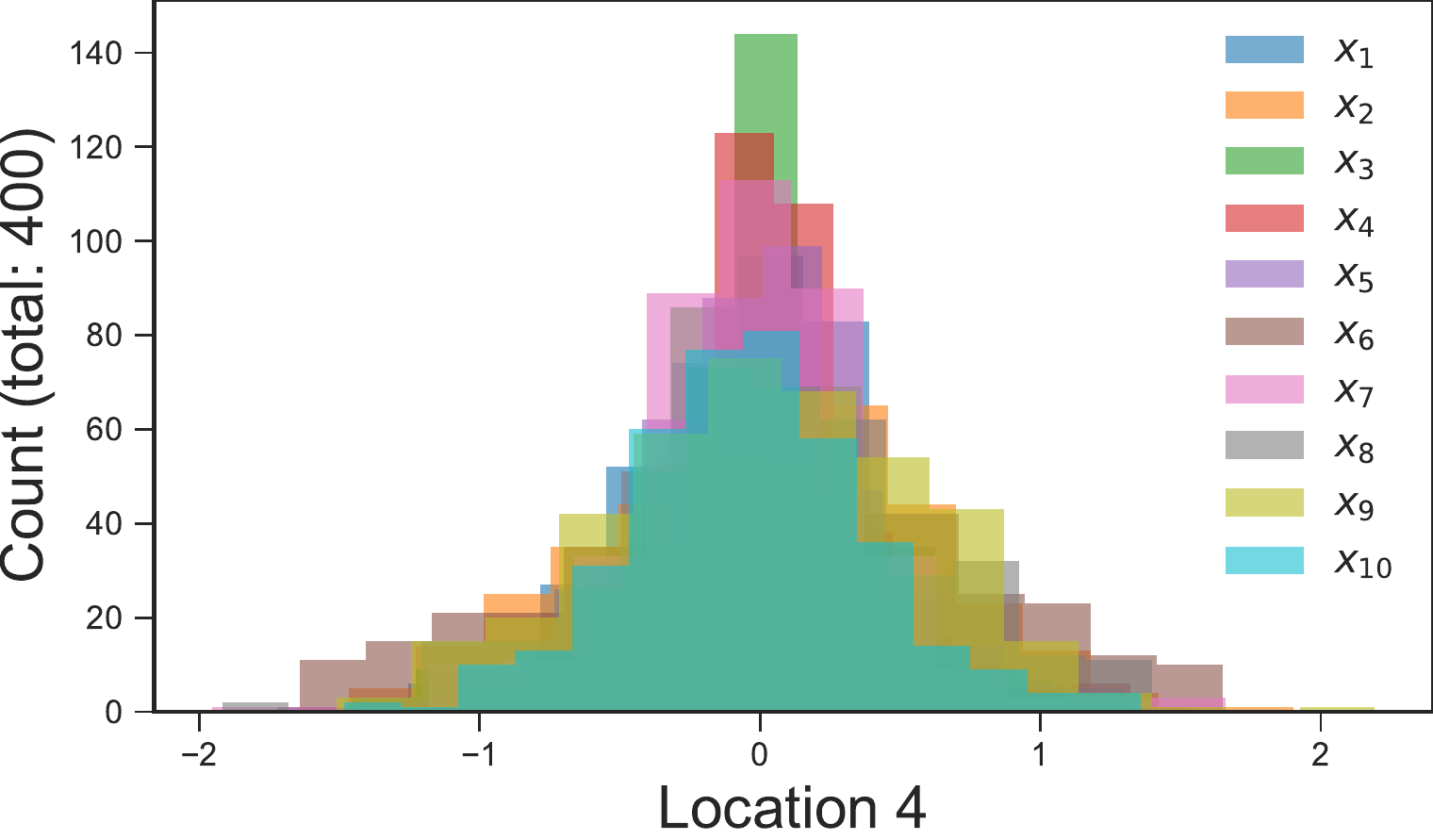}
    }
    \caption{Empirical distributions of 3 statistics (first row) and 4 random locations (second row) for 400 {\instahide} encryptions of 10 different images (cross-dataset, $k=4$). Distributions of encryptions of different images are indistinguishable. }
    \label{fig:encrypt_dist}
\end{figure*}

\subsection{Secure Encryption Implies Secure Protocol}
\label{sec:security_part1}

Suppose an attacker exists that compromises an image $x$ in the protocol. We do the thought experiment of even providing the attacker with encryptions of {\em all} images belonging to all parties, as well as model parameters in each iteration. Now everything the attacker sees during the protocol it can efficiently compute by itself, and we can convert the attacker to one that, given $\tilde{x}$, extracts information about $x$. We conclude in this thought experiment that a successful attack on the protocol also yields a successful attack on the encryption. In other words, privacy loss during protocol is {\em upper bounded} by privacy loss due to the encryption itself. Of course, this proof allows the possibility that the protocol ---due to aggregation of gradients, etc.---ensures 
even greater privacy than the encryption alone.

\paragraph{Dealing with multiple encryptions of same $x$:}
In our protocol each image $x$ is re-encrypted in each epoch. This seems to act as a data augmentation and  improves accuracy. The above argument translated to this setting shows that privacy violation requires solving the following problem: private images $x_1, x_2, \ldots, x_n$ were each encrypted $T$ times ($n$ is size of the private training set, $T$ is number of epochs), each  time using a new private key. Attacker is given these $nT$ encryptions. {\em Weaker task:} Attacker has to identify which of them came from $x_1$. {\em Stronger task:} Attacker has to identify $x_1$. 

We conjecture both tasks are hard.  Visualization (see Figure~\ref{fig:encrypt_dist}) as well as the Kolmogorov–Smirnov test~\cite{k33,s48} (see Appendix~\ref{sec:exp_app}) suggest that statistically, it is difficult to distinguish among distributions of encryptions of different images. 
Thus, effectively identifying multiple $\tilde{x}$'s of same $x$ and using them to run attacks seems difficult.

\subsection{Hardness of Attacking {\instahide} Encryption}
\label{sec:security_part2}

Now we consider the difficulty of recovering information about $x$ given a single encryption $\tilde{x}$.

\paragraph{Security estimates of naive attack.}
We start by considering the naive attack on cross-dataset {\instahide}, which would involve the attacker to either figure out the set of all $k-2$ public images, or to compromise the mask $\sigma$ and run attacks on {\mixup}.
This should take $\min \{ |\mathcal{D}_{\rm public}|^{k-2}, 2^d \}$ time. 
For cross-dataset {\instahide} schemes with a large public dataset (e.g. ImageNet), the computation cost of attack will be $10^{7(k-2)}$, and $k=4$ already makes the attack hard.

Now we suggest reasons why the naive attack may be best possible.

\paragraph{For worst-case pixel-vectors, finding the $k$-image set is hard.} Appendix~\ref{sec:hard_app} shows that for {\em worst-case} choices of images (i.e., when an \textquotedblleft image\textquotedblright\ is allowed to be an arbitrary sequence of pixel values) the computational complexity of this problem is related to the famous $k$-{\sc vector subset sum} problem.\footnote{Given a set of $N$ public vectors $v_1, \cdots v_N \in \R^d$ and $\sum_j^k v_{i_j}$, the sum of a secret subset $i_1, \cdots, i_k$ of size $k$, $k$-{\sc vector subset sum} aims to find $i_1, \cdots, i_k$. However, in our case (cross-dataset {\instahide}), only $k-2$ vectors are drawn from the public set while the other 2 are drawn from a private set. All the vectors are public in the classical setting, therefore our setting is even harder.} The attacker would need to exhaustively search over all possible combinations of public images, which is at the cost of $|\mathcal{D}_{\rm public}|^{k-2}$. Thus when $\mathcal{D}_{\rm public}$ is a large public dataset and $k \geq 4$, the computational effort to recover the image ought to be of the order of $10^{10}$ or more. 

Of course, images are not worst-case vectors. Thus an attack must leverage this fact somehow. The obvious idea today is to use a deep net for the attack, and the experiments below will suggest the obvious ideas do not work. 

\paragraph{Compromising the mask is also hard.} As previously discussed, the pixel-wise mask $\sigma \sim \Lambda_{\pm}^d$ ( Def.~\ref{def:lam_pm}) in {\instahide} is kept private by each client, which is analogous to a private key (assuming the client never shares its own $\sigma$ with others, and the generation of $\sigma$ is statistically random). Brute-force algorithm consumes $2^d$ time to figure out $\sigma$, where $d$ can be several thousands in vision tasks.    %%%% Section 4. Security Analaysis
\section{Experiments}\label{sec:exp}

We have conducted experiments to answer three questions: 
\begin{enumerate}
    \item  How much accuracy loss does {\instahide} suffer (Section~\ref{sec:exp_acc})?
    \item How is the accuracy loss of {\instahide} compared to differential privacy approaches (Section~\ref{sec:mixup_vs_noise})? 
    \item Can {\instahide} defend against known attacks (Section~\ref{sec:exp_privacy})? 
\end{enumerate}
We are particularly interested in the cases where $k\geq4$.

\begin{figure*}
    \centering
    \subfloat[MNIST]{\includegraphics[width=0.28\textwidth]{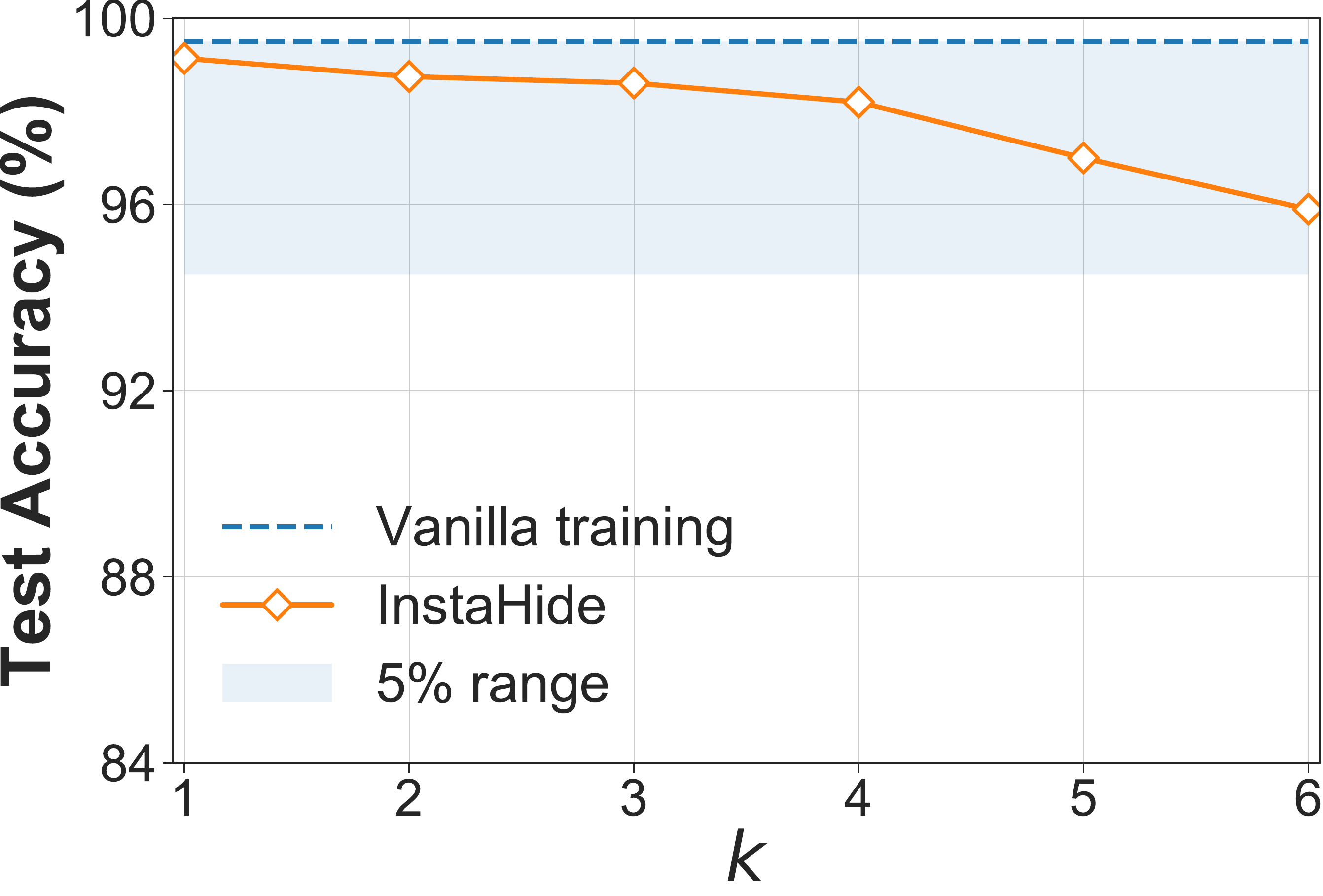}}
    \hspace{5mm}
    \subfloat[CIFAR-10]{\includegraphics[width=0.28\textwidth]{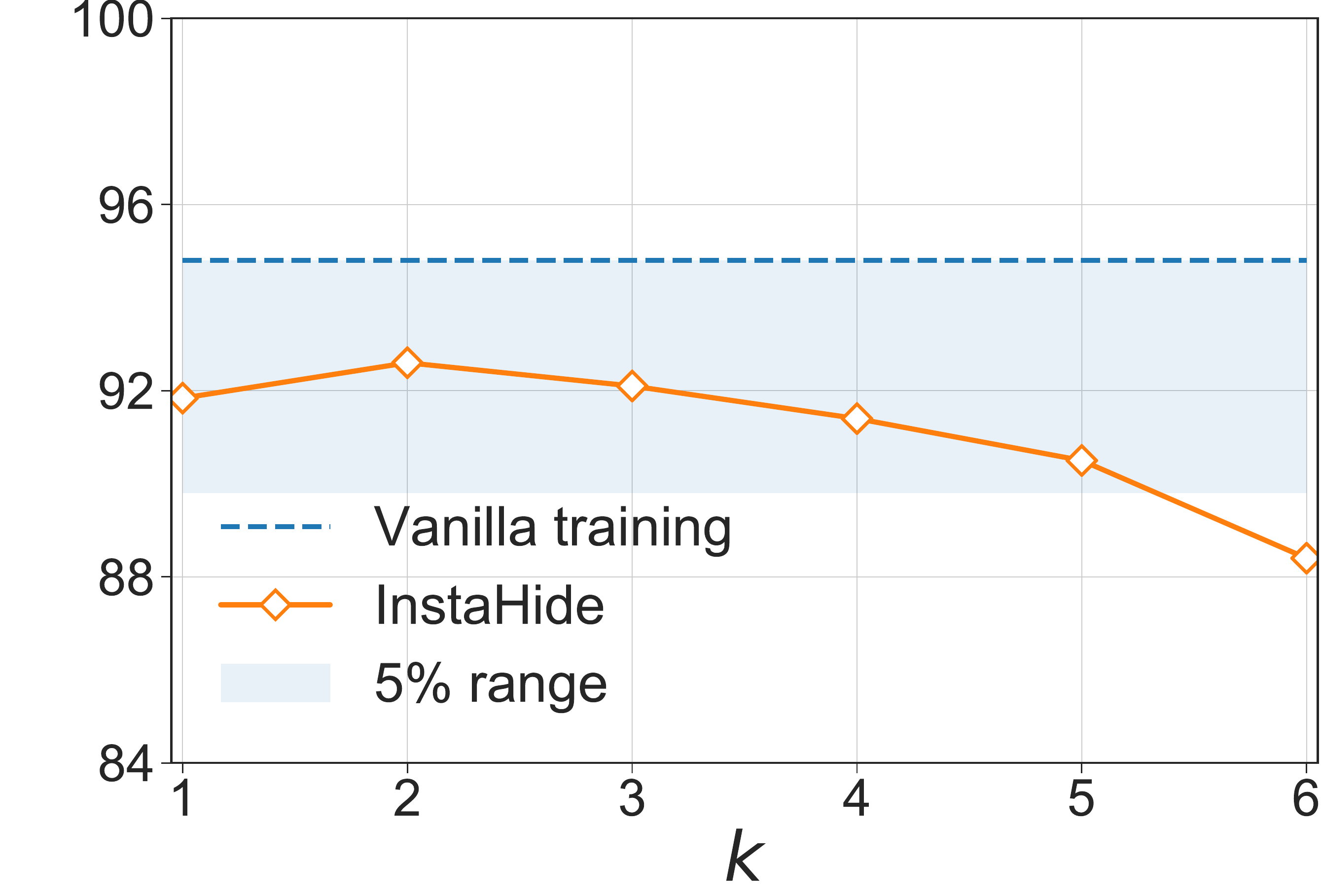}}
    \hspace{5mm}
    \subfloat[CIFAR-100]{\includegraphics[width=0.28\textwidth]{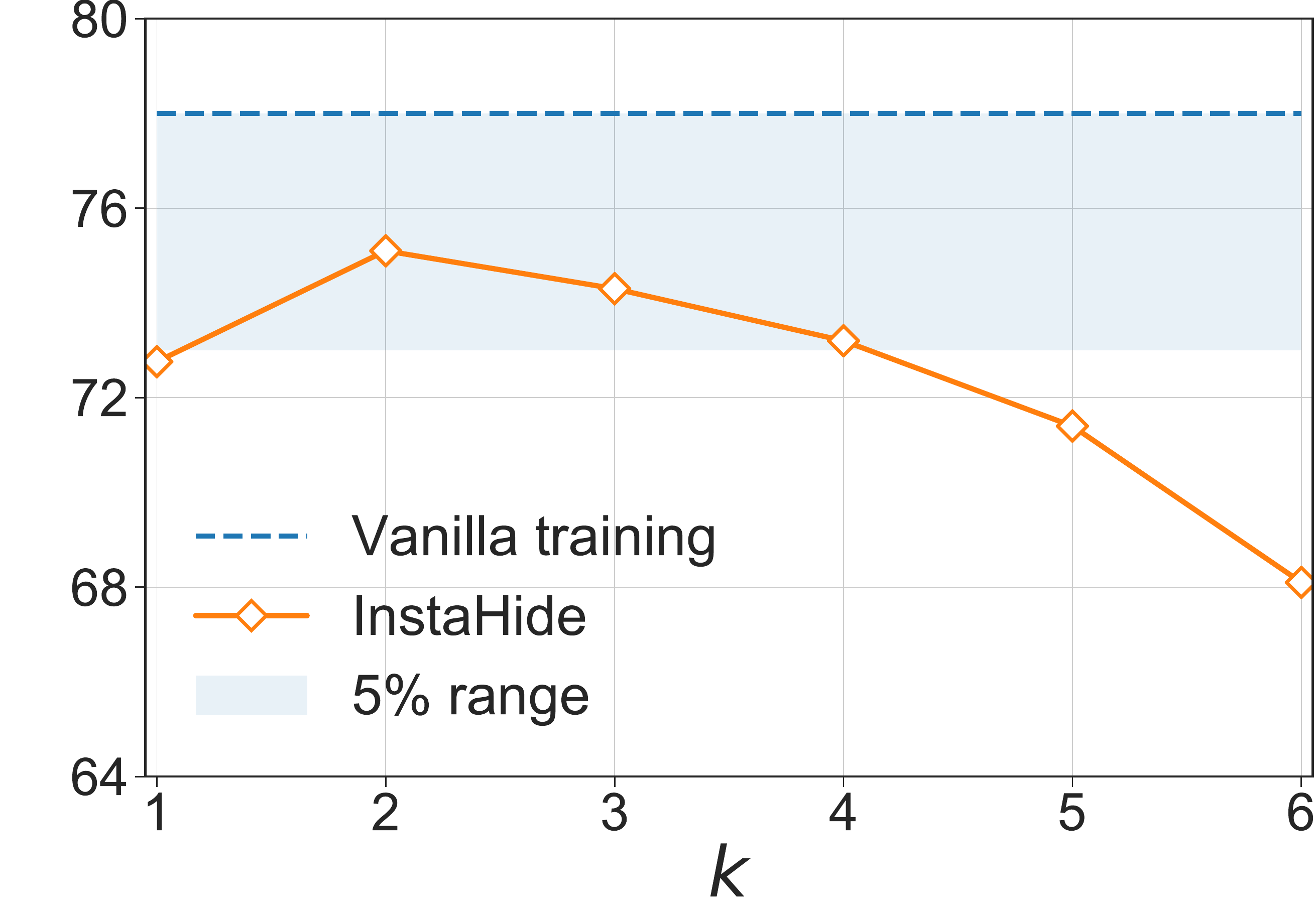}}
    \caption{Test accuracy ($\%$) on MNIST, CIFAR-10 and CIFAR-100 for vanilla training and inside-dataset {\instahide} with different $k$'s. {\instahide} with $k\leq 4$ only suffers small accuracy loss.}
    \label{fig:test_acc}
\end{figure*}

\begin{table*}[t]
\setlength{\tabcolsep}{1mm}
    \centering
    \begin{tabular}{lccccl}
    \toprule
       & {\bf MNIST}  &  {\bf CIFAR-10} & {\bf CIFAR-100} & {\bf ImageNet} & {\bf \em Assumptions}\\
       \midrule
       {\bf Vanilla training}  & $99.5 \pm 0.1$ &  $94.8 \pm 0.1$ & $77.9 \pm 0.2$ & $77.4$ & - \\
       \midrule
       {\bf DPSGD$^*$} & $98.1$ & $72.0$ & N/A & N/A &A1 \\
       \midrule
       {\bf {\instahide}$_{\text{\rm inside}, k=4, \text{ \rm in inference}}$}   & $98.2 \pm 0.2$ & $91.4 \pm 0.2$ & $ 73.2 \pm 0.2$   & $72.6$ & - \\
       {\bf {\instahide}$_{\text{\rm inside}, k=4}$} & $98.2 \pm 0.3$ &  $91.2 \pm 0.2$ &  $ 73.1 \pm 0.3$ & 1.4 &- \\
       {\bf {\instahide}$_{\text{\rm cross}, k=4, \text{ \rm in inference}}$}  &  $98.1 \pm 0.2$ & $90.3 \pm 0.2$ & $72.8 \pm 0.3$ & - & A2\\
       {\bf {\instahide}$_{\text{\rm cross}, k=4}$ } &  $97.8 \pm 0.2$ & $ 90.7 \pm 0.2$ & $ 73.2 \pm 0.2$ & - & A2 \\
       {\bf {\instahide}$_{\text{\rm cross}, k=6, \text{ \rm in inference}}$}  &   $97.4 \pm 0.2$ & $89.6 \pm 0.3$ & $72.1 \pm 0.2$ & - & A2\\
       {\bf {\instahide}$_{\text{\rm cross}, k=6}$ } & $97.3 \pm 0.1$  & $89.8 \pm 0.3$ & $71.9 \pm 0.3$  & - & A2\\
     \bottomrule
    \end{tabular}
    \caption{Test accuracy ($\%$) on MNIST, CIFAR-10, CIFAR-100 and ImageNet for vanilla training, DPSGD~\cite{acg+16} and {\instahide}, including the mean and standard deviation of test accuracy across 5 runs except for ImageNet. 
    Results marked with ``in inference''  applies {\instahide} during inference.  {\instahide} methods incur minor accuracy reductions. A1 denotes ``{\em Using a {\bf \em labelled} public dataset for pre-training}'', A2 denotes `` {\em Using a large {\bf \em unlabelled} public dataset for hiding''}.
    $^*$DPSGD results are from~\cite{dpsgd19}, which does not have results for CIFAR-100 and ImageNet.
    } 
    \label{tab:test_acc}
\end{table*}

\begin{figure*}[t]
\centering
\subfloat[Accuracy with different $\alpha$'s. ]{\includegraphics[width=0.26\textwidth]{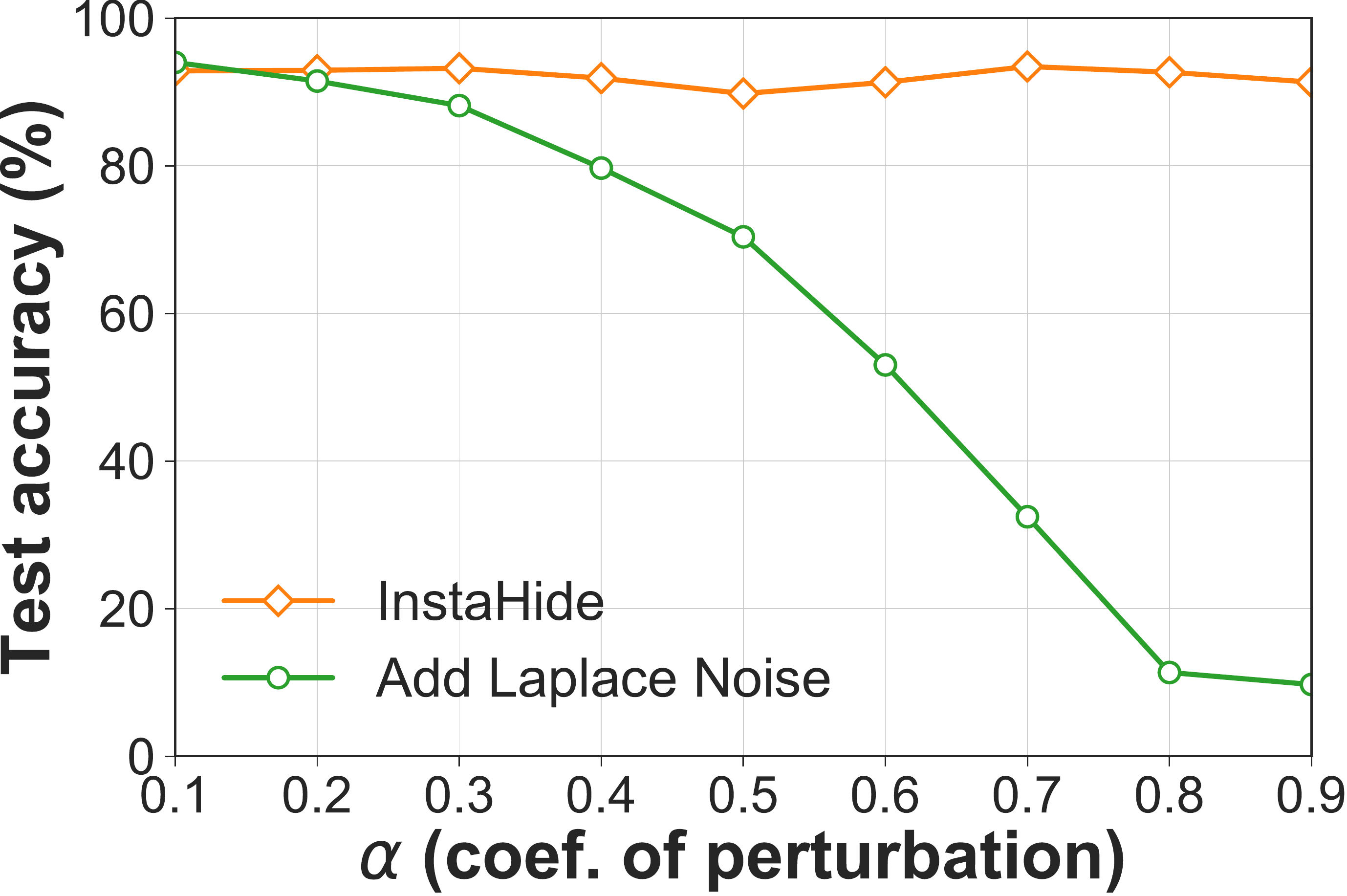}} \hspace{2mm}
\subfloat[Images generated by {\instahide} and adding Laplace noise with different $\alpha$'s. ]{\includegraphics[width=0.72\textwidth]{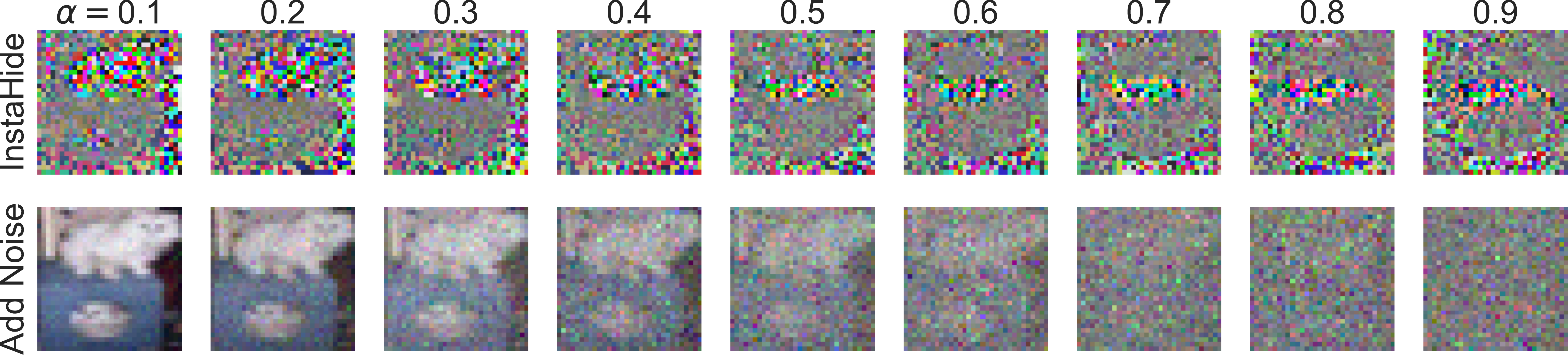}}
\caption{Relative accuracy on CIFAR-10 (a) and visualization (b) of  {\instahide} and adding random Laplace noise with different $\alpha$'s, the coefficient of perturbation.  {\instahide} gives better test accuracy than adding random noise.}
\label{fig:mixup_vs_noise_acc}
\end{figure*}

\paragraph{Datasets and setup.}
Our main experiments are image classification tasks on four datasets MNIST \cite{mnist10}, CIFAR-10, CIFAR-100~\cite{cifar10}, and ImageNet~\cite{imagenet09}.
We use ResNet-18~\cite{hzrs16} architecture for MNIST and CIFAR-10, NasNet~\cite{zoph2018learning} for CIFAR-100, and ResNeXt-50~\cite{Xie2016} for ImageNet. The implementation uses the Pytorch \cite{pytorch19} framework. Note that we convert greyscale MNIST images to be 3-channel RGB images in our experiments. Hyper-parameters are provided in Appendix~\ref{sec:exp_app}.

\subsection{Accuracy Results of {\instahide}}\label{sec:exp_acc}

We evaluate the following {\instahide} variants (with $c_1 = 0.65, c_2 = 0.3$):

\begin{itemize}
    \item Inside-dataset {\instahide} with different $k$'s, where $k$ is chosen from $\{1,2,3,4,5,6\}$. 

    \item Cross-dataset {\instahide} with $k=4, 6$. For MNIST, we use CIFAR-10 as the public dataset; for CIFAR-10 and CIFAR-100, we use the preprocessed ImageNet as the public dataset. We do not test Cross-dataset {\instahide} on ImageNet since its sample size is already large enough for good security.
\end{itemize}

The computation overhead of {\instahide} in terms of extra training time in our experiments is smaller than $5\%$.

\paragraph{Accuracy with different $k$'s.} Figure~\ref{fig:test_acc} shows the test accuracy of vanilla training, and inside-dataset {\instahide} with different $k$'s on MNIST, CIFAR-10 and CIFAR-100 benchmarks. Compared with vanilla training, {\instahide} with $k=4$ only suffers small accuracy loss of $ 1.3\%$, $3.4\%$, and $4.7\%$ respectively. Also, increasing $k$ from 1 (i.e., apply mask on original images, no {\mixup}) to 2 (i.e., apply mask on pairwise mixed images) improves the test accuracy for CIFAR datasets, suggesting that {\mixup} data augmentation also helps for encrypted {\instahide} data points.

\paragraph{Inside-dataset v.s. Cross-dataset.} We also evaluate the performance of Cross-dataset {\instahide}, which does encryption using random images from both the private dataset and a large public dataset.  As shown in Table~\ref{tab:test_acc}, Cross-dataset {\instahide} incurs an additional small accuracy loss comparing with inside-dataset {\instahide}. The total accuracy losses for MNIST, CIFAR-10 and CIFAR-100 are $1.7\%$, $4.1\%$, and $4.7\%$ respectively.
As previously suggested, a large public dataset provides a stronger notion of security.

\paragraph{Inference with and without {\instahide}.}

As mentioned in Sec~\ref{sec:inference}, by default, 
{\instahide} is applied during inference.  In our experiments, we average predictions of 10 encryptions of a test image. We found that for high-resolution images,
applying {\instahide} during inference is important: the results of using Inside-dataset {\instahide} on ImageNet in Table~\ref{tab:test_acc} show that the accuracy of inference with {\instahide} is $72.6\%$, whereas that without {\instahide} is only $1.4\%$.

\subsection{{\instahide}  vs. Differential Privacy approaches}
\label{sec:mixup_vs_noise}

Although {\instahide} is qualitatively different from differential privacy in terms of privacy guarantee, we would like to provide hints for their relative accuracy (Question 2).

\paragraph{Comparison with DPSGD.}
DPSGD~\cite{acg+16} injects  noise to gradients to control private leakage. Table~\ref{tab:test_acc} shows that DPSGD leads to an accuracy drop of about $20\%$ on CIFAR-10 dataset. By contrast, {\instahide} gives models almost as good as vanilla training in terms of test accuracy.

\paragraph{Comparison with adding random noise to images.} 
We also compare {\instahide} (i.e., adding {\it structured} noise) with adding random noise to images (another typical approach to preserve differential privacy). 
Specifically, given the original dataset $\mathcal{X}$, and a perturbation coefficient $\alpha \in (0, 1)$, we test a) {\instahide}$_{{\rm inside}, k=2}$: $\tilde{x}_i =  \sigma \circ ((1-\alpha) x_i + \alpha x_j)$, where $x_j \in \mathcal{X}$, $j \neq i$, and b) adding random Laplace noise $e$: $\tilde{x}_i = (1-\alpha) x_i + e$, where $\E[\|e\|_1] = \E[\|\alpha x_j\|_1]$.  

As shown in Figure~\ref{fig:mixup_vs_noise_acc}, by increasing $\alpha$ from 0.1 to 0.9, the test accuracy of adding random noise drops from $\sim94\%$ to  $\sim10\%$, while the accuracy of {\instahide} is above $90\%$.

\subsection{Robustness Against Attacks}
\label{sec:exp_privacy}

To answer the question how well {\instahide} can defend known attacks, here we report our findings with a sequence of attacks on {\instahide} encryption $\tilde{x}$ to recover original image $x$, including
gradient-matching attack~\cite{zlh19}, demasking using GAN (Generative Adversarial Network), averaging multiple encryptions, and uncovering public images with similarity search.

\paragraph{Gradient-matching attack \cite{zlh19}.}
\label{sec:exp_attack_gradmatch}

Here attacker observes gradients of the loss generated using a user's private image $s$ while training a deep net (attacker knows the deep net, e.g., as a participant in Federated Learning) and tries to recover $s$ by computing an image $s^*$ that has similar gradients to those of $s$ (see algorithm in Appendix \ref{sec:exp_app}). 
Figure~\ref{fig:mix_cifar} shows results of this attack on {\mixup} and {\instahide} schemes on CIFAR-10. If {\mixup} with $k=4$ is used, the attacker can still extract fair bit of information about the original image. However, if {\instahide} is used the attack isn't successful.

\paragraph{Demask using GAN.}  {\instahide} does pixel-wise random sign-flip after applying {\mixup} (with public images, in the most secure version). This flips the signs of half the pixels in the mixed image. An alternative way to think about it is that the  adversary sees the intensity information (i.e. absolute value) but not the sign of the pixel. Attackers could use computer vision ideas to  recover the sign. One attack consists of training a GAN on this sign-recovery task\footnote{We thank Florian Tramèr for suggesting this attack.}, using a large training set of  $(z, \sigma \circ z)$ where $z$ is a mixed image and $\sigma$ is a random mask. If this GAN recovers the signs reliably, this effectively removes the mask, after which one could use the attacks against {\mixup} described in Section~\ref{sec:warmup}.

In experiments this only succeeded in recovering half the flipped signs, which means $\sim 1/4$ of the coordinates continued to have the wrong sign; see Figure~\ref{fig:GAN_demask}. GAN training\footnote{We use this GAN architecture (designed for image colorization): \href{https://github.com/zeruniverse/neural-colorization}{https://github.com/zeruniverse/neural-colorization}.} used 50,000 cross-dataset {\instahide} examples generated with CIFAR-10 and ImageNet ($k=6$). This level of sign recovery seems insufficient to allow the attack against {\mixup} (Section~\ref{sec:warmup}) to succeed, nor the other attacks discussed below. Nevertheless, researchers trying to break {\instahide} may want to use such a demasking GAN as a starting point.

\begin{figure}[t]
  \centering
  \subfloat[Original]{\includegraphics[width=0.18\linewidth]{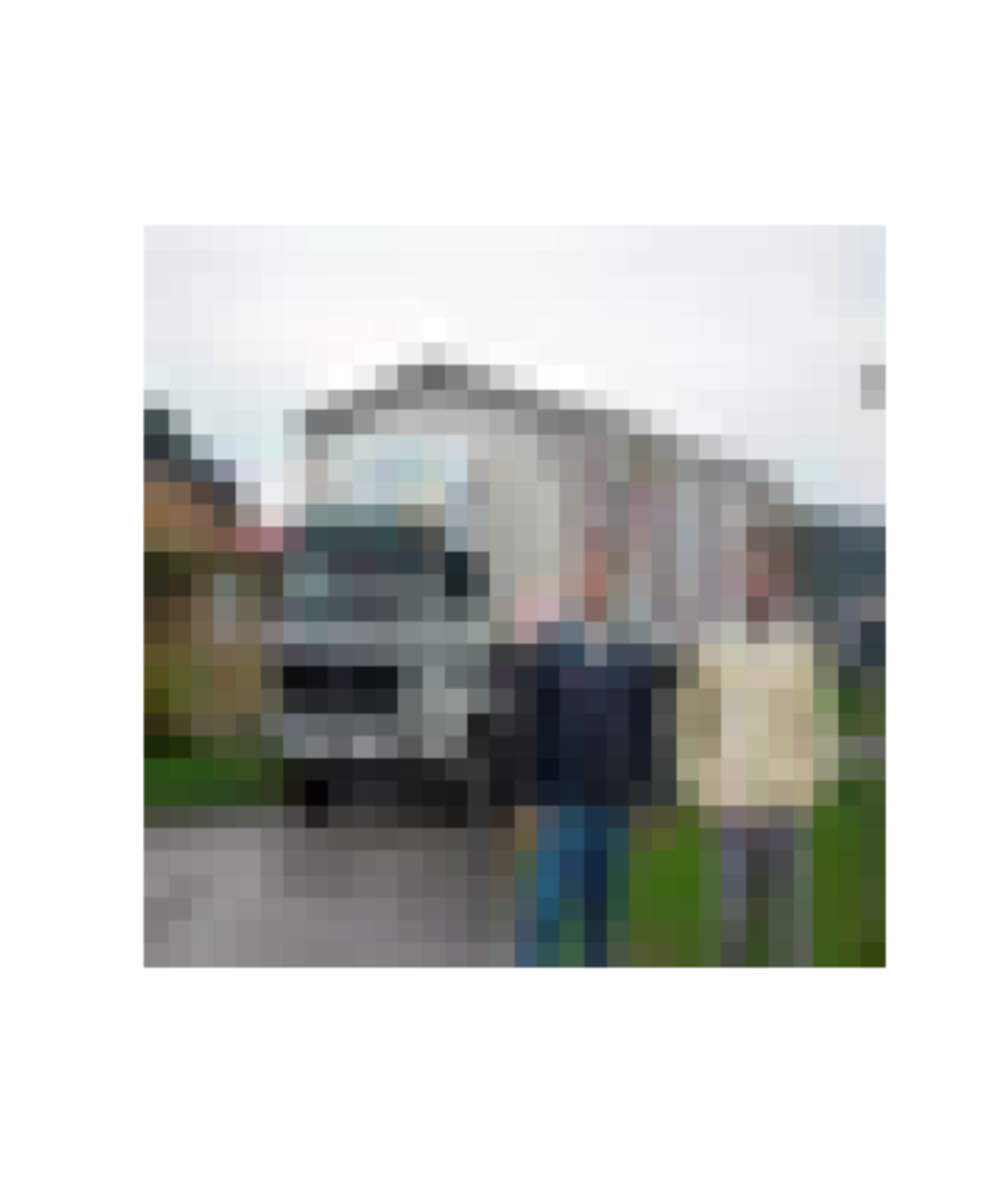}}\hspace{1.5mm}
  \subfloat[ After mixing]{\includegraphics[width=0.365\linewidth]{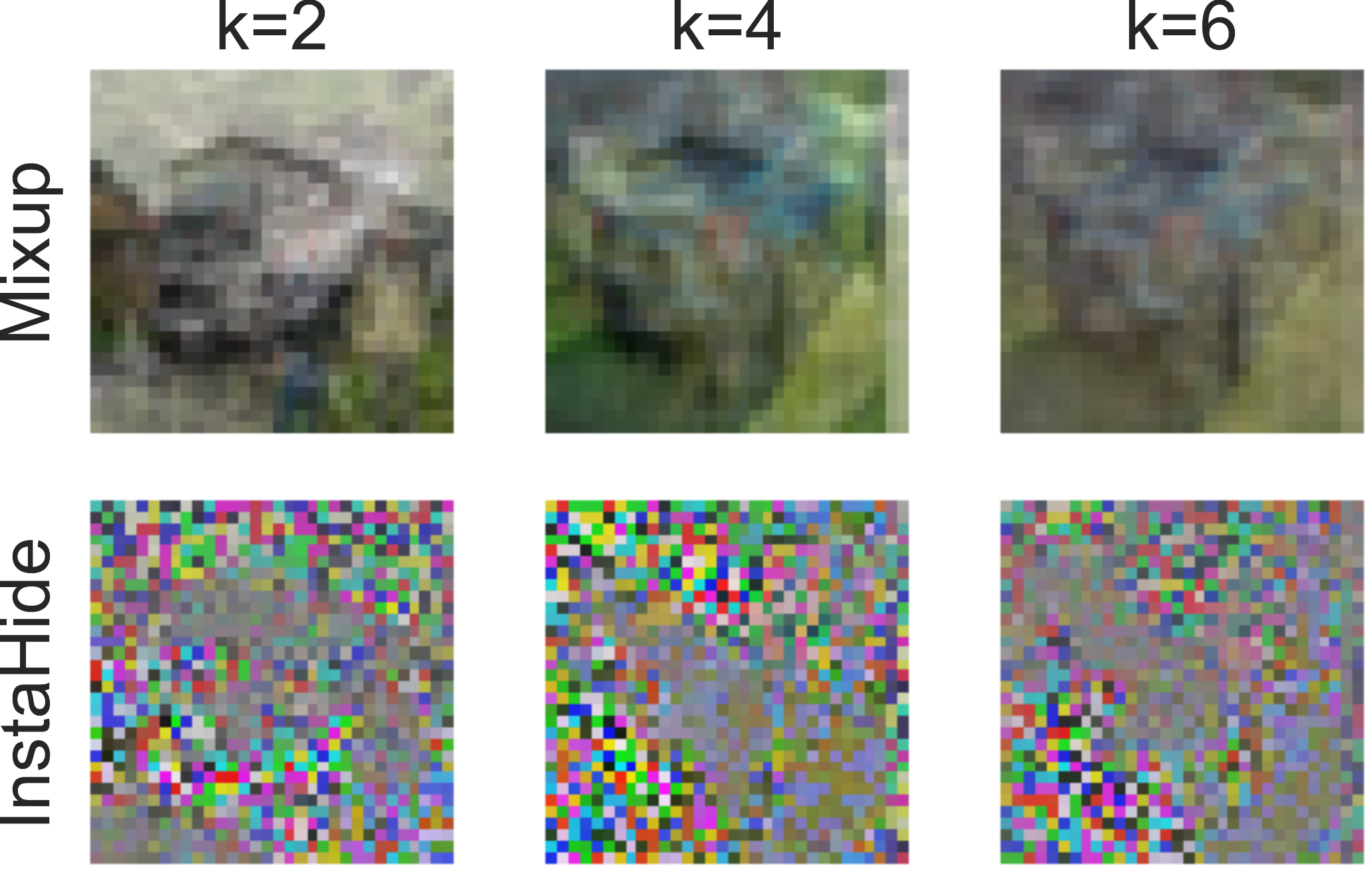}}
  \hspace{1.5mm}
  \subfloat[ Recovered by attack]{\includegraphics[width=0.365\linewidth]{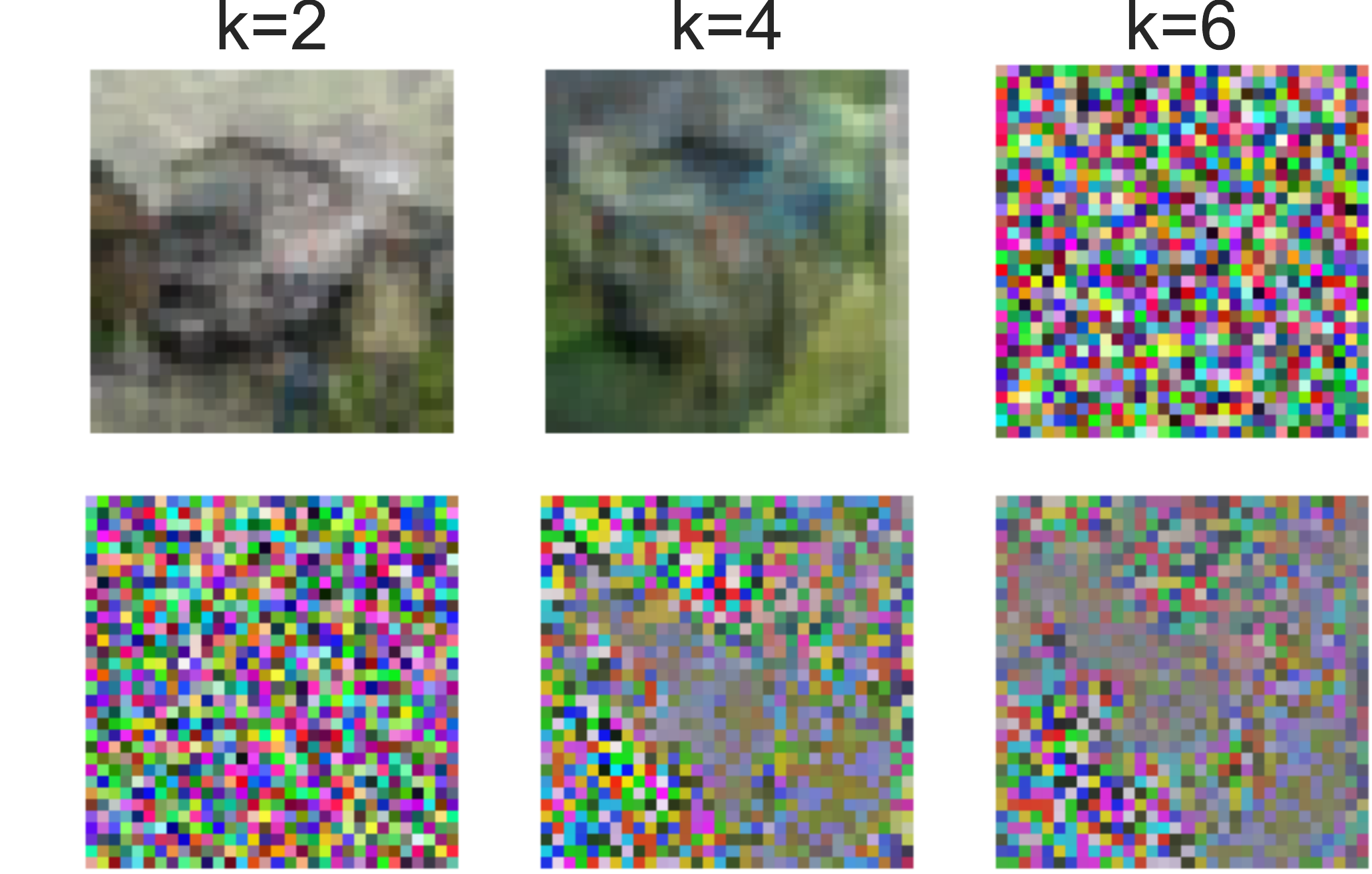}}
  \caption{Visualization of (a) the original image, (b) {\mixup} and {\instahide} images, and (c) images recovered  by the {\em gradients matching attack}. {\instahide} is more effective in hiding the image than {\mixup}.}\label{fig:mix_cifar}
\end{figure}

\begin{figure}[!t]
    \centering
    \includegraphics[width=0.8\linewidth]{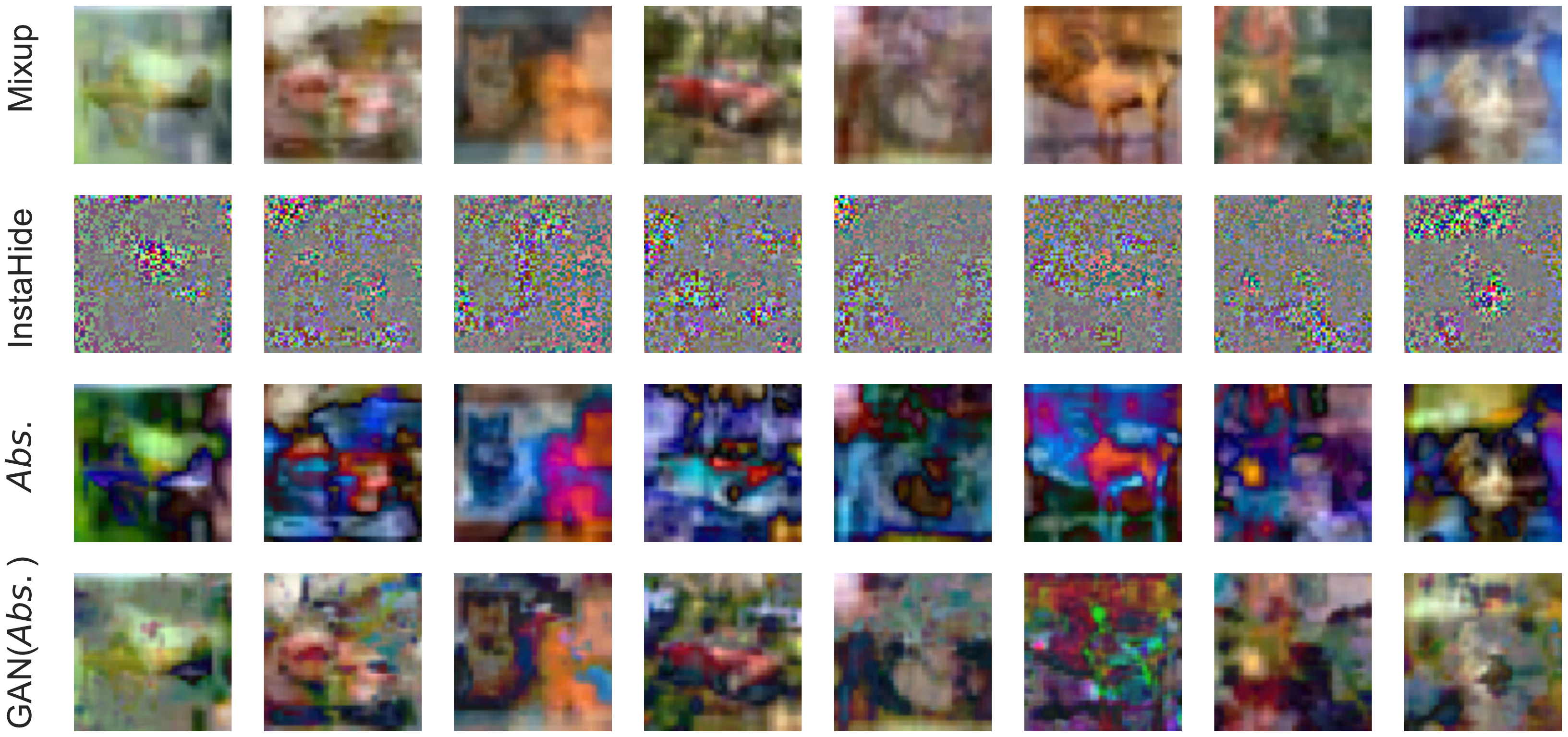}
    \caption{Undo sign-flipping using GAN. Rows: (1) Output  from {\mixup} algorithm (Algorithm~\ref{alg:mixup}, line~\ref{line:mix_x}); (2) Result of applying random mask $\sigma$ on previous row (Algorithm~\ref{alg:instahide}, line \ref{lin:compute_wt_x_i}). Note that randomly flipping sign of a pixel still preserves its absolute value. 
    (3) Taking coordinate-wise absolute value of previous row. (4) Output of demasking GAN on previous row. The attack corrects about $1/4$ of the flipped signs but this doesn't appear enough to allow further attacks that recover the encrypted image.
    }
    \label{fig:GAN_demask}
\end{figure}

\begin{figure}[!t]
    \centering
    \includegraphics[width=0.8\linewidth]{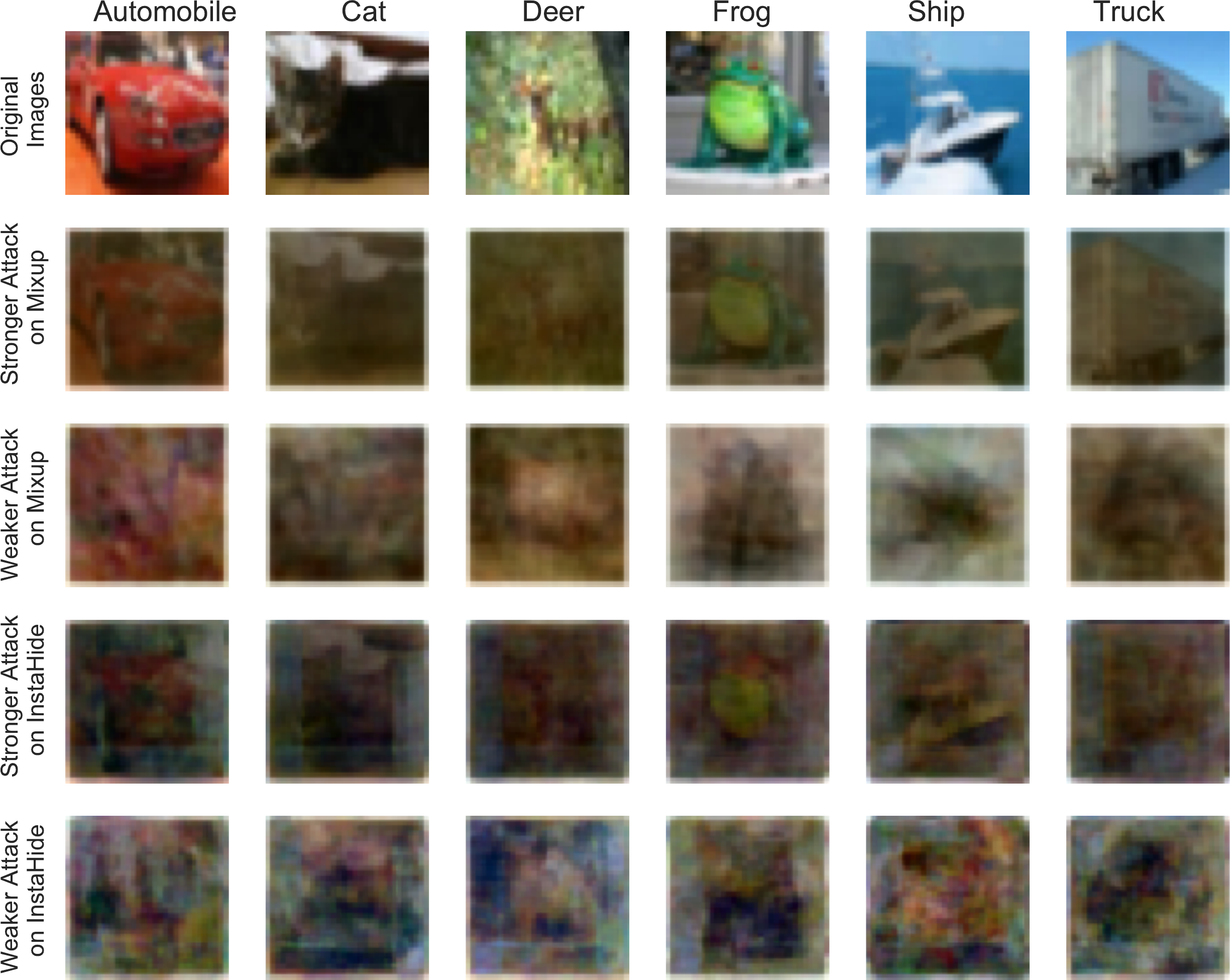}
    \caption{Average multiple encryptions (after demasking) of the same image to attack ($k=6$). In the stronger attack, the attacker already knows the set of multiple encryptions of the same image; in the weaker attack, the attacker has to identify that set first. Rows: (1) Original images. (2-3) Results of the stronger and the weaker attacks on {\mixup}. (4-5) Results of the stronger and the weaker attacks on {\instahide}. Note that the attacker has to demask {\instahide} encryptions using GAN before running attacks, and the information loss of this step makes it harder to attack {\instahide} than the plain {\mixup}.
    } 
    \label{fig:average_attack}
\end{figure}

\paragraph{Average multiple encryptions of the same image.}

We further test if different encryptions of the same image (after demasking) can be used to recover that hidden image by running the attack in  Section~\ref{sec:mixup_inside_attack}.

Assuming a public history of $n \times T$ encryptions, where $n$ is the size of the private training set, and $T$ is the number of epochs.
We consider a stronger and a weaker version of this attack. 
\begin{itemize}
    \item {\bf Stronger attack}: the attacker already {\em knows} the set of multiple encryptions of the same image $x$. He uses GAN to demask all encryptions in the set, and averages images in the demasked set to estimate $x$. 
    \item {\bf Weaker attack}: the attacker {\em does not know} which subset of the encryption history correspond to the same original image. To identify that subset, he firstly demasks all $n\times T$ encrytions in the history using GAN. With an arbitrary demasked encryption (from the history) for some unknown original image $x$, he runs similarity search to find top-$m$ closest images in $n\times T -1$ other demasked encryptions (which may also contain $x$), and averages these $m+1$ images to estimate $x$.
\end{itemize}

The stronger attack is conceivable if $n$ is very small (say a hospital only has 100 images), so via brute force the attacker can effectively have a small set of encryptions of the same image. However, in practice, $n$ is usually at least a few thousand.

For simplicity, we test with $n=50$ and $T=50$ (a larger $n$ will make the attack harder). We use the structural similarity index measure (SSIM)~\cite{wang2004image} as the similarity metric, and set $m$ to 5 after tuning. 

We also run this attack directly on {\mixup} for comparison. As shown in Figure~\ref{fig:average_attack}, if the original image is not flat (e.g. the ``deer''), the stronger attack may not work. For flat images (e.g. the ``truck'') or images with strong contrast (e.g. the ``automobile" and the ``frog''), the stronger attack is able to vaguely recover the original image. However, as previously suggested, the stronger attack is feasible only for a very small $n$.

Note that results here is an upper bound on privacy leakage  since we assume a perfect recovery of $\tilde{x}$ from the gradients. In real-life scenarios  this may not hold.

\begin{figure}[t]
    \centering
        \includegraphics[width=0.6\linewidth]{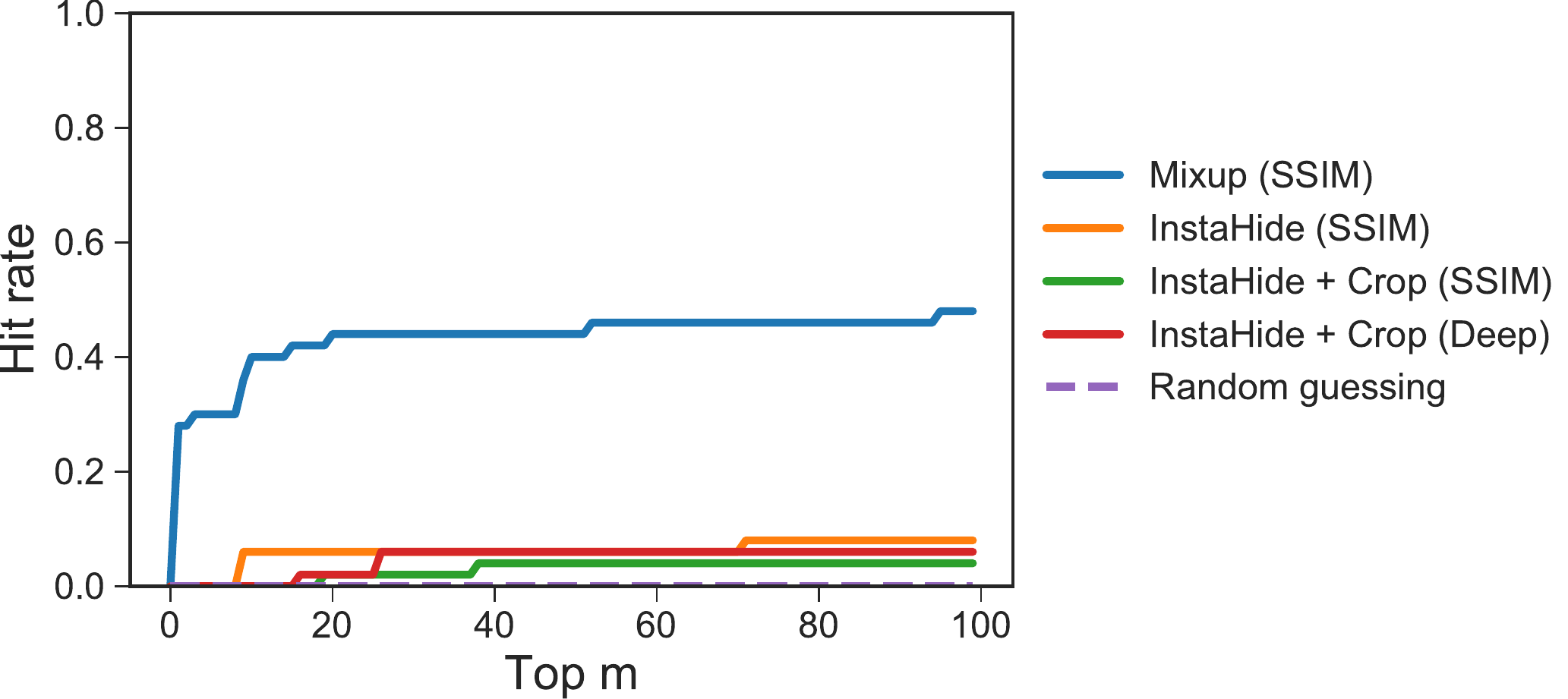}
    \caption{Averaged hit rate of uncovering public images for mixing among the top-$m$ answers returned by similarity search. Running this attack on {\instahide} requires GAN-demasking as the first step, and a wrongly demasked pixel will make the similarity score less reliable and yield lower hit rate. Mixing with random cropped patches of public images augments the public dataset and gives more security. It also disables similarity search using SSIM. Train a deep model to predict similarity score also does not give a promising hit rate. }
    \label{fig:public_attack}
\end{figure}

\paragraph{Uncover public images by similarity search.} We also run the attack in Section~\ref{sec:mixup_attack_public} after demasking {\instahide} encryptions using GAN, which tries to uncover the public images for mixing by running similarity search in the public dataset using the demasked encryption as the query.

We test with $k=6$: mix 2 private images from CIFAR-10 with 4 public images from a set of 10,000 ImageNet images (i.e. $N=10,000$). We consider the attack a `hit' if at least one public image for mixing is among the top-$m$ answers of the similarity search. The attacker uses SSIM as the default similarity metric for search. However, a traditional alignment-based similarity metric (e.g SSIM) would fail in {\instahide} schemes which use randomly cropped patches of public images for mixing (see Figure~\ref{fig:public_attack}), so in that case, the attacker trains a deep model (VGG~\cite{sz15} in our experiments) to predict the similarity score.

Note that to find the 4 correct public images for mixing, the  attack has to try all ${m \choose 4}$ combinations of the top-$m$ answers with different coefficients, and subtract the combined image from the demasked {\instahide} encryption to verify. Figure~\ref{fig:public_attack} reports the averaged hit rate of this attack on 50 different {\instahide} images.  As shown, even with a relatively small public dataset ($N=10,000$) and a large $m=\sqrt{N}$, the hit rate of this attack on {\instahide} (enhanced with random cropping) is around 0.05 (i.e. the attacker still has to try ${m \choose 4} = O(N^2)$ combinations to succeed with probability 0.05). Also,  this attack appears to become much more expensive with the public dataset being the whole ImageNet dataset ($N=1.4\times 10^7$) or random images on the Internet.

\section{{\instahide} Deployment: Best practice}
\label{sec:privacy_suggestion}

Based on our security analysis (sec~\ref{sec:privacy} and sec~\ref{sec:exp_privacy}), we suggest the following:

\begin{itemize}
    \item 
    Consider Inside-dataset {\instahide} only if the private dataset is very large and images have varied, complex patterns.  If the images in the private dataset have simple signal patterns or the dataset size is relatively small, consider using Cross-dataset {\instahide}.
        
    \item For Cross-dataset {\instahide}, use a very large  public dataset. Follow the preprocessing steps advocated in Sec~\ref{sec:cross_dataset} to randomly crop patches from each image in the public dataset and filter out ``flat'' patches.  
       
    \item Re-encrypt images in each epoch. This allows the benefits of greater data augmentation for deep learning and hinders attacks (as suggested in  Sec~\ref{sec:exp_privacy}).
    
    \item Since images are re-encrypted in each epoch, for best security (e.g, against gradient-matching attacks), each participant should perform a random re-batching so that batch gradients do not correspond to the same subset of underlying images.

    \item Choose $k=4,5,6$ for a good trade-off between accuracy and security.

    \item Set a conservative upper threshold for the coefficients in mixing (e.g. $0.65$ in our experiments).
\end{itemize}

\section{A Challenge Dataset} 
\label{sec:challenge_dataset}

To encourage readers to design stronger attacks, we  release a challenge dataset\footnote{ \href{https://github.com/Hazelsuko07/InstaHide_Challenge}{https://github.com/Hazelsuko07/InstaHide\_Challenge}.} of encrypted images generated by applying Cross-dataset {\instahide} with $k=6$ on some private image dataset and a preprocessed ImageNet as the public dataset. An attack is considered to succeed if it substantially recovers a significant fraction of original images.     %%%% Section 5. Experiments
                %%%% Section 6. InstaHide Deployment: Best practice
                %%%% Section 7. A Challenge Dataset
\section{Related Work}
\label{sec:related}

\paragraph{{\mixup}.}

See Section~\ref{sec:warmup}. {\mixup} can improve both generalization and adversarial robustness~\cite{vlbnmclb18}. It has also been adapted to various learning tasks, including semi-supervised data augmentation~\cite{bcgpor19}, unsupervised image synthesis~\cite{bhvlghbp19}, and adversarial defense at the inference stage~\cite{pxz19}. Recently, \cite{fwxmw19} combined {\mixup} with model aggregation to defend against inversion attack, and \cite{facemix19} proposed a novel method using {\mixup} for on-cloud privacy-preserving inference.

\paragraph{Differential privacy.} 

Differential privacy for deep learning involves controlling privacy leakage by adding noise to the learning pipeline. If the noise is drawn from certain distributions, say Gaussian or Laplace, it is possible to provide guarantees of privacy~\cite{dkmmn06,dr14}. Applying differential privacy techniques to distributed deep learning is non-trivial. Shokri and Shmatikov \cite{ss15} proposed a distributed learning scheme by directly adding noise to the shared gradients. However, the amount of privacy guaranteed drops with the number of training epochs and the size of shared parameters. DPSGD~\cite{acg+16} was proposed to dynamically keep track of privacy spending based on the composition theorem~\cite{d09}. However, it still leads to an accuracy drop of about $20\%$ on CIFAR-10 dataset. Also, to control privacy leakage, DPSGD has to start with a  model pre-trained using nonprivate labeled data, and then carefully fine-tunes a few layers using private data.

\paragraph{Privacy using cryptographic protocols.}  In distributed learning setting with multiple data participants, it is possible for the participants to jointly train a model over their private inputs by employing techniques like homomorphic encryption~\cite{g09,graepel2012ml,li2017multi} or secure multi-party computation (MPC)~\cite{yao82,b11,mpzy17,dglms19}. Recent work proposed to use cryptographic methods to secure federated learning by designing a secure gradients aggregation protocol~\cite{bonawitz2016practical} or encrypting gradients~\cite{ahwm17}.  These approaches  slow down the computation by orders of magnitude, and may also require special security environment setups.

\paragraph{Instance Hiding.} See Appendix~\ref{sec:hide_app}.

   %%%% Section 8. Related Work
\section{Discussions of Potential Attacks}\label{sec:potential_attacks}

We have received proposals of attacks since the release of early versions of this manuscript. We would like to thank these comments, which helped enhance the security of {\instahide}. We hereby summarize some of them and explain why {\instahide} in its current design is not vulnerable to them.

\paragraph{Tramèr's attack.} Florian Tramèr~\cite{f20} suggested an attack pipeline which 1) firstly uses GANs to undo the random one-time sign flips of encrypted images, and 2) then runs standard image-similarity search between the recovered image and images in the public dataset. 

We have evaluated this attack scenario in Section~\ref{sec:exp_privacy} (see `Demask using GAN' for the first step, and `Uncover  public  images  by  similarity  search' for the second step). As shown, the GAN demasking step corrects about 1/4 of the flipped signs (see Figure~\ref{fig:GAN_demask}), however this doesn't appear enough to allow further attacks (see Figure~\ref{fig:public_attack}). 

Several designs of {\instahide} could provide better practice of privacy under this attack:
1) setting a conservative upper threshold for the coefficients in mixing (e.g. 0.65 in our experiments) makes it harder to train the demasking GAN. 2) To alleviate the threat of image-similarity search for public image retrieval, we use randomly cropped patches from public images. This could enlarge the search space of pubic images by some large constant factor. We also suggest using a very large public dataset (e.g. ImageNet) if possible.

\paragraph{Braverman's attack.}
Mark Braverman~\cite{b20} proposed another possible attack to retrieve public images got mixed in an {\instahide} image. Given $\tilde{x} \in \R^d$, the {\instahide} image and the public dataset ${\cal D} \subset \R^d$, the attacker calculates for each $s \in \cal D$ the value $v_{s} = \langle \tilde{x}^2, s^2\rangle-\frac{1}{d}\|\tilde{x}\|_2^2 \|s\|_2^2$, and ranks them in descending order.  Braverman suggested that in theory, if images are drawn from Gaussian distributions, the public images got mixed will have higher ranks (i.e., there is a gap between $v_{s}$'s for $s$'s got mixed in  $\tilde{x}$ and $v_{s}$'s for $s$'s not got mixed in $\tilde{x}$). With that, the attacker may be able to shrink the candidate search space for public images by a large factor. A brief accompanying paper with theory and experiments for this attack is forthcoming.

Braverman's attack can be seen as similar in spirit to the attack on vanilla Mixup in Section~\ref{sec:mixup_attack_public}, with the difference that it needs images to behave like Gaussian vectors for higher moments, and not just inner products. In the  empirical study, we find 1) the gaussianity assumption is not so good for real images 2) the shrinkage of candidate search space by Braverman's attack is small for $k=4$ and almost disappears for $k =6$. 

To alleviate the risk of Bravermen's attack, we reemphasize our suggestions in Section~\ref{sec:privacy_suggestion}: 1) use a large public dataset and use random cropping as the preprocessing step. This will increase the original search space of public images; 2) use a larger $k$ (e.g. 6) if possible (3) maybe use a high quality GAN instead of the public dataset for mixing.  

\paragraph{Carlini et al.'s attack.} Carlini et al. \cite{carlini_attack} gave an attack to recover private images in the most vulnerable application of {\instahide} - when the {\instahide} encryptions are revealed to the attacker. The first step is to train a neural network on a public dataset for similarity annotation to infer whether a pair of {\instahide} encryptions contain the same private image. With the inferred similarities of all pairs of encryptions, the attacker then runs a combinatorial algorithm (cubic time in size of private dataset) to cluster all encryptions based on their  original private images, and finally uses a regression algorithm to recover the private images. 

Carlini et al.'s attack is able to give high-fidelity recoveries on our challenge dataset of 100 private images, but several limitations may prevent the current attack from working in a more realistic setting:
\begin{itemize}
    \item The current attack runs in time \textbf{cubic} in the dataset size, and it can’t directly attack an individual encryption. Running time was not an issue for attacking our challenge set, which consisted of 5,000 encrypted images derived from 100 images. But feasibility on larger datasets becomes challenging. %\Yang{Shall we provide a time estimation for larger datasets here?} %  The initial quadratic step takes them 0.1 GPU hours on the challenge set, but this should balloon to 2000+ GPU hours for even CIFAR10 — a modest dataset of 50k images. The math with cubic runtime and/or larger datasets is worse.
    \item The challenge dataset corresponded to an ambitious form of security, where the encrypted images themselves are released to the world. The more typical application is a Federated Learning scenario where the attacker observes gradients computed using the inputs. When {\instahide} is adopted in Federated Learning, the attacker only observes gradients computed on encrypted images. The attacks in this paper do not currently apply to that scenario. A possible attack to recover the original images may require recovering encrypted images from the gradients \cite{zlh19, geiping2020inverting} as the first step. 
\end{itemize}

\section{Conclusion}
\label{sec:conclude}
{\instahide} is a practical instance-hiding method for image data for private distributed deep learning.  

{\instahide} uses the {\mixup} method with a one-time secret key consisting of a pixel-wise random sign-flipping mask and samples from the same training dataset (Inside-dataset 
{\instahide})
or a large public dataset (Cross-dataset {\instahide}).  The proposed method can be easily plugged into any existing distributed learning pipeline.  It is very efficient and incurs minor reduction in accuracy. Maybe modifications of this idea can further alleviate the loss in accuracy that we observe, especially as $k$ increases.

We hope our analysis of {\instahide}'s security on worst-case vectors will motivate further theoretical study, including for average-case settings and for adversarial robustness. In Appendix~\ref{sec:app_phase_retrieval}, we suggest that although {\instahide} can be formulated as a phase retrieval problem, classical techniques have failed as attacks. 

We have tried statistical and computational attacks against {\instahide} without success. To encourage other researchers to try new attacks, we release a challenge dataset of encrypted images.     %%%% Section 9. Discussion of Potential Attacks
                %%%% Section 10. Conclusion

%\newpage
\section*{Acknowledgments}
This project is supported in part by Princeton University fellowship, Ma Huateng Foundation, Schmidt Foundation, Simons Foundation, NSF, DARPA/SRC, Google and Amazon AWS. Arora and Song were at the Institute for Advanced Study during this research. 

We would like to thank Amir Abboud, Josh Alman, Boaz Barak, and Hongyi Zhang for helpful discussions, and
Mark Braverman, Matthew Jagielski, Florian Tramèr, Nicholas Carlini and his team for suggesting  attacks.

\newpage
{%\small
\bibliographystyle{alpha}
\bibliography{ref}
}

\onecolumn
\appendix

\section*{Appendix}
The appendix is organized as follows: Appendix~\ref{sec:hide_app} reviews Instance hiding. Appendix~\ref{sec:attack_app} provides more details for two attacks on {\mixup} schemes in  Section~\ref{sec:warmup}. 
Appendix~\ref{sec:hard_app} discusses $k$-SUM, a well-known fine-grained complexity problem that is related to the worst-case security argument of {\instahide}. Appendix~\ref{sec:app_phase_retrieval} shows the connection between {\instahide} and phase retrieval. Finally, Appendix~\ref{sec:exp_app} provides experimental details.

\section{Instance Hiding}\label{sec:hide_app}

In the classical setting of instance hiding~\cite{afk87} in cryptography,  a computationally-limited Alice is trying to get more powerful computing services Bob$_1$ and Bob$_2$ to help her compute a function $f$ on input $x$, without revealing $x$. The simplest case is that $f$ is a linear function  over a finite field (e.g., integers modulo a prime number). Then  Alice can pick a random number $r$ and ``hide" the input $x$ by asking Bob$_1$ for $f(x + r)$ and Bob$_2$ for $f(r)$, and then infer $f(x)$ from the two answers. When all arithmetic is done modulo a prime, it can be shown that  neither Bob$_1$ nor Bob$_2$ individually learns anything (information-theoretically speaking) about $x$. This scheme can also be applied to compute polynomials instead of linear functions. 

{\instahide} is inspired by the special case where there is a single computational agent Bob$_1$. Alice has to use random values $r$ such that she  knows $f(r)$, and simply ask Bob$_1$ to supply $f(x+r)$. 
Note that such a random value $r$ would be {\em use-once} (also called {\em nonce} in cryptography); it would not be reused when trying to evaluate a different input. 
%%%% Section A. discuss about InstaHide

\section{Attacks on {\mixup}}\label{sec:attack_app}

Here we provide more details for the attacks discussed in Section~\ref{sec:warmup}.
\paragraph{Notations.} 

We use $\langle u,v \rangle$ to denote the inner product between vector $u$ and $v$. We use $\mathbf{0}$ to denote the zero vector. For a vector $x$, we use $\| x \|_2$ to denote its $\ell_2$ norm. For a positive integer $n$, we use $[n]$ to denote set $\{1,2,\cdots,n\}$. We use $\Pr[]$ to denote the probability, use $\E[]$ to denote the expectation. We use ${\cal N}(\mu,\sigma^2)$ to denote Gaussian distribution. For any two vectors $x,y$, we use $\langle x, y \rangle$ to denote the inner product.

\subsection{Don't mix up the same image multiple times}
Let us continue with the vision task. This attack argues that given a pair of {\mixup} images $\tilde{x}_1$ and $\tilde{x}_2$, by simply checking $\langle \tilde{x}_1, \tilde{x}_2\rangle$, the attacker can determine with high probability whether $\tilde{x}_1$ and $\tilde{x}_2$ are derived from the same image. We show this by a simple case with $k = 2$ in Theorem~\ref{thm:multi_mix}, where $k$ is the number of images used to generate a {\mixup} sample.

\begin{theorem}\label{thm:multi_mix}
Let $\mathcal{X} \subset \R^d$ with $|{\cal X}| = n$ and $\forall x \in \mathcal{X}, $ we sample $x \sim {\cal N}(0, \sigma^2 I)$ where $\sigma^2 = 1/d$. Let $\mathcal{X}_1$, $\mathcal{X}_2$ and ${\mathcal{X}}_3$ denote three disjoint sets such that ${\mathcal{X}}_1 \cup {\mathcal{X}}_2 \cup {\mathcal{X}}_3 = \mathcal{X}$, with probability $1-\delta$, we have: \\
{\bf Part 1.}  For $x_1\in {\cal X}_1, x_2, x_2' \in {\cal X}_2, x_3 \in {\cal X}_3$, $\langle {x}_3+x_1,{x}_2+x_2' \rangle \leq c_1 \cdot 4\sigma^2 \sqrt{d} \log^2(d n /\delta)$. \\
{\bf Part 2.}  For $x_1 \in {\cal X}_1, x_2 \in {\cal X}_2, x_3 \in {\cal X}_3$, $\langle {x}_3 + {x}_1, {x}_3 + {x}_2 \rangle \geq (c_2 \cdot \sqrt{\log(n/\delta)}+d) \sigma^2 -  c_1 \cdot 3\sigma^2 \sqrt{d} \log^2(d n /\delta)$.
where $c_1,c_2> 0$ are two universal constants.
\end{theorem}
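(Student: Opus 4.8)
The plan is to expand both inner products by bilinearity, separating the single ``diagonal'' self-inner-product from the ``cross'' terms formed by genuinely distinct (hence independent) Gaussian vectors, and then to control each type of term by a separate concentration argument followed by a union bound.

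First I would handle the cross terms, which are the only terms in Part 1 and account for three of the four summands in Part 2. Expanding, $\langle x_3+x_1, x_2+x_2'\rangle = \langle x_3,x_2\rangle + \langle x_3,x_2'\rangle + \langle x_1,x_2\rangle + \langle x_1,x_2'\rangle$, a sum of four inner products of independent pairs, and similarly $\langle x_3+x_1, x_3+x_2\rangle = \|x_3\|_2^2 + \langle x_3,x_2\rangle + \langle x_1,x_3\rangle + \langle x_1,x_2\rangle$. For two independent $u,v \sim \mathcal{N}(0,\sigma^2 I)$ I would bound $|\langle u,v\rangle|$ by conditioning on $v$: then $\langle u,v\rangle \mid v \sim \mathcal{N}(0,\sigma^2\|v\|_2^2)$, so a Gaussian tail bound gives $|\langle u,v\rangle| \lesssim \sigma\|v\|_2\sqrt{\log(1/\delta')}$, while an upper-tail $\chi^2$ bound on $\|v\|_2^2 = \sigma^2 Z$, $Z\sim\chi_d^2$, gives $\|v\|_2 \lesssim \sigma\sqrt{d}$ with high probability. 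Combining yields $|\langle u,v\rangle| \le c_1\sigma^2\sqrt{d}\,\log^2(dn/\delta)$ for a suitable constant; I would then union-bound this event over all $O(n^2)$ ordered pairs of vectors in $\mathcal{X}$, which is exactly where the $\log(dn/\delta)$ dependence enters. Since there are four (resp.\ three) such terms, this already delivers the $4\sigma^2\sqrt{d}\,\mathrm{polylog}$ factor of Part 1 and the subtracted $3\sigma^2\sqrt{d}\,\mathrm{polylog}$ correction of Part 2.

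Next I would control the single diagonal term in Part 2. Writing $\|x_3\|_2^2 = \sigma^2 Z$ with $Z \sim \chi_d^2$, a standard lower-tail (Laurent--Massart) bound gives $\|x_3\|_2^2 \ge (d - O(\sqrt{d\log(n/\delta)}))\sigma^2$ with probability $1-\delta/n$, and a union bound over the $n$ vectors makes this simultaneous. Adding the three cross-term bounds then yields $\langle x_3+x_1, x_3+x_2\rangle \ge \|x_3\|_2^2 - 3c_1\sigma^2\sqrt{d}\log^2(dn/\delta)$, which matches the claimed lower bound up to the precise form of the lower-order $\chi^2$ correction absorbed into the constant $c_2$. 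A final union bound over the $O(n^4)$ relevant tuples costs only another logarithmic factor, already subsumed in the $\log(dn/\delta)$ terms, and gives the uniform ``with probability $1-\delta$'' statement.

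The main obstacle is the cross-term concentration: $\langle u,v\rangle = \sum_j u_jv_j$ is a sum of products of two independent Gaussians, hence only subexponential rather than subgaussian, so a naive coordinate-wise bound is too weak. The conditioning trick above sidesteps this cleanly (alternatively one can use the polarization identity $u_jv_j = \tfrac14[(u_j+v_j)^2-(u_j-v_j)^2]$ together with two $\chi^2$ bounds, or a Bernstein/Hanson--Wright inequality). A minor point worth flagging is that this argument in fact produces a $\sqrt{\log(dn/\delta)}$ factor rather than $\log^2(dn/\delta)$, so the stated bound is comfortably loose; retaining the cruder $\log^2$ power only simplifies the union-bound bookkeeping and leaves ample room for the implicit constants.
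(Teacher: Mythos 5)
Your proposal is correct and follows the same skeleton as the paper's proof: expand both inner products by bilinearity, bound each of the cross terms by $c_1\sigma^2\sqrt{d}\log^2(dn/\delta)$ with failure probability $\delta/n^2$ per pair, lower-bound the single diagonal term $\|x_3\|_2^2$ by a chi-square tail (Lemma~\ref{lem:chi_square_tail}), and finish with union bounds. The one substantive difference is how the cross-term concentration is proved. The paper first establishes a fixed-vector bound $|\langle u,e\rangle|\lesssim \sigma_1\|e\|_2\sqrt{\log(d/\delta)}+\sigma_1\|e\|_\infty\log^{1.5}(d/\delta)$ via Bernstein's inequality (Lemmas~\ref{lem:bernstein} and~\ref{lem:concentration_of_inner_product_guassian}), then combines it with chi-square control of $\|e\|_2$ and a separate bound on $\|e\|_\infty$ (Lemma~\ref{lem:concentration_of_inner_product_two_guassian}); your observation that conditionally $\langle u,v\rangle\mid v\sim{\cal N}(0,\sigma^2\|v\|_2^2)$ is exact replaces Bernstein and all the $\ell_\infty$ bookkeeping, which is why you obtain a $\sqrt{\log}$ factor where the paper settles for $\log^2$ --- your closing remark that the stated bound is comfortably loose is accurate. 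One further point in your favor: in Part 2 the paper's proof asserts $\Pr\bigl[\|x_3\|_2^2>(c_2\sqrt{\log(n/\delta)}+d)\sigma^2\bigr]\geq 1-\delta/n^2$, which applies the chi-square tail in the wrong direction (an upper-tail event cannot hold with high probability), whereas your use of the lower tail, $\|x_3\|_2^2\geq\bigl(d-O(\sqrt{d\log(n/\delta)})\bigr)\sigma^2$, is the correct step; the resulting deficit relative to the theorem's literal statement is then absorbed into the dominant $c_1\sqrt{d}\log^2(dn/\delta)$ correction exactly as you absorb it, so the conclusion survives with adjusted constants.
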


\begin{proof}

{\bf Part 1.} 
First, we can expand $\langle {x}_3+x_1,{x}_2+x_2' \rangle$,
\begin{align*}
    \langle {x}_3+x_1,{x}_2+x_2' \rangle &= \langle {x}_3, {x}_2 \rangle + \langle {x}_1, {x}_2 \rangle + \langle {x}_3, {x}_2' \rangle + \langle {x}_1, {x}_2' \rangle 
\end{align*}

For each fixed $u \in \{ x_3, x_1 \}$ and each fixed $v \in \{ x_2, x_2' \}$, using Lemma~\ref{lem:concentration_of_inner_product_two_guassian}, we have
\begin{align*}
    \Pr \Big[ |\langle u , v \rangle | \geq c_1 \cdot \sigma^2 \sqrt{d} \log^2(d n /\delta) \Big] \leq \delta/n^2.
\end{align*}
where $c_1>1$ is some sufficiently large constant.

Since $x_1, x_2, x_2, x_3$ are independent random Gaussian vectors, taking a union bound over all pairs of $u$ and $v$, we have Lemma~\ref{lem:concentration_of_inner_product_two_guassian}, we have
\begin{align}
    \Pr[|\langle {x}_3+x_1,{x}_2+x_2' \rangle | \geq c_1 \cdot 4\sigma^2 \sqrt{d} \log^2(d n /\delta) \Big] \leq 4\delta/n^2.
\end{align}

{\bf Part 2.} 

We can lower bound $|\langle {x}_3+x_1,{x}_3+x_2 \rangle|$ in the following sense,
\begin{align*}
    |\langle {x}_3+x_1,{x}_3+x_2 \rangle| 
    = & ~ |\langle x_3, x_3 \rangle + \langle x_3, x_2 \rangle + \langle x_1, x_3 \rangle + \langle x_1, x_2 \rangle| \\
    \geq & ~ |\langle x_3, x_3 \rangle| - |\langle x_3, x_2 \rangle + \langle x_1, x_3 \rangle + \langle x_1, x_2 \rangle|
\end{align*}

For a fixed $x_3$, we can lower bound $\|x_3\|_2^2$ with Lemma~\ref{lem:chi_square_tail},
\begin{align*}
    \Pr \Big[ \| x_3 \|_2^2 > ( c_2 \cdot \sqrt{\log(n/\delta)}+d) \sigma^2 \Big] \geq 1 -\delta  / n^2 .
\end{align*}

Since $x_1, x_2, x_3$ are independent random Gaussian vectors, using Lemma~\ref{lem:concentration_of_inner_product_two_guassian}, we have
\begin{align}
    \Pr[|\langle x_3, x_2 \rangle + \langle x_1, x_3 \rangle + \langle x_1, x_2 \rangle| \geq c_1 \cdot 3\sigma^2 \sqrt{d} \log^2(d n /\delta) \Big] \leq 3\delta/n^2.
\end{align}
where $c_1 > 1$ is some sufficiently large constant.

Thus, for a fixed $x_3$, we have
\begin{align*}
     |\langle {x}_3+x_1,{x}_3+x_2 \rangle| \geq  ( c_2 \cdot \sqrt{\log(n/\delta)}+d) \sigma^2 -  c_1 \cdot 3\sigma^2 \sqrt{d} \log^2(d n /\delta)
\end{align*}
holds with probability $1-\delta/n^2-3\delta/n^2 = 1 - 4\delta /n^2$.
\end{proof}

\begin{corollary}\label{corollary:no_multi_x}
 Let $\mathcal{X} \subset \R^d$ with $|{\cal X}| = n$ and $\forall x \in \mathcal{X}, $ we sample $x \sim {\cal N}(0, \sigma^2 I)$ where $\sigma^2 = 1/d$. Let $\mathcal{X}_1$, $\mathcal{X}_2$ and ${\mathcal{X}}_3$ denote three disjoint sets such that ${\mathcal{X}}_1 \cup {\mathcal{X}}_2 \cup {\mathcal{X}}_3 = \mathcal{X}$.\\
 For any $\beta > 1$, if $ (2\beta)^{-1} \cdot \sqrt{d} \log^2 (nd/\delta) \geq 4$, then with probability $1-\delta$ , we have :\\ for $x_1 \in \mathcal{X}_1, x_2, x_2' \in \mathcal{X}_2, x_3 \in \mathcal{X}_3$,
 \begin{align*}
    |\langle x_3+x_1, x_3+x_2 \rangle| \geq \beta \cdot |\langle x_3+x_1, x_2+x_2' \rangle |.
 \end{align*}
 \end{corollary}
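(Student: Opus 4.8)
The plan is to read this off directly from the two halves of Theorem~\ref{thm:multi_mix}: the left-hand quantity $|\langle x_3+x_1, x_3+x_2\rangle|$ is exactly the ``shared image $x_3$'' inner product controlled from below by Part 2, while the right-hand quantity $|\langle x_3+x_1, x_2+x_2'\rangle|$ is the ``disjoint'' inner product controlled from above by Part 1. So the whole corollary amounts to checking that, under the stated hypothesis on $d$, the Part-2 lower bound exceeds $\beta$ times the Part-1 upper bound.

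Concretely, first I would invoke Part 1 to get $|\langle x_3+x_1, x_2+x_2'\rangle| \le U$ with $U := 4c_1\sigma^2\sqrt{d}\log^2(dn/\delta)$, and Part 2 to get $|\langle x_3+x_1, x_3+x_2\rangle| \ge L$ with $L := (c_2\sqrt{\log(n/\delta)}+d)\sigma^2 - 3c_1\sigma^2\sqrt{d}\log^2(dn/\delta)$. Substituting $\sigma^2 = 1/d$ makes the leading term of $L$ equal to $d\sigma^2 = 1$, whereas every error term (the subtracted term in $L$ and all of $U$) is of order $\log^2(dn/\delta)/\sqrt{d}$. The target inequality $L \ge \beta U$ then reduces, after discarding the nonnegative $c_2\sqrt{\log(n/\delta)}\,\sigma^2$ contribution (which only helps), to $d \ge (3+4\beta)c_1\sqrt{d}\log^2(dn/\delta)$, i.e. $\sqrt{d} \ge (3+4\beta)c_1\log^2(dn/\delta)$. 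Since $\beta>1$ gives $3+4\beta \le 8\beta$, this is implied by the quantitative hypothesis relating $\sqrt{d}$, $\beta$, and $\log^2(nd/\delta)$, which is precisely the role that assumption plays.

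The one genuine piece of bookkeeping is the union bound. Parts 1 and 2 are stated for \emph{fixed} tuples and already fold in a union bound over the constituent inner products, with failure probability $O(\delta/n^2)$. To upgrade to a statement holding \emph{simultaneously} for all $x_1\in\mathcal{X}_1$, $x_2,x_2'\in\mathcal{X}_2$, $x_3\in\mathcal{X}_3$ --- at most $O(n^4)$ tuples --- I would apply the theorem with $\delta$ rescaled by a fixed polynomial factor in $n$, so that each tuple fails with probability $O(\delta/n^4)$ and the overall failure probability is at most $\delta$ after the final union bound. Rescaling $\delta$ only inflates the logarithmic factors by a constant, which is absorbed into $c_1,c_2$ and changes nothing essential.

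I expect the main obstacle to be purely one of constant-matching and threshold bookkeeping rather than a conceptual difficulty: the only thing that can go wrong is the $O(\sigma^2\sqrt{d}\log^2)$ cross terms coming from $\langle x_i, x_j\rangle$ swamping the $d\sigma^2 = 1$ ``signal'' from $\|x_3\|_2^2$. Ensuring they do not requires exactly the lower bound on $d$ encoded in the hypothesis, and verifying that the hypothesis' numerical threshold ($\ge 4$ after dividing by $2\beta$) lines up with the factor $(3+4\beta)c_1$ above is where I would be most careful.
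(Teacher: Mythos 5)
Your proposal is correct and follows essentially the same route as the paper: the paper likewise takes the ratio of the Part-2 lower bound to the Part-1 upper bound of Theorem~\ref{thm:multi_mix}, substitutes $\sigma^2 = 1/d$ so the signal term is $1$ and all cross terms are $O(\log^2(nd/\delta)/\sqrt{d})$, and uses the hypothesis to conclude the ratio is at least $2\beta - 1 \geq \beta$. Your extra care about the union bound over tuples and about the constant $c_1$ (which the paper's fourth step silently drops) is diligence the paper's own proof glosses over, not a different argument.
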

\begin{proof}

With Theorem~\ref{thm:multi_mix}, we have with probability $1-\delta$, we have : 
\begin{align*}
    \frac{|\langle x_3+x_1, x_3+x_2 \rangle|}{|\langle x_3+x_1, x_2+x_2' \rangle|}  
    \geq & ~ \frac{(c_2 \cdot \sqrt{\log(n/\delta)}+d) \sigma^2 -  c_1 \cdot 3\sigma^2 \sqrt{d} \log^2(d n /\delta)}{c_1 \cdot 4 \sigma^2 \sqrt{d} \log^2 (nd/\delta)}\\
    = & ~ \frac{c_2 \cdot \sqrt{\log(n/\delta)}+d - c_1 \cdot 3  \sqrt{d} \log^2 (nd/\delta)}{c_1 \cdot 4  \sqrt{d} \log^2 (nd/\delta)}\\
    = & ~ \frac{1 + c_2 / (c_1\cdot d) \cdot \sqrt{\log(n/\delta)} - 3 / \sqrt{d} \cdot \log^2 (nd/\delta)}{4 /  \sqrt{d} \cdot \log^2 (nd/\delta)} \\
    \geq & ~ \frac{1  - 3 / \sqrt{d} \cdot \log^2 (nd/\delta)}{4 /  \sqrt{d} \cdot \log^2 (nd/\delta)} 
    \geq ~ \frac{1 - 1/(2\beta) }{ 1/(2\beta) } 
    =  ~ 2 \beta -1 
    \geq  ~ \beta,
\end{align*}
where the forth step follows from choice $c_1$ and $c_2$, the fifth step follows from assumption in Lemma statement, and the last step follows from $\beta > 1$.

Thus, we complete the proof.
\end{proof}

\subsection{Attacks that run in \texorpdfstring{$|\mathcal{X}|$}{} time}
This attack says that, if a cross-dataset {\mixup} sample  $\tilde{x}$ is generated by mixing 1 sample from a privacy-sensitive original dataset and $k-1$ samples from a public dataset (say ImageNet \cite{imagenet09}), then the attacker can crack the $k-1$ samples from the public dataset by simply checking the inner product between $\tilde{x}$ and all images in the public dataset. We show this formally in Theorem~\ref{thm:attack_in_X}.

\begin{theorem}\label{thm:attack_in_X}
 Let $\mathcal{X} \subset \R^d$ with $|{\cal X}| = n$ and $\forall x \in \mathcal{X}, $ we sample $x \sim {\cal N}(0, \sigma^2 I)$. Let $\tilde{x} = \sum_{i=1}^k x_i$, where $x_i \in \mathcal{X}$, with probability $1-\delta$, we have:\\
{\bf Part 1.} For all $t' \in [n] \backslash [k]$, $\langle \tilde{x}, x_{t'} \rangle \leq c_1 \cdot k \sigma^2 \sqrt{d} \log^2 (nd/\delta)$.\\
{\bf Part 2.} For all $t \in [k]$, $\langle \tilde{x}, x_t \rangle \geq (c_2 \cdot \sqrt{\log(n/\delta)}+d)\sigma^2 - c_1 \cdot (k-1) \sigma^2 \sqrt{d} \log^2 (nd/\delta)$. \\
where $c_1,c_2> 0$ are two universal constants.
\end{theorem}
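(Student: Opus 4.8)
The plan is to imitate the structure of the proof of Theorem~\ref{thm:multi_mix}, generalizing from a pair of mixed images to a sum of $k$ independent Gaussian vectors. The single phenomenon driving both parts is the separation between \emph{diagonal} terms $\langle x_t, x_t\rangle = \|x_t\|_2^2 \approx d\sigma^2$ and \emph{off-diagonal} terms $\langle x_i, x_j\rangle$ for $i \ne j$, which by Lemma~\ref{lem:concentration_of_inner_product_two_guassian} concentrate at the much smaller scale $\sigma^2\sqrt{d}\log^2(nd/\delta)$. Throughout I would fix a single per-pair failure probability of order $\delta/n^2$, so that a closing union bound over the $O(nk) = O(n^2)$ inner products I examine costs only $O(\delta)$, which I absorb into $\delta$ by adjusting the universal constants $c_1, c_2$.

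For \textbf{Part 1}, I would first expand $\langle \tilde{x}, x_{t'}\rangle = \sum_{i=1}^k \langle x_i, x_{t'}\rangle$. Since $t' \in [n]\setminus[k]$, the vector $x_{t'}$ is independent of every summand $x_i$, so each of the $k$ terms is an inner product of two independent $\mathcal{N}(0,\sigma^2 I)$ vectors. Applying Lemma~\ref{lem:concentration_of_inner_product_two_guassian} termwise and taking a union bound over the $k$ terms gives $|\langle \tilde{x}, x_{t'}\rangle| \le c_1 k\sigma^2\sqrt{d}\log^2(nd/\delta)$ except with probability $k\delta/n^2$; a further union bound over the at most $n$ indices $t'$ yields the claimed bound.

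For \textbf{Part 2}, I would expand $\langle \tilde{x}, x_t\rangle = \|x_t\|_2^2 + \sum_{i \in [k]\setminus\{t\}}\langle x_i, x_t\rangle$ and bound the two pieces separately. The signal term $\|x_t\|_2^2$ is $\sigma^2$ times a $\chi^2_d$ variable, so Lemma~\ref{lem:chi_square_tail} gives $\|x_t\|_2^2 \ge (d + c_2\sqrt{\log(n/\delta)})\sigma^2$ with probability $1-\delta/n^2$. The remaining $k-1$ cross terms are again inner products of independent Gaussians, so by Lemma~\ref{lem:concentration_of_inner_product_two_guassian} and a union bound their sum is at most $c_1(k-1)\sigma^2\sqrt{d}\log^2(nd/\delta)$ in absolute value except with probability $(k-1)\delta/n^2$. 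Subtracting the noise bound from the signal bound produces exactly the stated inequality, and a final union bound over $t \in [k]$ finishes this part.

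I do not expect a genuine obstacle here: all the probabilistic content is already packaged in the two concentration lemmas, and the only care required is bookkeeping the union bound so that scanning $\Theta(n)$ non-summand indices in Part~1 does not inflate the failure probability — which is guaranteed because each pairwise event is controlled at the $\delta/n^2$ level, and the total event count is $O(n^2)$. The conceptual point, rather than the difficulty, is that the $\Theta(d\sigma^2)$ diagonal term dominates the $\Theta(\sigma^2\sqrt{d}\log^2(nd/\delta))$ off-diagonal noise once $d$ is large relative to $k$ (concretely, $\sqrt{d} \gg k\log^2(nd/\delta)$, i.e.\ $d \gg k^2\log^4(nd/\delta)$); this gap between the Part~1 and Part~2 estimates is precisely what makes the inner-product test separate summands from non-summands and renders the attack effective.
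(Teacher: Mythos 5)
Your proposal matches the paper's proof essentially line for line: the same expansion $\langle \tilde{x}, x_{t'}\rangle = \sum_{i=1}^k \langle x_i, x_{t'}\rangle$ with termwise application of Lemma~\ref{lem:concentration_of_inner_product_two_guassian} for Part~1, the same decomposition of $\langle \tilde{x}, x_t\rangle$ into the diagonal term $\|x_t\|_2^2$ (controlled via the chi-square bound of Lemma~\ref{lem:chi_square_tail}) plus $k-1$ cross terms for Part~2, and the same $\delta/n^2$-per-event union-bound bookkeeping over $O(n^2)$ inner products. Note only that you reproduce verbatim the paper's use of the chi-square tail as a \emph{lower} bound of the form $(d + c_2\sqrt{\log(n/\delta)})\sigma^2$ on $\|x_t\|_2^2$ — this direction actually follows from the lower-tail inequality only with the deviation term subtracted rather than added, a quirk your write-up inherits from the paper rather than introduces.
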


We remark that the proof of Theorem~\ref{thm:attack_in_X} is similar to the proof of Theorem~\ref{thm:multi_mix}.
\begin{proof}

{\bf Part 1.} 
We can rewrite $\langle \tilde{x},x_{t'} \rangle$ as follows:
\begin{align*}
    \langle \tilde{x}, x_{t'} \rangle &= \sum_{i=1}^k \langle {x}_i, {x}_{t'} \rangle,
\end{align*}
which implies
\begin{align*}
    |\langle \tilde{x}, x_{t'} \rangle| \leq \sum_{i=1}^k | \langle {x}_i, {x}_{t'} \rangle |.
\end{align*}

For each fixed $i \in [k]$ and each fixed $t'\notin [k]$, using Lemma~\ref{lem:concentration_of_inner_product_two_guassian}, we have
\begin{align*}
    \Pr \Big[ |\langle x_i , x_{t'} \rangle | \geq c_1 \cdot \sigma^2 \sqrt{d} \log^2(d n /\delta) \Big] \leq \delta/n^2.
\end{align*}
where $c_1>1$ is some sufficiently large constant.

Since $x_1, x_2, \cdots, x_k, x_{t'}$ are independent random Gaussian vectors, taking a union over all $i \in [k]$, we have
\begin{align*}
    \Pr \Big[ |\langle \tilde{x} , x_{t'} \rangle | \geq c_1 \cdot k\sigma^2 \sqrt{d} \log^2(d n /\delta) \Big] \leq \delta k/n^2 .
\end{align*}

Taking a union bound over all $t' \in [n] \backslash [k]$, we have
\begin{align*}
    \Pr \Big[ \forall t' \in [n] \backslash [k], ~~~ |\langle \tilde{x} , x_{t'} \rangle | \geq c_1 \cdot k\sigma^2 \sqrt{d} \log^2(d n /\delta) \Big] \leq \delta (n-k)k/n^2 .
\end{align*}

{\bf Part 2.} 

We can lower bound $|\langle \tilde{x},x_t \rangle |$ as follows:
\begin{align*}
    |\langle \tilde{x},x_t \rangle |
    = & ~ 
    \Big| \langle {x}_t, {x}_t \rangle + \sum_{i \in [k] \backslash \{t\}} \langle {x}_i, {x}_t \rangle \Big| 
    \geq  |\langle {x}_t, {x}_t \rangle | - \Big| \sum_{i \in [k] \backslash \{t\}} \langle {x}_i, {x}_t \rangle \Big| \\
    \geq & ~ |\langle {x}_t, {x}_t \rangle | -  \sum_{i \in [k] \backslash \{t\}} | \langle {x}_i, {x}_t \rangle |.
\end{align*}
First, we can bound $\|x_t\|_2^2$ with Lemma~\ref{lem:chi_square_tail},
\begin{align*}
    \Pr \Big[ \| x_t \|_2^2 > ( c_2 \cdot \sqrt{\log(n/\delta)}+d) \sigma^2 \Big] \geq 1 -\delta  / n^2 .
\end{align*}
For each $i \in [k] \backslash \{t\}$, using Lemma~\ref{lem:concentration_of_inner_product_two_guassian}, we have
\begin{align*}
    \Pr \Big[ |\langle x_i , x_{t} \rangle | \geq c_1 \cdot \sigma^2 \sqrt{d} \log^2(d n /\delta) \Big] \leq \delta/n^2.
\end{align*}
where $c_1 > 1$ is some sufficiently large constant.

Taking a union bound over all $i \in  [k] \backslash \{t\}$, we have
\begin{align*}
    \Pr \Big[ \sum_{i \in [k] \backslash \{t\} } |\langle x_i , x_{t} \rangle | \geq c_1 \cdot (k-1)\sigma^2 \sqrt{d} \log^2(d n /\delta) \Big] \leq \delta (k-1)/n^2.
\end{align*}

Thus, for a fixed $t\in [k]$, we have
\begin{align*}
     |\langle \tilde{x},x_t \rangle | \geq  ( c_2 \cdot \sqrt{\log(n/\delta)}+d) \sigma^2 -  c_1 \cdot (k-1)\sigma^2 \sqrt{d} \log^2(d n /\delta)
\end{align*}
holds with probability $1-\delta/n^2-\delta(k-1)/n^2 = 1 - \delta k /n^2$.

Taking a union bound over all $t \in [k]$, we complete the proof.
\end{proof}

\begin{corollary}\label{corollary:attacks_in_X}
 Let $\mathcal{X} \subset \R^d$ with $|{\cal X}| = n$ and $\forall x \in \mathcal{X}, $ we sample $x \sim {\cal N}(0, \sigma^2 I)$. Let $\tilde{x} = \sum_{i=1}^k x_i$, where $x_i \in \mathcal{X}$.\\
 For any $\beta > 1$, if $k \leq (2\beta)^{-1} \cdot \sqrt{d} \log^2 (nd/\delta)$, then with probability $1-\delta$ , we have :\\ for all $t' \in [n] \backslash [k]$ and all $t \in [k]$
 \begin{align*}
    |\langle \tilde{x}, x_{t} \rangle| \geq \beta \cdot |\langle \tilde{x}, x_{t'} \rangle |.
 \end{align*}
\end{corollary}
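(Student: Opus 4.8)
The plan is to read this off Theorem~\ref{thm:attack_in_X} directly, in exactly the way Corollary~\ref{corollary:no_multi_x} was read off Theorem~\ref{thm:multi_mix}: I would take the lower bound on the ``signal'' inner products $\langle\tilde{x},x_t\rangle$ for $t\in[k]$ (Part~2) and divide it by the upper bound on the ``noise'' inner products $\langle\tilde{x},x_{t'}\rangle$ for $t'\in[n]\backslash[k]$ (Part~1), then invoke the hypothesis on $k$ to show the resulting ratio is at least $\beta$. No new probabilistic argument is needed; the whole content is an algebraic comparison of the two bounds already established.

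Concretely, I would first note that both parts of Theorem~\ref{thm:attack_in_X} hold simultaneously on a single event of probability $1-\delta$ (the union bounds over $t\in[k]$ and $t'\in[n]\backslash[k]$ are already absorbed in the theorem's proof), so I may work deterministically on that event. On it, for every $t\in[k]$ and every $t'\in[n]\backslash[k]$ the factor $\sigma^2$ cancels in the ratio, and dropping the positive term $c_2\sqrt{\log(n/\delta)}$ in the numerator gives
\[
\frac{|\langle\tilde{x},x_t\rangle|}{|\langle\tilde{x},x_{t'}\rangle|}
\;\geq\;
\frac{d - c_1(k-1)\sqrt{d}\log^2(nd/\delta)}{c_1\,k\,\sqrt{d}\log^2(nd/\delta)}.
\]
Dividing numerator and denominator by $c_1 d$ (mirroring the corresponding chain in Corollary~\ref{corollary:no_multi_x}, with $k$ and $k-1$ now playing the roles of the constants $4$ and $3$ there) rewrites the right-hand side as
\[
\frac{1 - (k-1)\log^2(nd/\delta)/\sqrt{d}}{\,k\,\log^2(nd/\delta)/\sqrt{d}\,}.
\]

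Finally I would use the hypothesis bounding $k$ in terms of $\sqrt{d}$ (together with $c_1>1$) to force both $k\log^2(nd/\delta)/\sqrt{d}$ and $(k-1)\log^2(nd/\delta)/\sqrt{d}$ to be at most $1/(2\beta)$, exactly as in the last line of Corollary~\ref{corollary:no_multi_x}, whence the ratio is at least $(1-\tfrac{1}{2\beta})/(\tfrac{1}{2\beta})=2\beta-1\geq\beta$. Since this holds for every admissible pair $(t,t')$ on the same probability-$(1-\delta)$ event, the uniform conclusion follows. The one delicate point — and the only place the hypothesis on $k$ is actually used — is guaranteeing that the correction term $c_1(k-1)\sqrt{d}\log^2(nd/\delta)$ does not swamp the leading $d$ term, so that the numerator stays positive and comparable to $\sqrt{d}/\log^2(nd/\delta)$; everything else is routine constant bookkeeping parallel to Corollary~\ref{corollary:no_multi_x}.
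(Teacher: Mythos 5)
Your proposal is correct and takes essentially the same route as the paper's own proof: the paper likewise forms the ratio of the Part~2 lower bound to the Part~1 upper bound of Theorem~\ref{thm:attack_in_X} on the common probability-$(1-\delta)$ event, cancels $\sigma^2$, drops the positive $c_2\sqrt{\log(n/\delta)}$ term, normalizes by $c_1 d$, and invokes the hypothesis on $k$ to conclude the ratio is at least $2\beta-1\geq\beta$. The algebraic chain, including the substitution of $k$ for $k-1$ and the constant bookkeeping via $c_1>1$, is identical.
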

\begin{proof}
With Theorem~\ref{thm:attack_in_X}, we have with probability $1-\delta$, we have : for all $t\in [k]$ and $t'\notin [k]$,
\begin{align*}
    \frac{|\langle \tilde{x}, x_{t} \rangle|}{|\langle \tilde{x}, x_{t'} \rangle|}  
    \geq & ~ \frac{(c_2 \cdot \sqrt{\log(n/\delta)}+d)\sigma^2 - c_1 \cdot (k-1) \sigma^2 \sqrt{d} \log^2 (nd/\delta)}{c_1 \cdot k \sigma^2 \sqrt{d} \log^2 (nd/\delta)}\\
    = & ~ \frac{c_2 \cdot \sqrt{\log(n/\delta)}+d - c_1 \cdot (k-1)  \sqrt{d} \log^2 (nd/\delta)}{c_1 \cdot k  \sqrt{d} \log^2 (nd/\delta)}\\
    = & ~ \frac{1 + c_2 / (c_1\cdot d) \cdot \sqrt{\log(n/\delta)} - (k-1) / \sqrt{d} \cdot \log^2 (nd/\delta)}{k /  \sqrt{d} \cdot \log^2 (nd/\delta)} \\
    \geq & ~ \frac{1  - k / \sqrt{d} \cdot \log^2 (nd/\delta)}{k /  \sqrt{d} \cdot \log^2 (nd/\delta)} 
    \geq ~ \frac{1 - 1/(2\beta) }{ 1/(2\beta) } 
    =  ~ 2 \beta -1 
    \geq  ~ \beta,
\end{align*}
where the forth step follows from the choice of $c_1$ and $c_2$, the fifth step follows from the assumption in the Lemma statement, and the last step follows from $\beta > 1$.

Thus, we complete the proof.
\end{proof}

\subsection{Chi-square concentration and Bernstein inequality}
We state two well-known probability tools in this section.\footnote{The major of idea of the provable results in this section is $p$-th moment concentration inequality (see Lemma 3.1 in \cite{cls20} as an example). The inner product ``attack'' for {\mixup} is based on $p=2$. The inner product ``attack'' for {\instahide} is based on $p=4$ (suggested by \cite{b20}).} One is the concentration inequality for Chi-square and the other is Bernstein inequality.

First, we state a concentration inequality for Chi-square:
\begin{lemma}[Lemma 1 on page 1325 of Laurent and Massart \cite{lm00}]
\label{lem:chi_square_tail}
Let $X \sim {\cal X}_k^2$ be a chi-squared distributed random variable with $k$ degrees of freedom. Each one has zero mean and $\sigma^2$ variance. Then
\begin{align*}
\Pr[ X - k \sigma^2 \geq ( 2 \sqrt{kt} + 2t ) \sigma^2 ] \leq \exp (-t), \mathrm{~~~and~~~}
\Pr[ k \sigma^2 - X \geq 2 \sqrt{k t} \sigma^2 ] \leq \exp(-t).
\end{align*}
\end{lemma}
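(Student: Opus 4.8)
The plan is to derive both tails by the exponential-moment (Chernoff) method after normalizing away $\sigma^2$. The statement is scale-invariant: replacing $X$ by $X/\sigma^2$ turns each displayed inequality into the corresponding claim for a standard chi-square, so I would first reduce to the $\sigma^2=1$ case and write $X = \sum_{i=1}^{k} Z_i^2$ with $Z_1,\dots,Z_k \sim \mathcal{N}(0,1)$ i.i.d., so that $\E[X]=k$. Both tails then follow from the Chernoff method applied to the centered variable $X-k$, differing only in the sign of the exponential parameter and in the analytic bound used for the cumulant generating function.

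The core computation is the moment generating function of a single squared normal: for $\lambda < 1/2$, $\E[e^{\lambda Z_i^2}] = (1-2\lambda)^{-1/2}$. By independence the log-MGF of the centered sum is $\log \E[e^{\lambda(X-k)}] = k\,\psi(\lambda)$, where $\psi(\lambda) = -\lambda - \tfrac12\log(1-2\lambda)$. Expanding $-\tfrac12\log(1-2\lambda) = \lambda + \sum_{j\ge 2}\frac{(2\lambda)^j}{2j}$, so that $\psi(\lambda)=\sum_{j\ge2}\frac{(2\lambda)^j}{2j}$, I would establish the two elementary bounds that drive the whole argument:
\[
\psi(\lambda) \le \frac{\lambda^2}{1-2\lambda}\quad (0 \le \lambda < \tfrac12), \qquad \psi(-\lambda) \le \lambda^2 \quad (\lambda \ge 0),
\]
the first by the term-by-term comparison $\frac{1}{2j}\le \frac14$ for $j \ge 2$ followed by summing the geometric tail, the second from $\log(1+u) \ge u - u^2/2$.

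With these in hand the tails are routine Chernoff optimizations. For the \emph{lower} tail, $\Pr[k - X \ge s] \le \exp(k\lambda^2 - \lambda s)$ for all $\lambda \ge 0$; minimizing the quadratic exponent at $\lambda = s/(2k)$ gives $\exp(-s^2/(4k))$, and setting this equal to $e^{-t}$ recovers exactly $s = 2\sqrt{kt}$. For the \emph{upper} tail, $\Pr[X - k \ge s] \le \exp\!\bigl(\tfrac{k\lambda^2}{1-2\lambda} - \lambda s\bigr)$, which exhibits $X-k$ as a sub-gamma variable with variance factor $2k$ and scale $2$; choosing $\lambda$ to balance the two regimes yields $\Pr[X - k \ge 2\sqrt{kt} + 2t] \le e^{-t}$. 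Undoing the normalization by multiplying the thresholds by $\sigma^2$ then gives the two stated inequalities.

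The only genuinely delicate step is the upper tail. Because the MGF exists only on $[0,\tfrac12)$ and blows up at the endpoint, an unconstrained quadratic minimization is not available; the threshold $2\sqrt{kt}+2t$ interpolates between a sub-Gaussian regime (the $2\sqrt{kt}$ term, dominant for small deviations) and a sub-exponential regime (the $2t$ term, dominant for large deviations), and extracting it requires the specific optimizing choice of $\lambda$ respecting the constraint $\lambda < \tfrac12$. The lower tail, by contrast, is an unconstrained quadratic and presents no difficulty.
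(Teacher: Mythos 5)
The paper states this lemma without proof, importing it by citation from Lemma~1 of Laurent and Massart \cite{lm00}, and your Chernoff-method derivation is correct and essentially reproduces the cited source's own argument: the reduction to $\sigma^2=1$, the log-Laplace function $\psi(\lambda)=-\lambda-\tfrac12\log(1-2\lambda)$, and exactly your two bounds $\psi(\lambda)\le\lambda^2/(1-2\lambda)$ and $\psi(-\lambda)\le\lambda^2$. The one step you left as a plan, the upper-tail inversion, does close cleanly: with $s=2\sqrt{kt}+2t$ one has $1+2s/k=\bigl(1+2\sqrt{t/k}\bigr)^2$, so the Legendre transform $\sup_{0\le\lambda<1/2}\bigl(\lambda s-\tfrac{k\lambda^2}{1-2\lambda}\bigr)=\tfrac{k}{2}\bigl(1+s/k-\sqrt{1+2s/k}\bigr)$ evaluates to exactly $t$, yielding $\Pr[X-k\ge 2\sqrt{kt}+2t]\le e^{-t}$ as required.
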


We state the Bernstein inequality as follows:
\begin{lemma}[Bernstein inequality \cite{b24}]\label{lem:bernstein}
Let $X_1, \cdots, X_n$ be independent zero-mean random variables. Suppose that $|X_i| \leq M$ almost surely, for all $i \in [n]$. Then, for all $t > 0$,
\begin{align*}
\Pr \left[ \sum_{i=1}^n X_i > t \right] \leq \exp \left( - \frac{ t^2/2 }{ \sum_{j=1}^n \E[X_j^2]  + M t /3 } \right).
\end{align*}
\end{lemma}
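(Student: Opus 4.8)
The plan is to prove this by the exponential-moment (Chernoff) method. First I would fix an arbitrary $s > 0$, apply Markov's inequality to $e^{s\sum_i X_i}$, and use independence to factor the moment generating function:
\[
\Pr\Big[\sum_{i=1}^n X_i > t\Big] \;=\; \Pr\Big[ e^{s\sum_i X_i} > e^{st}\Big] \;\leq\; e^{-st}\prod_{i=1}^n \E\big[e^{sX_i}\big].
\]
The entire argument then reduces to bounding each factor $\E[e^{sX_i}]$ and making a good choice of $s$ at the end.

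Second, I would control the per-variable moment generating function using exactly the two hypotheses $\E[X_i]=0$ and $|X_i|\leq M$. Expanding the exponential and discarding the vanishing linear term gives $\E[e^{sX_i}] = 1 + \sum_{p\geq 2} s^p\,\E[X_i^p]/p!$. For each $p\geq 2$ the pointwise bound $|X_i^p| = |X_i|^{p-2}X_i^2 \leq M^{p-2}X_i^2$ yields $|\E[X_i^p]| \leq M^{p-2}\E[X_i^2]$. The workhorse is the elementary series inequality $\sum_{p\geq 2} u^p/p! \leq (u^2/2)/(1-u/3)$ valid for $0\leq u<3$, which follows from $p!\geq 2\cdot 3^{p-2}$ and summing a geometric series. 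Applying it with $u = sM$ and then $1+x\leq e^x$ gives, for $0\leq s < 3/M$,
\[
\E\big[e^{sX_i}\big] \leq \exp\Big(\frac{\E[X_i^2]\,s^2/2}{1-sM/3}\Big).
\]

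Third, writing $\sigma^2 := \sum_{i=1}^n\E[X_i^2]$ and taking the product over $i$, I would obtain
\[
\Pr\Big[\sum_{i=1}^n X_i > t\Big] \leq \exp\Big(-st + \frac{\sigma^2 s^2/2}{1-sM/3}\Big).
\]
Finally, rather than optimizing in full, I would plug in the engineered choice $s = t/(\sigma^2 + Mt/3)$, which satisfies $sM<3$ since $sM = Mt/(\sigma^2+Mt/3) < 3$. For this $s$ one has $1-sM/3 = \sigma^2/(\sigma^2+Mt/3)$ and $s(\sigma^2+Mt/3)=t$, so the exponent collapses to $-st + st/2 = -st/2 = -(t^2/2)/(\sigma^2 + Mt/3)$, matching the claimed denominator $\sum_{j}\E[X_j^2] + Mt/3$ exactly.

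I expect the main technical step to be the moment-generating-function estimate in the second paragraph, in particular justifying the series bound $\sum_{p\geq 2}u^p/p! \leq (u^2/2)/(1-u/3)$ and tracking the convergence range $sM<3$ throughout; verifying that the prescribed $s$ stays in this range is a necessary (and easily checked) consistency point. The concluding choice of $s$ is a short algebraic simplification rather than a genuine optimization, since that value is precisely the one designed to reproduce the stated Bernstein form.
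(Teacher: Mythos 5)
Your proof is correct: it is the standard Chernoff/moment-generating-function derivation of Bernstein's inequality, and every step checks out, including the series bound $\sum_{p\geq 2}u^p/p!\leq (u^2/2)/(1-u/3)$ via $p!\geq 2\cdot 3^{p-2}$ and the algebraic collapse of the exponent to $-st/2$ under your choice of $s$. Note, however, that the paper does not prove this lemma at all --- it is stated as a classical result imported from \cite{b24} ("We state the Bernstein inequality as follows"), so there is no internal proof to compare against; your write-up simply supplies the textbook argument the citation points to. One pedantic point worth fixing: when $\sigma^2 := \sum_{j}\E[X_j^2]=0$, your choice $s = t/(\sigma^2 + Mt/3)$ gives $sM = 3$ exactly, which falls outside the convergence range $sM<3$; but in that degenerate case every $X_i$ vanishes almost surely (zero mean and zero variance with $|X_i|\leq M$), so the left-hand side is $0$ and the claim is trivial --- a one-line remark handling $\sigma^2=0$ separately makes the argument airtight.
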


\subsection{Inner product between a random Gaussian vector a fixed vector}

The goal of this section is to prove Lemma~\ref{lem:concentration_of_inner_product_guassian}. It provides a high probability bound for the absolute value of inner product between one random Gaussian vector with a fixed vector.
\begin{lemma}[Inner product between a random Gaussian vector and a fixed vector]\label{lem:concentration_of_inner_product_guassian}
    Let $u_1, \cdots, u_d$ denote i.i.d. random Gaussian variables where $u_i \sim {\cal N}(0,\sigma_1^2)$. 
        
Then, for any fixed vector $e \in \R^d$, for any failure probability $\delta \in (0,1/10)$, we have
\begin{align*}
    \Pr_{u} \Big[ | \langle u, e \rangle | \geq  2 \sigma_1 \|e\|_2 \sqrt{ \log(d/\delta)} + \sigma_1\|e\|_{\infty}\log^{1.5}(d/\delta) \Big] \leq \delta.
\end{align*}
\end{lemma}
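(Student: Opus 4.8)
The plan is to treat $\langle u, e \rangle = \sum_{i=1}^d u_i e_i$ as a sum of independent, zero-mean terms $X_i := u_i e_i$ and apply a Bernstein-type tail bound (Lemma~\ref{lem:bernstein}). Before doing so I note that because each $u_i$ is Gaussian, $\langle u, e\rangle$ is itself exactly $\mathcal{N}(0,\sigma_1^2\|e\|_2^2)$, so the plain Gaussian tail already gives failure probability $2\exp(-t^2/(2\sigma_1^2\|e\|_2^2))$ at threshold $t$; taking $t = 2\sigma_1\|e\|_2\sqrt{\log(d/\delta)}$ yields $2(\delta/d)^2 \le \delta$, which by itself proves the claim (the $\sigma_1\|e\|_\infty\log^{1.5}(d/\delta)$ term is then slack). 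To keep the argument inside the Bernstein framework set up above (and robust to merely sub-Gaussian coordinates), I would instead proceed by truncation.

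First I would truncate. Set $M_0 := \sigma_1\sqrt{2\log(2d/\delta)}$ and let $\mathcal{E}$ be the event $\{|u_i| \le M_0 \text{ for all } i\in[d]\}$. A one-coordinate Gaussian tail bound (the $k=1$ instance of Lemma~\ref{lem:chi_square_tail}) together with a union bound over the $d$ coordinates gives $\Pr[\overline{\mathcal{E}}] \le \delta/2$. On $\mathcal{E}$ we have $\langle u, e\rangle = \langle \tilde u, e\rangle$, where $\tilde u_i := u_i\,\mathbf{1}[|u_i|\le M_0]$ is the symmetric truncation. By symmetry $\E[\tilde u_i]=0$, hence $\E[X_i]=0$ for $X_i := \tilde u_i e_i$; moreover truncation only decreases variance, so $\sum_i \E[X_i^2] \le \sigma_1^2\sum_i e_i^2 = \sigma_1^2\|e\|_2^2$ and $|X_i| \le M_0\|e\|_\infty =: M$ almost surely.

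Next I would apply the two-sided Bernstein inequality to $\sum_i X_i$, obtaining $\Pr[|\sum_i X_i| > t] \le 2\exp(-\tfrac{t^2/2}{\sigma_1^2\|e\|_2^2 + Mt/3})$. The chosen threshold $t = 2\sigma_1\|e\|_2\sqrt{\log(d/\delta)} + \sigma_1\|e\|_\infty\log^{1.5}(d/\delta)$ is engineered so that its first summand controls the variance (sub-Gaussian) regime while its second summand controls the $Mt/3$ (sub-exponential) regime. In a short case split one checks that whichever of $\sigma_1^2\|e\|_2^2$ or $Mt/3$ dominates the denominator, the resulting exponent is at least $\log(d/\delta)$, so the Bernstein tail is at most $2\exp(-\log(d/\delta)) = 2\delta/d \le \delta/2$ for $d\ge 4$ (the small-$d$ case is absorbed by slightly enlarging the universal constant in $t$). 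A final union bound over $\overline{\mathcal{E}}$ and the Bernstein failure event gives total failure probability at most $\delta$.

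The main technical care, and the only genuinely delicate points, are twofold. First, the truncation must be \emph{symmetric} so that $\E[\tilde u_i]=0$ is preserved: an asymmetric cutoff would introduce a bias term of order $\|e\|_1$ that would destroy the bound, so this symmetry step is where I would be most careful. Second, one must verify the case analysis showing that the two summands of $t$ separately dominate the two terms of the Bernstein denominator, which is exactly why the answer has a $\sqrt{\log}$ contribution scaled by $\|e\|_2$ and a $\log^{1.5}$ contribution scaled by $\|e\|_\infty$. Everything else is routine algebra.
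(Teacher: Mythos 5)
Your proposal is correct, and your primary argument takes a genuinely different---and simpler---route than the paper's. The paper never exploits that $\langle u,e\rangle$ is itself exactly Gaussian: it bounds $\max_{i\in[d]}|u_ie_i|$ by $\sqrt{2\log(d/\delta)}\,\sigma_1\|e\|_\infty$ on a high-probability event $E_1$ and then applies Bernstein's inequality (Lemma~\ref{lem:bernstein}) with that quantity playing the role of the almost-sure bound $M$, at the same threshold $t$ you chose. Your opening observation, that $\langle u,e\rangle \sim \mathcal{N}(0,\sigma_1^2\|e\|_2^2)$ for fixed $e$, so the plain Gaussian tail at $2\sigma_1\|e\|_2\sqrt{\log(d/\delta)}$ already yields failure probability $2(\delta/d)^2 \le \delta$, proves the lemma in one line and shows the $\sigma_1\|e\|_\infty\log^{1.5}(d/\delta)$ term is pure slack; what the Bernstein route (the paper's, and your second argument) buys is robustness to merely sub-Gaussian or bounded coordinates, which the statement does not need but which matches the machinery the appendix sets up for the two-Gaussian Lemma~\ref{lem:concentration_of_inner_product_two_guassian}. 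Notably, your truncation version is \emph{more} careful than the paper's own proof: the paper invokes Bernstein for the unbounded variables $u_ie_i$ after conditioning on $E_1$, which does not literally satisfy Bernstein's hypotheses (conditioning on $\max_i|u_i|$ being small is not an a.s.\ bound and perturbs the distribution), whereas your symmetric truncation $\tilde u_i = u_i\,\mathbf{1}[|u_i|\le M_0]$ preserves zero mean and supplies a genuine almost-sure bound---the standard rigorous repair of exactly the step the paper elides. One small caution on that second route: the clean claim that ``the exponent is at least $\log(d/\delta)$ in both regimes'' fails by a constant factor with your stated constants when $Mt/3$ dominates the denominator (the exponent comes out near $\tfrac{3}{4\sqrt{2}}\log^{1.5}(d/\delta)/\sqrt{\log(2d/\delta)} \approx 0.46\log(d/\delta)$), so you genuinely need the constant enlargement you flag, or the additive computation that uses the cross term $t_1t_2$ inside $t^2/2$ as the paper implicitly does; since your Gaussian argument already closes the proof of the statement as written, this is cosmetic rather than substantive.
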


\begin{proof}
        
First, we can compute $\E[ u_i ]$ and $\E[u_i^2]$
\begin{align*}
    \E[ u_i ] = 0, \text{~~~and~~~} \E[ u_i^2 ] = \sigma_1^2.
\end{align*}

Next, we can upper bound $|u_i|$ and $|u_i e_i|$.
\begin{align*}
    \Pr_{u}[|u_i-\E[u_i]| \geq t_1] \leq \exp ( - t_1^2 / ( 2\sigma_1^2 )  ) .
\end{align*}
Take $t_1 = \sqrt{2\log(d/\delta)} \sigma_1$, then for each fixed $i\in [d]$, we have, $|u_i| \leq \sqrt{2\log(d/\delta)}\sigma_1$ holds with probability $1 - \delta/d$.

Taking a union bound over $d$ coordinates, with probability $1 - \delta$, we have : for all $i \in [d]$, $|u_i| \leq \sqrt{ 2 \log( d / \delta ) } \sigma_1$.

Let $E_1$ denote the event that, $\max_{i\in[d]}|u_ie_i|$ is upper bounded by $\sqrt{2\log(d/\delta)}\sigma_1\|e\|_{\infty}$. $\Pr[E_1] \geq 1-\delta$.
    
Using Bernstein inequality (Lemma~\ref{lem:bernstein}), we have
\begin{align*}
    \Pr_{u}[ | \langle u , e \rangle | \geq t ] 
    \leq & ~ \exp \Big( -\frac{t^2/2}{ \| e\|_2^2 \E[u_i^2] +  \max_{i\in[d]} |u_i e_i| \cdot t/3 } \Big) \\
    \leq & ~ \exp \Big( -\frac{t^2/2}{ \| e\|_2^2 \sigma_1^2 + \sqrt{2\log(d/\delta)}\sigma_1\|e\|_{\infty}\cdot t/3 } \Big) 
    \leq ~ \delta,
\end{align*}
where the second step follows from $\Pr[E_1] \geq 1-\delta$ and $\E[u_i^2] = \sigma_1^2$, and the last step follows from choice of $t$:
\begin{align*}
    t = 2\sigma_1 \|e\|_2 \sqrt{ \log(d / \delta) } + \sigma_1\|e\|_{\infty}\log^{1.5}(d/\delta)
\end{align*}
       
Taking a union bound with event $E_1$, we have $\Pr[ |\langle u, e \rangle| \geq t ] \leq 2 \delta$. 
Finally, rescaling $\delta$ finishes the proof.
\end{proof}
    
\subsection{Inner product between two random Gaussian vectors}

The goal of this section is to prove Lemma~\ref{lem:concentration_of_inner_product_two_guassian}. It provides a high probability bound for the absolute value of inner product between two random (independent) Gaussian vectors. 
\begin{lemma}[Inner product between two random Gaussian vectors]\label{lem:concentration_of_inner_product_two_guassian}
Let $u_1, \cdots, u_d$ denote i.i.d. random Gaussian variables where $u_i \sim {\cal N}(0,\sigma_1^2)$ and $e_1, \cdots, e_d$ denote i.i.d. random Gaussian variables where $e_i \sim {\cal N}(0,\sigma_2^2)$.
    
Then, for any failure probability $\delta \in (0,1/10)$, we have
\begin{align*}
    \Pr_{u,e} \Big[ | \langle u, e \rangle | \geq 10^4 \sigma_1 \sigma_2 \sqrt{d} \log^2(d/\delta) \Big] \leq \delta.
\end{align*}
\end{lemma}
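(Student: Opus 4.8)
The plan is to condition on the second Gaussian vector $e$ and reduce the problem to the already-established bound for the inner product of a random Gaussian vector against a \emph{fixed} vector (Lemma~\ref{lem:concentration_of_inner_product_guassian}). Treating $e$ momentarily as an arbitrary fixed vector, that lemma yields, with probability at least $1-\delta$ over the draw of $u$,
\[
|\langle u, e \rangle| \leq 2\sigma_1 \|e\|_2 \sqrt{\log(d/\delta)} + \sigma_1 \|e\|_\infty \log^{1.5}(d/\delta).
\]
Crucially, this conditional bound holds for \emph{every} fixed value of $e$, so it remains only to control the two norms $\|e\|_2$ and $\|e\|_\infty$ appearing on the right-hand side, now using the randomness of $e$, and then to stitch the two sources of randomness together by a union bound.

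First I would bound $\|e\|_2$. Since $\|e\|_2^2 = \sum_{i=1}^d e_i^2$ is $\sigma_2^2$ times a chi-squared random variable with $d$ degrees of freedom, the chi-square tail inequality (Lemma~\ref{lem:chi_square_tail}) applied with $t = \log(1/\delta)$ gives $\|e\|_2^2 \leq \big(d + 2\sqrt{d\log(1/\delta)} + 2\log(1/\delta)\big)\sigma_2^2$ with probability at least $1-\delta$, so that $\|e\|_2 = O\big(\sigma_2\sqrt{d\log(d/\delta)}\big)$. Second, I would bound $\|e\|_\infty$ by a standard scalar Gaussian tail estimate: each $|e_i| \leq \sigma_2\sqrt{2\log(d/\delta)}$ with probability $1-\delta/d$, and a union bound over the $d$ coordinates yields $\|e\|_\infty \leq \sigma_2\sqrt{2\log(d/\delta)}$ with probability at least $1-\delta$.

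Finally I would substitute these two norm bounds into the conditional estimate. The first term becomes $O\big(\sigma_1\sigma_2\sqrt{d}\,\log(d/\delta)\big)$, which is at most $O\big(\sigma_1\sigma_2\sqrt{d}\,\log^2(d/\delta)\big)$; the second term becomes $O\big(\sigma_1\sigma_2\log^2(d/\delta)\big)$, which is also dominated by $O\big(\sigma_1\sigma_2\sqrt{d}\,\log^2(d/\delta)\big)$ because $\sqrt{d}\geq 1$. To make the probabilities rigorous, let $A$ be the event that both norm bounds hold (so $\Pr_e[A^c]\leq 2\delta$), and let $B_e$ be the conditional inner-product event (so $\Pr_u[B_e^c]\leq\delta$ for each fixed $e$); then
\[
\Pr_{u,e}[\text{bad}] \leq \Pr_e[A^c] + \E_e\big[\mathbf{1}_A \cdot \Pr_u[B_e^c]\big] \leq 3\delta,
\]
and rescaling $\delta$ and choosing the universal constant generously (the stated $10^4$) absorbs all the accumulated constants and the $\log$-factor manipulations. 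The argument has no genuinely hard step; the only point requiring care is the bookkeeping of the three failure events and verifying that the $\sqrt{d}$ term from $\|e\|_2$ indeed dominates the contribution of $\|e\|_\infty$, so that the final bound collapses to a single clean $\sigma_1\sigma_2\sqrt{d}\log^2(d/\delta)$ form.
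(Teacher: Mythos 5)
Your proposal is correct and follows essentially the same route as the paper's proof: condition on $e$, apply the fixed-vector bound (Lemma~\ref{lem:concentration_of_inner_product_guassian}), control $\|e\|_2$ via the chi-square tail (Lemma~\ref{lem:chi_square_tail}) and $\|e\|_\infty$ via a per-coordinate Gaussian tail with a union bound, then combine the three failure events and rescale $\delta$. Your bookkeeping with the events $A$ and $B_e$ is, if anything, slightly more careful than the paper's own ``$\Pr_e[t'\geq t]\geq 1-2\delta$'' step, but it is the identical argument.
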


\begin{proof}
First, using Lemma~\ref{lem:chi_square_tail}, we compute the upper bound for $\| e \|_2^2$
\begin{align*}
    \Pr_{e}[\|e\|_2^2 - d\sigma_2^2 \geq (2\sqrt{dt}+2t)\sigma_2^2] \leq \exp(-t).
\end{align*}
    
Take $t = \log(1/\delta)$, then with probability $1-\delta$, 
\begin{align*}
    \|e\|_2^2 \leq (d+3\sqrt{d\log(1/\delta)}+2\log(1/\delta))\sigma_2^2 \leq 4d \log(1/\delta) \sigma_2^2.
\end{align*}
Thus
\begin{align*}
    \Pr_{e} [ \|e\|_2 \leq 4 \sqrt{d \log(1/\delta)} \sigma_2 ] \geq 1- \delta.
\end{align*}
     
Second, we compute the upper bound for $\|e\|_{\infty}$ (the proof is similar to Lemma~\ref{lem:concentration_of_inner_product_guassian})
\begin{align*}
    \Pr_{e}[|\|e\|_{\infty} \leq \sqrt{\log(d/\delta)} \sigma_2] &\geq 1- \delta.
\end{align*}
    
We define $t$ and $t'$ as follows
\begin{align*}
    t= ~ 4 \cdot ( \sigma_1 \|e\|_2 \sqrt{ \log(d/\delta)} + \sigma_1\|e\|_{\infty}\log^{1.5}(d/\delta) ), \text{~~~and~~~}
    t' = ~ 8 \cdot ( \sigma_1\sigma_2\sqrt{d}\log(d/\delta) + \sigma_1\sigma_2 \log^2(d/\delta) )
\end{align*}
From the above calculations, we can show
\begin{align*}
    \Pr_{e}[t' \geq t] \geq 1 - 2\delta.
\end{align*}

By Lemma \ref{lem:concentration_of_inner_product_guassian}, for fixed $e$, we have
\begin{align*}
    \Pr_{u}[ | \langle u , e \rangle | \geq t ] \leq  \delta
\end{align*}
          
Overall, we have
\begin{align*}
    \Pr_{e,u}[ | \langle u , e \rangle | \geq t' ] \leq  3\delta
\end{align*}
Therefore, rescaling $\delta$ completes the proof.
\end{proof}

%%%% Section B. discuss about possible attack on mixup

\section{Computational hardness results of \texorpdfstring{$d$}{}-dimensional \texorpdfstring{$k$}{}-SUM}\label{sec:hard_app}

The basic components in InstaHide schemes are inspired by computationally hard problems derived from the classic SUBSET-SUM problem: given a set of integers, decide if there is a non-empty subset of integers whose integers sum to $0$. It is a version of knapsack, one of the Karp's 21 NP-complete problems \cite{k72}. The $k$-SUM \cite{e95} is the parametrized version of the SUBSET-SUM. Given a set of integers, one wants to ask if there is a subset of $k$ integers sum to $0$. The $k$-SUM problem can be solved in $O(n^{\lceil k/2 \rceil})$ time. For any integer $k \geq 3$ and constant $\epsilon >0$, whether $k$-SUM can be solved in $O(n^{\lfloor k / 2 \rfloor - \epsilon})$ time has been a long-standing open problem. Patrascu \cite{p10}, Abboud and Lewi \cite{al13} conjectured that such algorithm doesn't exist. 

It is natural to extend definition from one-dimensional scalar/number case to the high-dimensional vector case. For $d$-dimensional $k$-sum over finite field, Bhattacharyya, Indyk, Woodruff and Xie \cite{biwx11} have shown that any algorithm that solves this problem has to take $\min \{ 2^{\Omega(d)} , n^{\Omega(k)} \}$ time unless Exponential Time Hypothesis is false. Here, Exponential Time Hypothesis is believed to be true, it states that there is no $2^{o(n)}$ time to solve 3SAT with $n$ variables. 

In this work, we observe that privacy of InstaHide can be interpreted as $d$-dimensional $k$-SUM problem and thus could be intractable and safe. For real field, $d$-dimensional is equivalent to $1$-dimensional due to \cite{alw14}. Therefore, Patrascu \cite{p10}, Abboud and Lewi \cite{al13}'s conjecture also suitable for $d$-dimensional $k$-SUM problem, and several hardness results in $d=1$ also can be applied to general $d>1$ directly.

\subsection{\texorpdfstring{$k$}{}-SUM}

We hereby provide a detailed explanation for the $d$-dimensional $k$-SUM problem. Let us start with the special case of $d=1$ and all values are integers.

\begin{definition}[SUBSET-SUM]
Given a set of integers, if there is a subset of integers whose integers sum to $0$.
\end{definition}
SUBSET-SUM is a well-known NP-complete problem.

The $k$-SUM is the parameterized version of the SUBSET-SUM,
\begin{definition}[$k$-SUM]
Given a set of integers, if there is a subset of $k$ integers whose integers sum to $0$.
\end{definition}

The $k$-SUM problem can be solved in $O(n^{ \lceil k / 2 \rceil } )$ time. For $k=3$, Baran, Demaine and Patrascu \cite{bdp08} introduced algorithm that takes $O(n^2 / \log^2 n)$ time. It has been a longstanding open problem to solve $k$-SUM for some $k$ in time $O( n^{ \lceil k / 2 \rceil - \epsilon } )$. Therefore, complexity communities made the following conjecture,

\begin{conjecture}[The $k$-SUM Conjecture, \cite{al13}]
There does not exist a $k \geq 2$, an $\epsilon > 0$, and a randomized algorithm that succeeds (with high probability) in solving $k$-{\rm SUM} in time $O( n^{ \lceil k / 2 \rceil - \epsilon } )$.
\end{conjecture}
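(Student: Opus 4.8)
The final statement is a \emph{conjecture} rather than a theorem: it postulates an unconditional lower bound on $k$-{\sc sum}, and no proof is known --- indeed, a proof would be a landmark result in fine-grained complexity. So the honest ``plan'' here is not to prove it, but to describe the natural line of attack and pinpoint exactly where it breaks down.

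First I would pin down the upper bound that the conjecture claims is optimal, so that the target exponent $\lceil k/2 \rceil$ is unambiguous. The meet-in-the-middle algorithm splits the $k$ slots into a block of size $\lceil k/2 \rceil$ and a block of size $\lfloor k/2 \rfloor$, enumerates the $\binom{n}{\lceil k/2 \rceil} = O(n^{\lceil k/2 \rceil})$ partial sums of the first block into a hash table, and for each of the $O(n^{\lfloor k/2 \rfloor})$ partial sums of the second block queries the table for its negation. This runs in $\wt{O}(n^{\lceil k/2 \rceil})$ time, so the conjecture is precisely the assertion that no algorithm can shave off a polynomial factor $n^{\epsilon}$ from this exponent.

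To establish the matching lower bound one would need either an unconditional argument in the general (word-RAM / real-RAM) model, or a reduction from a problem already believed hard. The unconditional route is the crux of the difficulty: the only lower bounds available hold in restricted models --- $k$-linear decision trees, where Erickson-style arguments give $\Omega(n^{\lceil k/2 \rceil})$ --- and these do not transfer to the general model, whose algorithms may exploit hashing, word operations, and nonlinear tests. Promoting such a bound to the general model would, through standard connections, entail superlinear circuit lower bounds or a refutation of the Exponential Time Hypothesis, both far beyond current techniques. The reduction route is essentially circular: $k$-{\sc sum} is itself one of the canonical \emph{sources} of hardness (its $k=3$ instance is the 3SUM conjecture), so there is no weaker, better-established assumption to reduce from. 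The closest available leverage is the $d$-dimensional / finite-field variant, which \cite{biwx11} ties to the Exponential Time Hypothesis, but that yields only an $n^{\Omega(k)}$ bound, not the tight exponent $\lceil k/2 \rceil$.

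The main obstacle is therefore intrinsic and not a matter of filling in a calculation: the statement is a hardness hypothesis, and the strongest available justification is the accumulation of evidence --- tightness of meet-in-the-middle, matching decision-tree lower bounds, and decades of failed attempts to beat the exponent --- rather than a proof. In this paper the conjecture is invoked only as an assumption underlying the claim that attacking worst-case {\instahide} instances is intractable, so presenting it as a conjecture (not a theorem) is exactly the correct status.
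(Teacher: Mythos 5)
You are correct: this statement is a conjecture, not a theorem, and the paper offers no proof of it --- it is simply cited from the Abboud--Lewi literature as a hardness assumption, with the same supporting context you describe (the $O(n^{\lceil k/2 \rceil})$ meet-in-the-middle upper bound, the longstanding failure to beat it, and the weaker conditional lower bounds such as $n^{\Omega(k)}$ under ETH). Your assessment of its status and role in the paper matches the paper's own treatment exactly.
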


Although the $n^{\lceil(k/2) \rceil}$ hardness for $k$-SUM is not based on anything else at the moment, an $n^{\Omega(k)}$ lower bound under ETH is already known due to Abboud and Lewi \cite{al13}. Recently, Abboud \cite{a19} also shows a weaker $ n^{ \Omega ( k / \log k ) }$ lower bound under the Set Cover Conjecture, but it has the advantage that it holds for any fixed $k>2$.

\subsection{\texorpdfstring{$k$}{} Vector Sum over Finite Field}

Now we move onto the $d$-dimensional $k$ vector sum problem.

\begin{definition}[$d$-dimensional $k$-VEC-SUM over finite field]\label{def:d_dim_k_sum_finite}
Given a set of $n$ vectors in $\mathbb{F}_q^d$, if there is a subset of $k$ vectors such that the summation of those $k$ vectors is an all $0$ vector.
\end{definition}

A more general definition that fits the InstaHide setting is called $k$-VEC-T-SUM (where $T$ denotes the ``target''),
\begin{definition}[$d$-dimensional $k$-VEC-T-SUM over finite field]\label{def:d_dim_k_t_sum_finite}
Given a set of $n$ vectors in $\mathbb{F}_q^d$, if there is a subset of $k$ vectors such that the summation of those $k$ vectors is equal to some vector $z$ in $\mathbb{F}_q^d$.
\end{definition}

$d$-dimensional $k$-VEC-T-SUM and $d$-dimensional $k$-VEC-SUM are considered to have the same hardness. Bhattacharyya, Indyk, Woodruff and Xie \cite{biwx11} proved hardness result for the problem defined in Definition~\ref{def:d_dim_k_sum_finite} and Definition~\ref{def:d_dim_k_t_sum_finite}. Before stating the hardness result, we need to define several basic concepts in complexity. We introduce the definition of 3SAT and Exponential Time Hypothesis(ETH). For the details and background of 3SAT problem, we refer the readers to \cite{ab09}.

\begin{definition}[3SAT problem]\label{def:3SAT}
Given $n$ variables and $m$ clauses conjunctive normal form {\rm CNF} formula with size of each clause at most $3$, the goal is to decide whether there exits an assignment for the $n$ boolean variables to make the {\rm CNF} formula be satisfied.
\end{definition}

We state the definition of Exponential Time Hypotheis, which can be thought of as a stronger assumption than P$\neq$NP.
\begin{hypothesis}[Exponential Time Hypothesis (ETH) \cite{ipz98}]\label{hyp:ETH}
There is a $\delta > 0$ such that {\rm 3SAT} problem defined in Definition~\ref{def:3SAT} cannot be solved in $O(2^{\delta n})$ running time.
\end{hypothesis}

Now, we are ready to state the hardness result.
\begin{theorem}[\cite{biwx11}]
  \label{thm:hardness_d_ksum}
Assuming Exponential Time Hypothesis ({\rm ETH}), any algorithm solves $k${\rm-VEC-SUM} or $k${\rm-T-VEC-SUM} requires $\min \{ 2^{\Omega(d)} , n^{\Omega(k)} \}$ time.
\end{theorem}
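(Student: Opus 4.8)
The plan is to prove Theorem~\ref{thm:hardness_d_ksum} by a ``split-and-list'' reduction from 3SAT (Definition~\ref{def:3SAT}), whose $2^{\Omega(\nu)}$ hardness on $\nu$ variables is exactly what ETH (Hypothesis~\ref{hyp:ETH}) asserts. I would first target the ``T'' version $k$-VEC-T-SUM (Definition~\ref{def:d_dim_k_t_sum_finite}), which is the more convenient one to hit, and then observe that the target and sum-to-zero variants (Definition~\ref{def:d_dim_k_sum_finite}) are interreducible by a standard shift (append one mandatory vector carrying $-z$, or translate every vector), so it suffices to establish the bound for either. Before reducing I would invoke the Sparsification Lemma of \cite{ipz98} to assume the 3SAT instance has $m = O(\nu)$ clauses; this is what keeps the produced dimension linear in $\nu$, and it is the step that makes both terms of the $\min$ come out of a single construction.

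The reduction partitions the $\nu$ variables into $k$ blocks of size $\nu/k$ and, for each block $i$, lists all $2^{\nu/k}$ partial assignments as candidate vectors, so that $n = \Theta(k\cdot 2^{\nu/k})$ and we ask for one selected vector per block. Each vector lives in $\mathbb{F}_q^d$ with two coordinate regions. The first region has one coordinate per block; a vector from block $i$ carries a $1$ in coordinate $i$, and setting the target there to the all-ones pattern forces the selection to be one-per-block. The second region has one coordinate $c_j$ per clause: a block-$i$ assignment records in $c_j$ the number of literals of $C_j$ that it both owns and satisfies, so after summing the selected vectors coordinate $c_j$ holds the total number $s_j\in\{0,1,2,3\}$ of satisfied literals of $C_j$ (a sufficiently large prime $q$ avoids wraparound). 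The dimension is $d = k + m = \Theta(\nu)$, and the parameters are calibrated so that $d=\Theta(\nu)$ and $\kappa\log n=\Theta(\nu)$, where $\kappa$ is the realized cardinality. Consequently $2^{\Omega(d)}\ge 2^{\Omega(\nu)}$ and $n^{\Omega(\kappa)}\ge 2^{\Omega(\nu)}$ simultaneously, so any algorithm running below $\min\{2^{\Omega(d)}, n^{\Omega(\kappa)}\}$ would decide 3SAT in $2^{o(\nu)}$ time, contradicting ETH; both terms of the claimed bound thus arise from the same family of instances. Over the reals the $d$-dimensional and one-dimensional problems coincide \cite{alw14}, so the same construction transfers.

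The hard part will be encoding the clause constraint $s_j\ge 1$ as an \emph{exact} vector-sum target, since a disjunction is not directly a linear equality. My resolution follows the classical 3SAT$\to$SUBSET-SUM digit gadget: introduce at most two ``slack'' elements per clause, each contributing $1$ to coordinate $c_j$, and fix the clause target to $3$; then the target is reachable iff $s_j+(\text{slack})=3$ for some slack in $\{0,1,2\}$, i.e.\ iff $s_j\ge 1$, whereas $s_j=0$ is infeasible. The genuinely delicate step---the one I expect to demand the most care---is that these slack choices perturb the selection cardinality $\kappa$, which must stay aligned with the parameter so that the product $\kappa\log n=\Theta(\nu)$ survives and the $n^{\Omega(k)}$ term remains meaningful. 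I would control this by padding with zero vectors to a fixed $\kappa$ and by exploiting the sparsified bound $m=O(\nu)$ to cap the total slack budget; reconciling a small target cardinality with a slack count of order $m$ is exactly where a naive gadget breaks, and resolving it (either by absorbing slack into the block structure or, if necessary, reducing from a more sum-friendly ETH-hard problem) is the crux of the argument. Combined with the target/zero-sum interreduction, this yields the stated $\min\{2^{\Omega(d)},n^{\Omega(k)}\}$ lower bound for both $k$-VEC-SUM and $k$-VEC-T-SUM.
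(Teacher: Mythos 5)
The paper offers no proof of Theorem~\ref{thm:hardness_d_ksum} at all: it is imported verbatim from \cite{biwx11}, so there is no internal argument to compare against, and your attempt has to be judged as a free-standing reconstruction of the BIWX-style lower bound. Your skeleton is the right one for this family of results: sparsify via \cite{ipz98} to get $m=O(\nu)$ clauses, split the $\nu$ variables into $k$ blocks, list the $2^{\nu/k}$ partial assignments per block, enforce one-per-block with indicator coordinates, and calibrate so that $d=\Theta(\nu)$ and $k\log n=\Theta(\nu)$ simultaneously, which is exactly how both terms of $\min\{2^{\Omega(d)},n^{\Omega(k)}\}$ come out of a single instance family.

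However, the gap you flag yourself is genuine, and neither of your proposed patches closes it. Two slack vectors per clause inflate the solution cardinality from $k$ to $\kappa=k+\Theta(m)=k+\Theta(\nu)$, so $\kappa\log n=\Theta(\nu^2/k)$ and the $n^{\Omega(k)}$ calibration collapses. Padding with zero vectors makes things worse rather than better: it cannot restore $\kappa\log n=\Theta(\nu)$, and for the zero-target variant of Definition~\ref{def:d_dim_k_sum_finite} any $k$ zero vectors sum to $\mathbf{0}$, so the padded instance is trivially a yes-instance and soundness is destroyed. The standard repair --- the one hiding behind your phrase ``absorbing slack into the block structure,'' used in the Patrascu--Williams-style reductions --- is to partition the $m=O(\nu)$ clauses among the $k$ blocks and let each listed vector for block $i$ encode a partial assignment \emph{together with} the slack digits for the $O(\nu/k)$ clauses owned by block $i$; each block's list then has size $2^{\nu/k}\cdot 3^{O(\nu/k)}=2^{O(\nu/k)}$, the selected subset has size exactly $k$, and both $d=\Theta(\nu)$ and $k\log n=\Theta(\nu)$ survive (soundness requires checking that a clause's slack can only be contributed by its owning block, which the per-clause coordinates enforce). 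Since this step is precisely the crux and is left unexecuted --- with one of your two fallbacks demonstrably failing --- the proof as written does not go through. A minor further point: the T-SUM-to-zero-SUM shift needs a forcing coordinate to make the $-z$ vector mandatory and bumps $k$ to $k+1$, which is harmless but should be stated, and over $\mathbb{F}_q$ your ``translate every vector by $-z/k$'' variant needs $q \nmid k$.
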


\subsection{\texorpdfstring{$k$}{} Vector Sum over Bounded Integers}

For the bounded integer case, we can also define the $k$-VEC-SUM problem,
\begin{definition}[$d$-dimensional $k$-VEC-SUM over bounded integers]\label{def:d_dim_k_sum_bounded}
For integers $k$, $n$, $M$, $d > 0$, the $k${\rm-VEC-SUM} problem is to determine, given vectors $x_1, \cdots, x_n, x \in [ 0 , k M ]^d$, if there is a size-$k$ subset $S \subseteq [n]$ such that $\sum_{i \in S} x_i = z$.
\end{definition}
Abboud, Lewi, and Williams proved that the 1-dimensional $k$-VEC-SUM and $d$-dimensional $k$-VEC-SUM are equivalent in bounded integer setting,
\begin{lemma}[Lemma 3.1 in \cite{alw14}]
Let $k,p,d,s,M \in \mathbb{N}$ satisfy $k < p$, $p^d \geq k M + 1$, and $s = (k+1)^{d-1}$. There is a collection of mappings $f_1, \cdot, f_s : [0,M] \times [0,kM] \rightarrow [-kp, kp]^d$, each computable in time $O( \poly \log M + k^d )$, such that for all numbers $x_1, \cdots, x_k \in [0,M]$ and targets $t \in [0,kM]$,
\begin{align*}
    \sum_{j=1}^k x_j = t \iff \exists i \in [s] \mathrm{~} \sum_{j=1}^k f_i(x_j, t) .
\end{align*}
\end{lemma}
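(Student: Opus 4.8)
The plan is to realize the scalar equation $\sum_{j=1}^k x_j = t$ through base-$p$ arithmetic and to let the index $i \in [s]$ range over all possible carry patterns of schoolbook addition. First I would write each $x_j \in [0,M]$ and the target $t \in [0,kM]$ in base $p$ with $d$ digits, $x_j = \sum_{\ell=0}^{d-1} (x_j)_\ell\, p^\ell$ with $(x_j)_\ell \in \{0,\dots,p-1\}$, and similarly for $t$. The hypothesis $p^d \ge kM+1$ guarantees that not only each $x_j$ but also the full sum $\sum_j x_j \in [0,kM]$ fits into $d$ base-$p$ digits, so no information is lost and nothing overflows past coordinate $d-1$.

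The core observation is that $\sum_j x_j = t$ holds if and only if there is a carry sequence $c_0 = 0, c_1, \dots, c_{d-1}, c_d = 0$ with, for every digit position $\ell$,
\begin{align*}
\sum_{j=1}^k (x_j)_\ell + c_\ell = (t)_\ell + p\, c_{\ell+1}.
\end{align*}
A carry pattern is precisely a choice of the $d-1$ internal carries $(c_1,\dots,c_{d-1})$, and I would index the mappings $f_1,\dots,f_s$ by these patterns. For the pattern indexed by $i$, I would define $f_i(x_j,t)$ coordinatewise from the digit vector $((x_j)_0,\dots,(x_j)_{d-1})$, folding the target digits $(t)_\ell$ and the carry shifts $p\, c_{\ell+1} - c_\ell$ of pattern $i$ into the construction so that the carry-free coordinatewise vector sum $\sum_{j=1}^k f_i(x_j,t)$ hits the designated target (taken to be $\mathbf 0$ after absorbing the offset) exactly when the displayed digit equations hold for that pattern. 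Both directions then follow: if $\sum_j x_j = t$, running the base-$p$ addition produces one valid carry sequence, giving the witnessing $i$; conversely, if the equations hold for some pattern, multiplying equation $\ell$ by $p^\ell$ and summing over $\ell$ telescopes the carry terms and recovers $\sum_j x_j = t$.

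It remains to verify the three quantitative claims. Since each $(x_j)_\ell \le p-1$ and $k < p$, an inductive bound gives $c_{\ell+1} = \lfloor (\sum_j (x_j)_\ell + c_\ell)/p \rfloor \le \lfloor (k(p-1)+k)/p \rfloor = k$, so each internal carry lies in $\{0,1,\dots,k\}$ and there are at most $(k+1)^{d-1} = s$ distinct patterns, matching the stated count (patterns that can never occur simply yield maps whose sum is never zero and are harmless). The entries of $f_i(x_j,t)$ are digits in $[0,p-1]$ adjusted by target and carry terms bounded in magnitude by $p\,c_{\ell+1} \le kp$, which, once the offsets are split across the inputs so no single coordinate leaves the box, keeps everything in $[-kp,kp]$. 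Finally, extracting the base-$p$ digits of a number up to $kM$ costs $O(\poly\log M)$, and assembling the per-coordinate carry offsets for a pattern costs $O(k^d)$, giving the claimed $O(\poly\log M + k^d)$ time per evaluation.

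I expect the main obstacle to be the carry bookkeeping inside the definition of $f_i$: because the \emph{same} function $f_i$ is applied identically to every argument $(x_j,t)$, the target- and carry-dependent offset must be distributed across the $k$ inputs (or absorbed by passing to the target version $k$-VEC-T-SUM, which the paper notes has the same hardness) so that it appears exactly once in the total and the sum is genuinely carry-free. Arranging this offset to be integral and within the $[-kp,kp]$ box, while simultaneously guaranteeing that an \emph{invalid} carry guess can never accidentally produce the target vector, is the delicate part; the remainder is the routine verification of the telescoping identity and the counting above.
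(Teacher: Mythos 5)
You should know that the paper contains no proof of this lemma at all---it is imported verbatim (typos included) from Abboud--Lewi--Williams \cite{alw14}---and your argument via base-$p$ digit expansion with one mapping per internal carry pattern in $\{0,\dots,k\}^{d-1}$, the carry bound $c_{\ell+1}=\lfloor(\sum_j (x_j)_\ell+c_\ell)/p\rfloor\le k$, and the telescoping identity for the converse is exactly the proof in that source; note also that your telescoping step already disposes of the soundness worry in your last paragraph, since $c_0=c_d=0$ are hard-wired into the construction, so any carry pattern satisfying the digit equations genuinely certifies $\sum_j x_j=t$. The one loose end you flag (distributing the target offset integrally across identical applications of $f_i$) is closed by the standard device of setting $f_i(x,t)=k\cdot\mathrm{digits}_p(x)-\bigl((t)_\ell+p\,c_{\ell+1}-c_\ell\bigr)_{\ell=0}^{d-1}$, so that $\sum_{j=1}^k f_i(x_j,t)=k\bigl(\sum_j \mathrm{digits}_p(x_j)-\text{adjusted target}\bigr)$ vanishes precisely when your digit equations hold---and, as you correctly inferred, the displayed condition in the statement should read $\sum_{j=1}^k f_i(x_j,t)=\mathbf{0}$.
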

Due to the above result, as long as we know a hardness result for classical $k$-SUM, then it automatically implies a hardness result for $d$-dimensional $k$-VEC-SUM.

%%%% Section C. discuss more about hardness result, e.g., 3-SUM, k-SUM

\section{Phase Retrieval}\label{sec:app_phase_retrieval}

\subsection{Compressive Sensing}

Compressive sensing is a powerful mathematical framework the goal of which is to reconstruct an approximately $k$-sparse vector $x \in \R^n$ from linear measurements $y = \Phi x$, where $\Phi \in \R^{m \times n}$ is called ``sensing'' or ``sketching'' matrix. The mathematical framework was initiated by \cite{crt06,d06}.

We provide the definition of the $\ell_2/\ell_2$ compressive sensing problem.
\begin{definition}[Problem 1.1 in \cite{ns19}]
Given parameters $\epsilon, k, n$, and a vector $x \in \R^n$. The goal is to design some matrix $\Phi \in \R^{m \times n}$ and a recovery algorithm ${\cal A}$ such that we can output a vector $x'$ based on measurements $y = \Phi x$,
\begin{align*}
\| x' - x \|_2 \leq (1+\epsilon) \min_{k\mathrm{-sparse}~z \in \R^n} \| z - x \|_2.
\end{align*}
We primarily want to minimize $m$ (which is the number of measurements), the running time of ${\cal A}$(which is the decoding time) and column sparsity of $\Phi$.
\end{definition}
The $\ell_2/\ell_2$ is the most popular framework, and the state-of-the-art result is due to \cite{glps10,ns19}. 

\subsection{Phase Retrieval}

In compressive sensing, we can observe $y= \Phi x$. However, in phase retrieval, we can only observe $y = | \Phi x |$. Here, we provide the definition of $\ell_2/\ell_2$ phase retrieval problem.
\begin{definition}[\cite{ln18}]
Given parameters $\epsilon,k,n$, and a vector $x \in \R^n$. The goal is to design some matrix $\Phi \in \R^{m \times n}$ and a recovery algorithm ${\cal A}$ such that we can output a vector $x'$ based on measurements $y = | \Phi x |$,
\begin{align*}
    \| x' - x \|_2 \leq (1+\epsilon) \min_{k\mathrm{-sparse}~z \in \R^n} \| z - x \|_2.
\end{align*}
We primarily want to minimize $m$ (which is the number of measurements), the running time of ${\cal A}$ (which is the decoding time) and column sparsity of $\Phi$.
\end{definition}
The state-of-the-art result is due to \cite{ln18}.

The problem formulation of compressive sensing and phase retrieval have many variations, for more details we refer the readers to several surveys \cite{price13,h16,nakos19,song19}.

\subsection{Comments on Phase Retrieval}

We would like to point out that although {\instahide} can be formulated as a phase retrieval problem, classical techniques will fail as an attack. 

To show the phase retrieval formulation of {\instahide}, we first argue applying the random pixel-wise sign is equivalent to taking the absolute value, namely (A): $\title{x} = \sigma \circ f_{\text{mix}}(x)$ is equivalent to (B): $\title{x} = |f_{\text{mix}} ( x ) |$, where $f_{\text{mix}}(\cdot)$ denote the mixing function. This is because, to reduce from B to A, we can just take absolute value in each coordinate; and to reduce from A to B, we can just select a random mask (i.e., $\sigma$) and apply it.

Classical phase retrieval \cite{f78, f82, mrb07} aims to recover $x$ given $y = |\Phi x| $, where $\Phi \in \R^{m \times n}$, $x \in \R^{n}$ and $y \in \R^m$. In the {\instahide} setting, one can think of $\Phi$ as the concatenation of the private and public dataset, where $n = n_{\mathrm{private}}+n_{\mathrm{public}}$, and $x$ is a $k$-sparse vector which selects $k$ out of $n$ data points to {\mixup}. Therefore, $y = |\Phi x| $ is the `encrypted' sample using {\instahide}. It has been shown that solving $x$ can be formulated as a linear program \cite{mrb07}. It is well-known that linear program can be solved in polynomial in number of constraints/variables \cite{cls19,jswz20} and thus $m = O(k^2 \log(n/k^2))$ gives a polynomial-time optimization procedure. However, this result does not naturally serve as an attack on {\instahide}. First, $\Phi$ in phase retrieval needs to satisfy certain properties which may not be true for $\Phi$ in {\instahide}. More importantly, running LP becomes impossible for {\instahide} case, since part of $\Phi$ remains unknown.

%%%% Section D. discuss more about compressive sensing and phase retrieval

\section{Experiments}\label{sec:exp_app}

\subsection{Network architecture and hyperparameters}

Table \ref{tab:exp_details} provides Implementation details of the deep models. All experiments are conducted on 24 NVIDIA RTX 2080 Ti GPUs.

\begin{table}[h]
\small
\centering
\begin{tabular}{ p{4.8cm} p{3.6cm} p{3.8cm} p{3.6cm} }
    \toprule
      & \textbf{MNIST}  & \textbf{CIFAR-10} & \textbf{CIFAR-100}\\ 
      & \cite{mnist10} & \cite{cifar10} & \cite{cifar10} \\ 
      \midrule
      \tabincell{l}{Input normalization parameters \\ (normalized = (input-mean)/std)} & \tabincell{l}{ mean: (0.50, 0.50, 0.50) \\ std: (0.50, 0.50, 0.50)} & \tabincell{l}{ mean: (0.49, 0.48, 0.45) \\ std: (0.20, 0.20, 0.20)} & \tabincell{l}{ mean: (0.49, 0.48, 0.45) \\ std: (0.20, 0.20, 0.20)} \\\midrule
    Number of Epochs & 30 & 200 & 200\\\midrule
    Network architecture & ResNet-18 & ResNet-18 & NasNet\\
    & \cite{hzrs16} & \cite{hzrs16} & \cite{zoph2018learning} \\
    \midrule
    Optimizer & \multicolumn{3}{c}{SGD  (momentum = 0.9) \cite{q99}}  \\\midrule
    Initial learning rate (vanilla training) & 0.1 & 0.1 & 0.1 \\\midrule
    Initial learning rate ({\instahide}) & 0.1 & 0.1 & 0.1 \\\midrule
    Learning rate decay & by a factor of 0.1 at the 15$^{th}$ epoch &  by a factor of 0.1 at 100$^{th}$ and 150$^{th}$ epochs & by a factor of 0.2 at 60$^{th}$, 120$^{th}$ and 160$^{th}$ epochs\\\midrule
    Regularization & \multicolumn{3}{c}{$\ell_2$-regularization ($10^{-4}$) } \\\midrule
    Batch size & \multicolumn{3}{c}{128}\\
    \bottomrule
  \end{tabular}
  \caption{Implementation details of network architectures and training schemes. }\label{tab:exp_details}
\end{table}

\subsection{Details of attacks}
We hereby provide more details for the attacks in Section \ref{sec:exp_attack_gradmatch}.

\paragraph{Gradients matching attack.}
Algorithm \ref{alg:attack_grad} describes the gradients matching attack \cite{zlh19}. This attack aims to recover the original image from model gradients computed on it. In the InstaHide setting, the goal becomes to recover $\tilde{x}$, the image after InstaHside. As we have shown in Section~\ref{sec:privacy}, the upper bound on the privacy loss in gradients matching attack is the loss when attacker is given $\tilde{x}$.

\begin{algorithm}[t]
\caption{Gradients matching attack}
  \begin{algorithmic}[1]
  \State {\bf Require : }
  \State The function $F(x; W)$ can be thought of as a neural network, for each $l \in [ L ]$, we define $W_l \in \R^{m_l \times m_{l-1}}$ to be the weight matrix in $l$-th layer, and $m_0 = d_i$ and $m_l = d_o$. $W = \{W_1, W_2, \cdots, W_L\}$ denotes the weights over all layers.
  \State Let $(x_0, y_0)$ denote a private (image, label) pair.
  \State Let $\mathcal{L} : \R^{d_o \times d_o} \rightarrow \R$ denote loss function
   \State Let $g(x,y) = \nabla {\cal L} ( F(x;W) ,y)$ denote the gradients of loss function, and $\hat{g} =  g(x,y) |_{ x =x_0, y= y_0 }$ is the gradients computed on $x_0$ with label $y_0$
  \Procedure{\textsc{InputRecoveryfromGradients}}{$ $}
      \State $x^{(1)} \gets \mathcal{N} (0, 1)$, $y^{(1)} \gets \mathcal{N} (0, 1)$
      \Comment{Random initialization of the input and the label}
      \For{$t = 1 \rightarrow T$}
     \State Let $D_g(x,y) = \|g(x,y) - \hat{g} \|_2^2$
      \State $x^{(t+1)} \gets x^{(t)} - \eta \cdot \nabla_{x} D_g(x,y) |_{x = x^{(t)}} $
      \State $y^{(t+1)} \gets y^{(t)} - \eta \cdot \nabla_{y} D_g(x,y) |_{ y=y^{(t)} } $
      \EndFor
  \State \Return $x^{(T+1)}$, $y^{(T+1)}$
  \EndProcedure
  \end{algorithmic}
  \label{alg:attack_grad}
  \end{algorithm}

\subsection{Results of the Kolmogorov–Smirnov Test}

In order to further understand whether there is significant difference among distributions of {\instahide} encryptions of different $x$’, we run the Kolmogorov-Smirnov (KS) test \cite{k33,s48}.

Specifically, we randomly pick 10 different private $x_i$'s, $i \in [10]$, and generate 400 encryptions for each $x_i$ (4,000 in total). We sample $\tilde{x}_{ij}, j \in [400]$, the encryption of a given $x_i$, and run KS-test under two different settings:
\begin{itemize}
    \item $\tilde{x}_{ij}$ v.s. all encryptions (All):  KS-test(statistics of $\tilde{x}_{ij}$, statistics of all $\tilde{x}_{uj}$’s for $u \in [10]$)
    \item $\tilde{x}_{ij}$ v.s. encryptions of other $x_u$'s, $u \neq i$ (Other): KS-test(statistics of $\tilde{x}_{ij}$, statistics of all $\tilde{x}_{uj}$’s for $u \in [10], u \neq i$)
\end{itemize}

For each $x_i$, we run the Kolmogorov–Smirnov test for 50 independent $\tilde{x}_{ij}$’s, and report the averaged p-value as below (a higher p-value indicates a higher probability that $\tilde{x}_{ij}$ comes from the tested distribution). We test 7 different statistics: pixel-wise mean, pixel-wise standard deviation, total variation, and the pixel values of 4 random locations. KS-test results suggest that, there is no significant differences among distribution of encryptions of different images.

\begin{table}[h]
\scriptsize
    \centering
    \begin{tabular}{ccccccccccccccc}
    \toprule
         & \multicolumn{2}{c}{{\bf Pixel-wise Mean}} & \multicolumn{2}{c}{{\bf Pixel-wise Std}} & \multicolumn{2}{c}{{\bf Total Variation}} & \multicolumn{2}{c}{{\bf Location 1}} & \multicolumn{2}{c}{{\bf Location 2}} & \multicolumn{2}{c}{{\bf Location 3}} & \multicolumn{2}{c}{{\bf Location 4}}\\
         & All & Other & All & Other & All & Other& All & Other& All & Other& All & Other& All & Other\\\midrule
         $x_1$ & 0.49&0.49 &0.54 &0.54&0.54 &0.54&0.61 &0.61&0.58 &0.58&0.61 &0.61&0.59 &0.59\\
         $x_2$ & 0.53&0.52 &0.49 &0.49&0.49 &0.49&0.65 &0.65&0.56 &0.56&0.61 &0.61&0.74 &0.74\\
         $x_3$ & 0.54&0.53 &0.55 &0.55&0.55 &0.55&0.61 &0.61&0.68 &0.68&0.69 &0.69&0.58 &0.58\\
         $x_4$ & 0.49&0.48 &0.49 &0.49&0.49 &0.49&0.48 &0.48&0.40 &0.41&0.70 &0.70&0.70 &0.70\\
         $x_5$ & 0.51&0.51 &0.50 &0.50&0.50 &0.50&0.60 &0.60&0.72 &0.72&0.66 &0.66&0.50 &0.50\\
         $x_6$ & 0.44&0.43 &0.52 &0.51&0.52 &0.51&0.66 &0.66&0.69 &0.69&0.52 &0.52&0.60 &0.59\\
         $x_7$ & 0.45&0.45 &0.48 &0.48&0.48 &0.47&0.62 &0.62&0.52 &0.52&0.64 &0.65&0.67 &0.66\\
         $x_8$ & 0.46&0.46 &0.50 &0.49&0.50 &0.49&0.75 &0.76&0.69 &0.69&0.63 &0.63&0.73 &0.73\\
         $x_9$ & 0.53&0.53 &0.53 &0.53&0.53 &0.53&0.60 &0.60&0.67 &0.67&0.54 &0.54&0.59 &0.59\\
         $x_{10}$ & 0.44&0.44 &0.37 &0.37&0.38 &0.37&0.66 &0.67&0.65 &0.65&0.67 &0.67&0.65 &0.65\\
    \bottomrule
    \end{tabular}
    \caption{Averaged $p$-values for running KS-test on 50 encryptions for each $x_i, i \in [10]$. For each row, `All' and `Other' tests give similar $p$-values, suggesting there is no significant differences among distribution of encryptions of different images.}
    \label{tab:averaged_p_values}
\end{table}
%%%% Section E. discuss more about experiments

\end{document}